\keywords{dependence logic, inclusion logic, independenc logic, modal logic, propositional logic, complexity, validity, entailment}
\theoremstyle{plain} %\crefname{satz}{Satz}{S\"atze}
  \lstdefinelanguage{pseudo}{
    morekeywords={if,elseif,then,return,end,choose,guess,when,for,foreach,case},
    morekeywords=[3]{false,true,and,or,not},
    morecomment=[l]{//}
  }
\newcommand{\ie}{i.\,e.\@\xspace}
\newcommand{\Wloss}{W.l.o.g.\@\xspace}
\newcommand{\complClFont}[1]{\mathbf{#1}}         % P, NP, NL, ...
\newcommand{\logicClFont}[1]{\mathsf{#1}}        % LTL, CTL, ...
\newcommand{\problemFont}[1]{\mathrm{#1}}         % SAT, TAUT, IMP, ...
\newcommand{\mathCommandFont}[1]{\mathrm{#1}}     % max, min, ...
\newcommand{\bigO}[1]{\protect\ensuremath{{\mathcal{O}\left(#1\right)}}} % does not grow stronger
\newcommand{\bin}[1]{{\protect\ensuremath{\mathCommandFont{bin}(#1)}}}
\newcommand{\tuple}[1]{\vec{#1}}
\newcommand{\sub}{\subseteq}
\newcommand{\Fr}[1]{{\protect\ensuremath{\mathsf{Fr}(#1)}}}
\newcommand{\Var}[1]{{\protect\ensuremath{\mathsf{Var}(#1)}}}
\newcommand{\Props}{{\protect\ensuremath{\mathsf{Prop}}}}
\newcommand {\indep}[3] {#2 ~\bot_{#1}~ #3}
\providecommand{\dfn}{\mathrel{\mathop:}=}
\providecommand{\ddfn}{\mathrel{\mathop{{\mathop:}{\mathop:}}}=}
\newcommand{\ADQBF}{\protect\ensuremath\problemFont{ADQBF}}
\newcommand{\DQBF}{\protect\ensuremath\problemFont{DQBF}}
\newcommand{\TRUE}{\protect\ensuremath\problemFont{TRUE}}
\newcommand{\PSPACE}{\protect\ensuremath{\complClFont{PSPACE}}\xspace}
\newcommand{\NEXPTIME}{\protect\ensuremath{\complClFont{NEXPTIME}}\xspace}
\newcommand{\coNEXPTIME}{\protect\ensuremath{\complClFont{co\textrm{-}NEXPTIME}}\xspace}
\newcommand{\EXPTIME}{\protect\ensuremath{\complClFont{EXPTIME}}\xspace}
\newcommand{\SigmaE}[1]{{\ensuremath\protect\Sigma^\mathCommandFont{EXP}_{#1}}}
\newcommand{\NP}{\protect\ensuremath{\complClFont{NP}}\xspace}
\newcommand{\PiE}[1]{{\ensuremath\protect\Pi^\mathCommandFont{EXP}_{#1}}}
\newcommand{\coNP}{\protect\ensuremath{\complClFont{co\textrm{-}NP}}\xspace}
\newcommand{\SigmaQBF}[1]{{\ensuremath\protect\Sigma_{#1}\textrm{-}\ADQBF}}
\newcommand{\PiQBF}[1]{{\ensuremath\protect\Pi_{#1}\textrm{-}\ADQBF}}
\newcommand{\calA}{\protect\ensuremath{\mathcal{A}}}
\newcommand{\calB}{\protect\ensuremath{\mathcal{B}}}
\newcommand{\calC}{\protect\ensuremath{\mathcal{C}}}
\newcommand{\calD}{\protect\ensuremath{\mathcal{D}}}
\newcommand{\calL}{\protect\ensuremath{\mathcal{L}}}
\newcommand{\calM}{\protect\ensuremath{\mathcal{M}}}
\newcommand{\calP}{\protect\ensuremath{\mathcal{P}}}
\newcommand{\PL}{\logicClFont{PL}}
\newcommand{\PLDep}{\logicClFont{PDL}}
\newcommand{\PLInc}{\logicClFont{PLInc}}
\newcommand{\PLInd}{\logicClFont{PLInd}}
\newcommand{\ML}{\logicClFont{ML}}
\newcommand{\MLDis}{\logicClFont{ML}(\cvee)}
\newcommand{\PLDis}{\logicClFont{PL}(\cvee)}
\newcommand{\QPLDis}{\logicClFont{QPL}(\cvee)}
\newcommand{\RML}{\logicClFont{RML}}
\newcommand{\MLDep}{\logicClFont{MDL}}
\newcommand{\EMLDep}{\logicClFont{EMDL}}
\newcommand{\MLInc}{\logicClFont{MLInc}}
\newcommand{\MLInd}{\logicClFont{MLInd}}
\newcommand{\QPL}{\logicClFont{QPL}}
\newcommand{\QPLDep}{\logicClFont{QPDL}}
\newcommand{\QPLInc}{\logicClFont{QPLInc}}
\newcommand{\QPLInd}{\logicClFont{QPLInd}}
\newcommand{\LL}{\logicClFont{L}}
\newcommand{\dep}[1]{\mathrm{dep}\!\left(#1\right)}
\newcommand{\xMapsto}[2][]{\ext@arrow 0599{\Mapstofill@}{#1}{#2}}
\def\Mapstofill@{\arrowfill@{\Mapstochar\Relbar}\Relbar\Rightarrow}
  \newcommand{\Branch}[2]{\protect\ensuremath{\mathsf{branch}_{#2}(#1)}}
  \newcommand{\Store}[2]{\protect\ensuremath{\mathsf{store}_{#2}(#1)}}
 \newcommand{\tree}[1]{\protect\ensuremath{\mathsf{tree}(#1)}}
 \newcommand{\cveee}{\hspace{.25mm}\varovee\hspace{.25mm}}
\newcommand{\cvee}{\scalebox{1.3}{\mbox{$\cveee$}}}
\newcommand{\bcvee}{\scalebox{2}{\mbox{$\cveee$}}}
\renewcommand{\vec}{\overline}
\newcommand{\bigcvee}[1]{\underset{#1}{\bcvee}}
\begin{document}

\title[Validity and Entailment in Modal and Propositional Dependence Logics]{Validity and Entailment in Modal and Propositional Dependence Logics}
%\titlecomment{{\lsuper*}OPTIONAL comment concerning the title, \eg, 
 % if a variant or an extended abstract of the paper has appeared elsewhere.}

\author[M. Hannula]{Miika Hannula}	%required
\address{Department of Computer Science, University of Auckland, New Zealand}
\address{Department of Mathematics and Statistics, University of Helsinki, Finland}	%required
\email{miika.hannula@helsinki.fi}  %optional
\thanks{This work was supported by the Marsden Fund grant UOA1628 and the Academy of Finland grant 308712.}	%optional

\begin{abstract}
%The computational properties of modal and propositional dependence logics have been extensively studied over the past few years, starting from a result by Sevenster showing $\NEXPTIME$-completeness of the satisfiability problem for modal dependence logic. Thus far, however, the validity and entailment properties of these logics have remained uncharacterised to a great extent. This paper establishes a complete classification of the complexity of validity and entailment in modal and propositional dependence logics. In particular, we address the question of the complexity of validity in modal dependence logic. By showing that it is $\NEXPTIME$-complete we refute an earlier conjecture proposing a higher complexity for the problem.
The computational properties of modal and propositional dependence logics have been extensively studied over the past few years, starting from a result by Sevenster showing $\NEXPTIME$-completeness of the  satisfiability problem for  modal dependence logic. Thus far, however, the validity and entailment properties of these logics have remained mostly unaddressed. This paper provides a comprehensive classification of the complexity of validity and entailment in various modal and propositional dependence logics. The logics examined are obtained by extending the standard modal and propositional logics with notions of dependence, independence, and inclusion in the team semantics context. 
In particular, we address the question of the complexity of validity in modal dependence logic. By showing that it is $\NEXPTIME$-complete we refute an earlier conjecture proposing a higher complexity for the problem.
\end{abstract}

\maketitle

\section{Introduction}
The notions of dependence and independence are pervasive in various fields of science. 
%In physics the time of descent of an object depends only on the height and is independent of the weight; in genetics the sex of humans is determined by the XY-chromosomes; and in social choice theory the societal outcome in any ranked voting system satisfying unanimity and independence of irrelevant alternatives is determined by some single voter. 
Usually these concepts manifest themselves in the presence of \emph{multitudes} (e.g. events or experiments). 
Dependence logic \cite{vaananen07} is a recent logical formalism which, in contrast to others, has exactly these multitudes as its underlying concept. In this article we study dependence logic in the propositional and modal logic context and present a complete classification of the computational complexity of their associated entailment and validity problems.

%Many formal languages (e.g. first-order, propositional, and modal logic) 
In first-order logic, the standard formal language behind all mathematics and computer science,  dependencies between variables arise strictly from the order of their quantification. Consequently, more subtle forms of dependencies cannot be captured, a phenomenon exemplified by the fact that first-order logic lacks expressions for statements of the form 
\[\text{``for all $x$ there is $y$, and for all $u$ there is $v$,  such that $R(x,y,u,v)$''}\]
where  $y$ and $v$ are to be chosen independently from one another. To overcome this barrier, branching quantifiers of Henkin \cite{henkin61} and independence-friendly logic of Hintikka and Sandu \cite{hintikkasandu89} suggested the use of quantifier manipulation. Dependence logic instead extends first-order logic at the atomic level with the introduction of new dependence atoms 
\begin{equation}\label{depatomi}
\dep{x_1, \ldots  ,x_n}
\end{equation}
 which indicate that the value of $x_n$ depends only on the values of $x_1, \ldots ,x_{n-1}$. Dependence atoms are evaluated over \emph{teams}, i.e., sets of assignments %(or sets of Boolean assignments, worlds of a Kripke model, rows of a database, etc.)
 which form the basis of \emph{team semantics}. The concept of team semantics was originally proposed by Hodges in refutation of the view of Hintikka that the logics of imperfect information, such as his independence-friendly logic, escape natural compositional semantics \cite{hodges97}. By the development of dependence logic it soon became evident that  team semantics serves also as a connecting link between the aforenementioned logics and the relational database theory. In particular, team semantics enables the extensions of even weaker logics, such as modal and propositional logics, with various sophisticated dependency notions known from the database literature \cite{galliani12,gradel10,KontinenMSV14,KrebsMV15}. In this article we consider modal and propositional dependence logics that extend modal and propositional logics with dependence atoms similar to \eqref{depatomi}, the only exception being that dependence atoms here declare dependencies between propositions. We establish a complete classification of the computational complexity of the associated entailment and validity problems, including a solution to an open problem regarding the complexity of validity in modal dependence logic.

Modal dependence logic was introduced by  V{\"a}{\"a}n{\"a}nen in 2008 \cite{vaananen08b}, and soon after it was shown to enjoy a $\NEXPTIME$-complete satisfiability problem \cite{sevenster09b}.  Since then the 
 expressivity, complexity, and axiomatizability properties of modal dependence logic and its variants have been exhaustively studied. Especially the complexity of satisfiability and model checking for modal dependence logic and its variants has been already comprehensively classified \cite{ebbing13,ebbing12,HKVV15,HannulaKVV18,HellaKMV15,HellaS15,KontinenMSV14,KrebsMV15,LohmannV13,muller13}.  It is worth noting here that satisfiability and validity are not dual to each other in the dependence logic context. Dependence logic cannot express classical negation nor logical implication which renders its associated validity, satisfiability, and entailment problems genuinely different. Entailment and validity of modal and propositional dependence logics have
   been axiomatically characterized by Yang and V{\"a}{\"a}n{\"a}nen in \cite{yangarxiv16,yang14,Yang2016557} and also by Sano and Virtema in \cite{sano14}.  Nevertheless, the related complexity issues have remained almost totally unaddressed.  
The aim of this article is to address this shortage in research by presenting a comprehensive classification with regards to these questions. %It is all the more surprising as 

 A starting point for our endeavour is a recent result by Virtema which showed that the validity problem for propositional dependence logic is $\NEXPTIME$-complete \cite{Virtema14}. In that paper
 the complexity of validity for modal dependence logic remained unsettled, although it was conjectured to be harder than that for propositional dependence logic. This conjecture is refuted in this paper as the same exact $\NEXPTIME$ bound is shown to apply to modal dependence logic as well. Furthermore, we show that this result applies to  the  extension of propositional dependence  logic with quantifiers as well as to the so-called extended modal logic which can express dependencies between arbitrary modal formulae (instead of simple propositions). 
%Moreover, this result is shown to apply to the so-called extended modal dependence logic that extends the scope of dependence atoms to arbitrary modal formulae \cite{ebbing13}. 
These complexity bounds follow as corollaries from a more general result showing that the entailment problem for (extended) modal dependence and propositional dependence logics is complete for $\coNEXPTIME^{\NP}$.  We also consider modal logic extended with so-called intuitionistic disjunction and show that the associated entailment, validity, and satisfiability problems are all $\PSPACE$-complete, which is, in all the three categories the complexity of the standard modal logic. 

The aforementioned results have interesting consequences. First, combining results from this paper and \cite{ sevenster09b,Virtema14} we observe that similarly to the standard modal logic case the complexity of validity and satisfiability coincide for (extended) modal dependence logic. Secondly, it was previously known that propositional and modal dependence logics deviate on the complexity of their satisfiability problem ($\NP$-complete vs. $\NEXPTIME$-complete \cite{LohmannV13,sevenster09b}, resp.) and that the standard propositional and modal logics differ from one another on both satisfiability and validity ($\NP$-complete/$\coNP$-complete vs. $\PSPACE$-complete, resp. \cite{Cook71,Ladner77,Levin73}). 
Based on this it is somewhat surprising to find out that  modal and propositional dependence logics correspond to one another in terms of the complexity of both their validity and entailment problems.

%It follows that 
%(extended) modal dependence and propositional dependence logic correspond to one another in terms of the complexity of both their validity and entailment problems. However, satisfiability of propositional dependence logic is known to have a lower complexity than that of its modal counterpart ($\NP$-complete vs. $\NEXPTIME$-complete, resp. \cite{LohmannV13,sevenster09b,Virtema14}). In contrast, recall that the standard propositional and modal logic differ from another in all of the aforenementioned categories ($\NP$-complete/$\coNP$-complete vs. $\PSPACE$-complete, resp. \cite{Cook71,Ladner77,Levin73}).

We also establish exact complexity bounds for entailment and validity of quantified propositional independence and inclusion logics. These logics are extensions of propositional logic with quantifiers and either independence or inclusion atoms \cite{HLKV16}. 
%Quantified propositional logics can be seen as intermediate logics between propositional and modal logic. %Our findings show that this problem is $\coNEXPTIME$-hard for the first and $\coNEXPTIME^{\NP}$-hard for the second. 
We obtain our results by investigating recent generalizations of the quantified Boolean formula problem. % \cite{HLKV16}.  
 The validity and entailment problems for quantified propositional independence logic are both  shown to be 
 $\coNEXPTIME^{\NP}$-complete. For quantified propositional inclusion logic  entailment is shown to be $\coNEXPTIME$-complete whereas validity is only $\EXPTIME$-complete. 
 Using standard reduction methods the examined quantified propositional logics can be interpreted as fragments of modal independence and inclusion logics. %Therefore, the same lower bounds can be applied to 
 Our findings then imply that validity is harder for modal independence logic than it is for modal dependence logic (unless the exponential-time hierarchy collapses at a low level), although in terms of satisfiability both logics are $\NEXPTIME$-complete \cite{KontinenMSV14}. We refer the reader to Table \ref{newresults} for a summary of our results.

\noindent
\textbf{Organization.} This article is organized as follows. %In Section \ref{preli} we present  some notation and background assumptions. 
 In Section \ref{sect:mdl} we give a short introduction to modal dependence logics, followed by Section \ref{sect:upper} which proves $\coNEXPTIME^{\NP}$-membership for modal dependence logic entailment. In Section \ref{sect:prop} we define (quantified) propositional dependence logics, and in Section \ref{sect:lower} we show  $\coNEXPTIME^{\NP}$-hardness for entailment in this logic.   %present  basic properties of  modal and (quantified) propositional dependence logics, and give background for the proof techniques used in the paper. In Sections \ref{sect:mldep}, \ref{sect:mlind}, and \ref{sect:mlinc} we consider entailment in modal and propositional dependence, independence, and inclusion logic, respectively. %In Section \ref{sect:axioms} we introduce novel sound and complete axiomizations for (quantified) propositional dependence logic and extended modal dependence logic, based on our observations on their associated entailment problems. 
  In Section \ref{sect:mldep} these findings are drawn together to establish exact complexity bounds. In Sections \ref{sect:mlind} and \ref{sect:mlinc} we shift focus to validity and entailment of modal and quantified propostitional logics defined in terms of independence and inclusion atoms, respectively.
Finally, Section \ref{sect:conclusion} is  reserved for conclusions. %For some of the proofs we refer the reader to Appendix.
%\noindent
%\textbf{Remarks} %Following the common convention in the team semantics context, all our logics in this section and later  will  contain only formulae in negation normal form (NNF). 
 Throughout the paper we assume that the reader is familiar with the basic concepts of propositional and modal logic, as well as those of computational complexity.  All the hardness results in the paper are stated under polynomial-time reductions. %On general notational conventions,

\section{Modal Dependence Logics}\label{sect:mdl}
In this section we introduce extensions of modal logic with dependencies and present some of their basic properties. Following the common convention  in team semantics we restrict attention to formulae in negation normal form (NNF). Note that the concept of negation referred to here is not classical. In the team semantics context \enquote{$\neg$} is used to express that something is not true for all individual members of a team. Later in this section we also present an auxiliary relational variant of modal logic for pointed Kripke models to facilitate the upper bound proof of the next section. For this variant we employ negation classically and do not restrict its scope in formulae. %whereas classical negation (usually denoted in team semantics by \enquote{$\sim$})  would express that something is not true for a team.

The syntax of \emph{modal logic} ($\ML$) is generated by the following grammar:
\begin{equation}\label{def:ml}
\phi\ddfn p \mid \neg p \mid (\phi \wedge \phi) \mid (\phi \vee \phi) \mid \Box \phi \mid \Diamond \phi.
\end{equation}
%\end{defi}
Extensions of modal logic with different dependency notions are made possible via a generalization of the standard Kripke semantics by teams. %, here defined  as sets of worlds.
  A \emph{Kripke model} over a set of variables $V$ is a tuple $\calM=(W,R,\pi)$ where $W$ is a non-empty set of worlds, $R$ is a binary relation over $W$, and $\pi\colon V\to \calP(W)$ is a function that associates each variable with a set of worlds. A \emph{team} $T$ of a Kripke model $\calM=(W,R,\pi)$  is a  subset of $ W$.  For the team semantics of modal operators we define  the set of successors of a team $T$ as $R[T]:=\{w\in W\mid \exists w'\in T: (w',w)\in R\}$ and the set of  successor teams of a team $T$ as $R\langle T\rangle :=\{T'\sub R[T]\mid \forall w\in T\hspace{1mm} \exists w'\in T': (w,w')\in R\}$. The team semantics of modal logic is  now defined as follows.
\begin{defi}[Team Semantics of $\ML$]\label{ts:ml}
Let $\phi$ be an $\ML$ formula, let $\calM=(W,R,\pi)$ be a Kripke model over $V\supseteq \Var{\phi}$, and let $T\sub W$. The satisfaction relation $\calM,T\models \phi$ is defined as follows:
\begin{align*}
\calM,T\models p \quad :\Leftrightarrow \quad &T\sub \pi(p),\\
\calM,T\models \neg p \quad :\Leftrightarrow \quad & T\cap \pi(p) = \emptyset,\\
\calM,T\models \phi_1\wedge \phi_2\quad :\Leftrightarrow \quad & \calM,T\models \phi_1 \textrm{ and }\calM,T\models \phi_2,\\
\calM,T\models \phi_1\vee \phi_2\quad :\Leftrightarrow \quad &\exists T_1,T_2:T_1\cup T_2=T, \calM,T_1\models \phi_1,\textrm{ and }\calM,T_2\models \phi_2,\\
\calM,T\models \Diamond \phi \quad :\Leftrightarrow \quad &\exists T'\in R\langle T\rangle: \calM,T'\models \phi,\\
\calM,T\models \Box \phi \quad :\Leftrightarrow \quad & \calM,R[T]\models \phi.
\end{align*}
\end{defi} 
We write $\phi\equiv \psi$ to denote that $\phi$ and $\psi$ are \emph{equivalent}, i.e., for all Kripke models $\calM$ and teams $T$, $\calM,T\models \phi $ iff $ \calM,T\models \phi'$.
Let  $\Sigma\cup\{\phi\}$ be a set of formulae. We write $\calM,T\models \Sigma$ iff $\calM,T\models \phi$ for all $\phi \in \Sigma$, and say that  $\Sigma$ \emph{entails} $\phi$ if for all $\calM$ and $T$, $\calM ,T\models \Sigma$ implies $\calM,T\models \phi$. Let $\calL$ be a logic in the team semantics setting. 
The \emph{entailment problem} for $\calL$ is to decide whether $\Sigma$ entails $\phi$ (written $\Sigma \models \phi$) for a given finite set of formulae $\Sigma\cup\{\phi\}$ from $\calL$. The \emph{validity problem} for $\calL$ is to decide whether a given formula $\phi\in \calL$ is satisfied by all Kripke models and teams. The \emph{satisfiability problem} for $\calL$ is to decide whether a given formula $\phi\in \calL$ is satisfied by some Kripke model and a non-empty team\footnote{The empty team satisfies all formulae trivially.}.
 %By replacing $\models$ with or $\models_{\ML}$ we refer to logical consequence in terms of the standard modal logic semantics, respectively. 
 
The following flatness property holds for all modal logic formulae. Notice that by $\models_{\ML}$ we refer to the usual satisfaction relation of modal logic.
\begin{prop}[Flatness \cite{sevenster09b}]\label{prop:ml_flatness}
Let $\phi$ be a formula in $\ML$, let $\calM=(W,R,\pi)$ be a Kripke model over $V\supseteq \Var{\phi}$, and let $T\sub W$ be a team. Then:
\begin{align*}
\calM,T\models \phi \quad \Leftrightarrow \quad & \forall w\in T: \calM,w\models_{\ML} \phi.
\end{align*}
\end{prop}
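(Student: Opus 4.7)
The plan is to proceed by structural induction on the $\ML$ formula $\phi$, using the team semantics of Definition~\ref{ts:ml} on the left side and the standard Kripke semantics on the right. The base cases $\phi=p$ and $\phi=\neg p$ are immediate from the clauses $\calM,T\models p\Leftrightarrow T\sub\pi(p)$ and $\calM,T\models\neg p\Leftrightarrow T\cap\pi(p)=\emptyset$, which directly unpack into universally quantified statements over $w\in T$. The conjunction case follows by a direct application of the induction hypothesis to both conjuncts and reordering of quantifiers.

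For the disjunction $\phi_1\vee\phi_2$, the left-to-right direction is easy: if $T=T_1\cup T_2$ with $\calM,T_i\models\phi_i$, then any $w\in T$ lies in some $T_i$ and by the induction hypothesis satisfies $\phi_i$, hence $\calM,w\models_{\ML}\phi_1\vee\phi_2$. For the converse, I would define the splitting explicitly: set $T_1:=\{w\in T\mid \calM,w\models_{\ML}\phi_1\}$ and $T_2:=T\setminus T_1$. Then $T_1\cup T_2=T$, every $w\in T_2$ must satisfy $\phi_2$ in the standard sense, and applying the induction hypothesis to $T_1$ and $T_2$ yields $\calM,T_i\models\phi_i$.

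For $\Box\phi$, the argument is essentially a quantifier swap: $\calM,T\models\Box\phi$ unfolds to $\calM,R[T]\models\phi$, which by induction means $\calM,w'\models_{\ML}\phi$ for every $w'\in R[T]$; rewriting the membership condition $w'\in R[T]$ as $\exists w\in T\colon (w,w')\in R$ and re-associating the quantifiers gives exactly $\forall w\in T\colon \calM,w\models_{\ML}\Box\phi$. The main obstacle is the $\Diamond\phi$ case, because the team-semantic clause selects a single successor team $T'\in R\langle T\rangle$ satisfying $\phi$ globally, whereas the pointwise statement allows each $w\in T$ to pick its own witness independently. The left-to-right direction is straightforward: from $T'\in R\langle T\rangle$ every $w\in T$ has a successor in $T'$, which by induction satisfies $\phi$, giving $\calM,w\models_{\ML}\Diamond\phi$. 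For the converse, I would use pointwise choice: for every $w\in T$, pick some $f(w)\in W$ with $(w,f(w))\in R$ and $\calM,f(w)\models_{\ML}\phi$, then set $T':=\{f(w)\mid w\in T\}$. By construction $T'\sub R[T]$ and every $w\in T$ has the witness $f(w)\in T'$, so $T'\in R\langle T\rangle$; and every element of $T'$ satisfies $\phi$ pointwise, so the induction hypothesis yields $\calM,T'\models\phi$, hence $\calM,T\models\Diamond\phi$.

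Overall the proof is a routine structural induction, with the only genuine subtlety being the synthesis of the witnessing successor team in the $\Diamond$ case; no choice principle beyond picking one successor per world is needed, and the splitting in the $\vee$ case is canonical.
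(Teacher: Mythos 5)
Your proof is correct and is exactly the standard structural induction behind this result; the paper itself does not prove the proposition but cites it from Sevenster, and the argument there proceeds the same way, with the canonical splitting for $\vee$ and the pointwise choice of successors assembled into a witnessing team for $\Diamond$. Nothing is missing.
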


Team semantics gives rise to different extensions of modal logic capable of expressing various dependency notions. In this article we consider dependence atoms that express functional dependence between propositions. To facilitate their associated semantic definitions, we first define for each world $w$ of a Kripke model $\calM$ a truth function $w_{\calM}$ from $\ML$ formulae into $\{0,1\}$ as follows:
\[w_{\calM}(\phi) = \begin{cases} 1 &\text{ if }\calM ,\{w\} \models \phi,\\
0 & \text{ otherwise.}
\end{cases}
\]
\textbf{1) Dependence.} \emph{Modal dependence logic} ($\MLDep$) is obtained by extending $\ML$ with \emph{dependence atoms}
\begin{equation}\label{depatom}
\dep{\tuple p, q}
\end{equation}
where $\tuple p$ is a sequence of propositional variables and $q$ is a single propositional variable. 
Furthermore, we consider \emph{extended dependence atoms} of the form 
\begin{equation}\label{edepatom}
\dep{\tuple \phi, \psi}
\end{equation}
where $\tuple \phi$ is a sequence of  $\ML$ formulae and $\psi$ is a single $\ML$ formula. The extension of $\ML$ with formulae of the form \eqref{edepatom} is called \emph{extended modal dependence logic} ($\EMLDep$). Formulae of the form \eqref{depatom} and \eqref{edepatom} indicate that the (truth) value of the formula on the right-hand side is functionally determined by the (truth) values of the formulae listed on the left-hand side. The satisfaction relation for both \eqref{depatom} and \eqref{edepatom} is defined accordingly as follows:
\begin{align*}
\calM ,T \models \dep{\tuple \phi, \psi} \quad :\Leftrightarrow \quad&\forall w,w'\in T: w_{\calM}(\tuple \phi)=w'_{\calM} (\tuple \phi) \text{ implies }w_{\calM}(\psi)=w'_{\calM}(\psi).
\end{align*}
Here and below we use $w_{\calM}(\tuple \theta)$ as a shorthand for $(w_{\calM}( \theta_1), \ldots ,w_{\calM}( \theta_n))$  if  $\tuple \theta$ is a sequence of formulae $(\theta_1, \ldots ,\theta_n)$.

We also examine so-called \emph{intuituionistic disjunction} $\cvee$ defined as follows:
\begin{align}\label{def:cvee}
\calM,T\models \phi_1\cvee\phi_2 \quad :\Leftrightarrow  \quad& \calM, T\models \phi_1 \textrm{ or }\calM,T\models \phi_2.
\end{align}
We denote the extension of $\ML$ with intuitionistic disjunction $\cvee$ by $\MLDis$. Notice that the logics $\MLDep$ and $\EMLDep$ are expressively equivalent to $\MLDis$ but exponentially more succinct as the translation of \eqref{depatom} to $\MLDis$ involves a necessarily exponential blow-up \cite{HellaLSV14}. These logics %$\MLDep$, $\EMLDep$, and  $\MLDis$ 
 satisfy the following downward closure property which will be used in the upper bound result.

\begin{prop}[Downward Closure \cite{ebbing13,vaananen08b,yang14}]\label{prop:ml_dc}
Let $\phi$ be a formula in $\MLDep$, $\EMLDep$, or $\MLDis$, let $\calM=(W,R,\pi)$ be a Kripke model over $V\supseteq \Var{\phi}$, and let $T\sub W$ be a team. Then:
\begin{align*}
T' \sub T\textrm{ and }\calM,T\models \phi \quad\Rightarrow \quad& \calM,T'\models \phi.
\end{align*}
\end{prop}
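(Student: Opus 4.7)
The plan is a straightforward structural induction on $\phi$, simultaneously for the three logics, with the team $T'$ (and the model $\calM$) universally quantified so the inductive hypothesis is available for arbitrary subteams. I would structure the argument case by case and spend most of the effort on the two cases that are not purely monotone: disjunction $\vee$ and the diamond $\Diamond$.

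For the base cases, the literals $p$ and $\neg p$ are immediate from $T' \sub T$: inclusion into $\pi(p)$ is preserved downwards, and so is disjointness from $\pi(p)$. The dependence atoms $\dep{\tuple p,q}$ and $\dep{\tuple\phi,\psi}$ assert a universal implication over pairs of worlds in the team, hence pass to any subteam. For the inductive cases, conjunction $\phi_1\wedge\phi_2$, the box $\Box\phi$, and the intuitionistic disjunction $\cveee$ are all monotone in the team in a purely trivial way: for $\Box$, $T'\sub T$ gives $R[T']\sub R[T]$ and we apply the inductive hypothesis to the smaller successor team.

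The two cases that need a small construction are the tensor disjunction and the diamond. For $\phi_1\vee\phi_2$, given a split $T=T_1\cup T_2$ witnessing satisfaction, I would use the split $T'=(T'\cap T_1)\cup(T'\cap T_2)$ and invoke the inductive hypothesis on each side. For $\Diamond\phi$, given $T''\in R\langle T\rangle$ with $\calM,T''\models\phi$, I would define $T''':=\{w''\in T''\mid \exists w\in T':(w,w'')\in R\}$; then $T'''\sub T''$ and $T'''\in R\langle T'\rangle$ by construction, and the inductive hypothesis applied to $T'''\sub T''$ gives $\calM,T'''\models\phi$, hence $\calM,T'\models\Diamond\phi$.

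The only potential subtlety I anticipate is keeping the case for $\Diamond$ honest: one must be careful that the chosen $T'''$ is both a subteam of $T''$ (so that the inductive hypothesis applies) and actually a legal successor team of $T'$ (so that $\Diamond\phi$ is witnessed). The definition above achieves both. I do not foresee any obstacle in the extended dependence atom case, since the atom's semantics is still a purely universal statement over pairs in the team and thus automatically downward closed, independently of whether the arguments are atomic or arbitrary modal formulae. No case in the inductive step requires the full team $T$; each uses only that the inductive hypothesis holds for all models and all subteam pairs, which is exactly what the induction provides.
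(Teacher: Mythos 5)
Your structural induction is correct, and each case checks out; in particular the $\Diamond$ case is handled properly, since your $T'''$ is both a subteam of $T''$ (so the inductive hypothesis applies) and a legal successor team of $T'$. Note that the paper does not prove this proposition at all --- it is imported from the cited literature --- and your argument is essentially the standard proof given there, so there is nothing to compare beyond observing that it is the expected one.
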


\noindent
\textbf{2) Independence.} \emph{Modal independence logic} $(\MLInd)$ extends $\ML$ with \emph{independence atoms} 
\begin{equation}\label{indatom}
\indep{\tuple p}{\tuple q}{\tuple r}
\end{equation}
where $\tuple p,\tuple q,\tuple r$ are sequences of propositional variables. Intuitively, \eqref{indatom} expresses that the  values of $\tuple q$ and $\tuple r$ are independent of one another, given any value of $\tuple r$.  The associated satisfaction relation is defined  as follows:
\begin{align*}\calM ,T \models \indep{\tuple p}{\tuple q}{\tuple r}\quad :\Leftrightarrow \quad&
\forall w,w'\in T: w_{\calM}(\tuple p)=w'_{\calM} (\tuple p) 
\text{ implies }\\
&\exists w''\in T: w_{\calM}(\tuple p\tuple q)=w''_{\calM} (\tuple p\tuple q) 
\text{ and }\\
&w'_{\calM}(\tuple r)=w''_{\calM} (\tuple r) .
\end{align*}
 The definition expresses that, fixing any values for $\tuple p$, the  values for  $\tuple q \tuple r$ form a cartesian product defined in terms of the values for $\tuple q$ and $\tuple r$. Furthermore, notice that $\MLInd$ subsumes $\MLDep$ since \eqref{depatom}  can be expressed by $\indep{\tuple p}{q}{q}$.

\noindent
\textbf{3) Inclusion.} \emph{Modal inclusion logic} ($\MLInc$) extends $\ML$ with \emph{inclusion atoms} 
\begin{equation}\label{incatom}
\tuple p\sub \tuple q
\end{equation}
where $\tuple p$ and $\tuple q$ are sequences of propositional variables of the same length. This atom indicates that the values of $\tuple q$ subsume all the values of $\tuple p$. The satisfaction relation for \eqref{incatom} is defined as follows:
\[\calM ,T \models p\sub \tuple q\quad :\Leftrightarrow \quad \forall w\in T \exists w': w_{\calM}(\tuple p)=w'_{\calM} (\tuple q) .\]

For the sake of our proof arguments, we also extend modal logic with predicates. The syntax of \emph{relational modal logic} ($\RML$) is given by the grammar:
\begin{equation}\label{def:rml}
\phi\ddfn p  \mid \neg \phi \mid  (\phi \wedge \phi)  \mid \Box \phi \mid S(\phi_1, \ldots ,\phi_n).
\end{equation}
 %  is obtained by extending  $\ML$  with relational and negated relational atoms $S(\phi_1, \ldots ,\phi_n)$ and $\neg S(\phi_1, \ldots ,\phi_n)$ where $\phi_1, \ldots ,\phi_n$ are $\ML$ formulae. 
 The formulae of $\RML$ are evaluated over a world $w\in W$ in a \emph{relational Kripke model} $\calM=(W,R,\pi, S^{\calM}_1, \ldots ,S^{\calM}_n)$ where each $S^{\calM}_i$ is a set of binary sequences of length $\#S_i$, that is, the arity of the relation symbol $S_i$. We denote by $\calM,w\models_{\RML} \phi$ the satisfaction relation obtained by extending the standard Kripke semantics of modal logic as follows:
\begin{align*}
\calM,w\models_{\RML} S(\phi_1, \ldots ,\phi_n)  :\Leftrightarrow  
 (w_{\calM}(\phi_1), \ldots ,w_{\calM}(\phi_n))\in S^{\calM}.
\end{align*}
In particular, this means that negation is treated classically:
\begin{equation*}\calM,w\models_{\RML} \neg\phi:\Leftrightarrow \calM,w\not\models_{\RML} \phi.
\end{equation*}
 We also employ the usual shorthands $\phi \vee \psi:= \neg (\neg \phi \wedge \neg \psi)$ and $\Diamond \phi:= \neg \Box \neg \phi$ in $\RML$. %Note that \enquote{$\negg$} is distinguished from \enquote{$\neg$} which is used only in the team context in front of atomic formulae to indicate a specific type of negation over teams. Apart from $\RML$ which operates only in pointed Kripke models, we sometimes employ \enquote{$\sim$} in the team context analogously to \eqref{negg}, e.g., $\calM\models_T \negg \phi :\Leftrightarrow \calM \not\models_T \phi$ for modal formulae $\phi$.
 
We can now proceed to the upper bound result which states that the entailment problem for $\EMLDep$ is  in $\coNEXPTIME^{\NP}$.

\section{Upper Bound for $\EMLDep$ Entailment}\label{sect:upper}
In this section we  show that $\EMLDep$ entailment is in $\coNEXPTIME^{\NP}$.
  The idea is to represent dependence atoms using witnessing functions  guessed universally   on the left-hand side and existentially on the right-hand side of an entailment problem $\{\phi_1, \ldots ,\phi_{n-1}\} \models \phi_n$. This reduces the problem to  validity  of an $\RML$ formula of the form $\neg \phi^*_1\vee \ldots \vee \neg \phi^*_{n-1} \vee \phi_n^*$ where $\phi^*_i$ is obtained by replacing in $\phi_i$ all dependence atoms with relational atoms whose interpretations are bound the guessed Boolean functions. We then extend an Algorithm by Ladner that shows a $\PSPACE$ upper bound for the validity problem of modal logic \cite{Ladner77}. As a novel algorithmic feature  we introduce recursive steps  for relational atoms that query to the guessed functions. The $\coNEXPTIME^{\NP}$ upper bound then follows by a straightforward running time analysis.
  
We start by showing how to represent dependence atoms using intuitionistic disjunctions  defined over witnessing functions. We use $\phi^{\bot}$  to denote the NNF formula obtained from $\neg \phi$ by pushing the negation to the atomic level, and $\phi^{\top}$ to denote $\phi$. Let $\tuple \alpha=(\alpha_1,  \ldots ,\alpha_n)$ be a sequence of $\ML$ formulae and let $\beta$ be a single $\ML$ formula. Then we say that a function $f:\{\top,\bot\}^{n}\to\{\top,\bot\}$ is a \emph{witness} of $d:= \dep{\tuple \alpha,\beta}$, giving rise to a witnessing $\ML$ formula 
\begin{equation}\label{eqdep}
D(f,d):=\bigvee_{a_1, \ldots ,a_n\in \{\top,\bot\}} \alpha_1^{a_1} \wedge \ldots \alpha_n^{a_n} \wedge \beta^{f(a_1, \ldots ,a_n)}.
\end{equation}
The equivalence
\begin{equation}\label{eqdep2}
d \equiv \bigcvee{f\colon \{\top,\bot\}^{n}\to\{\top,\bot\}}  D(f,d)
\end{equation}
has been noticed in the contexts of $\MLDep$ and $\EMLDep$ respectively in \cite{vaananen08b,ebbing13}. %Note that a  
 %representation of the sort \eqref{eqdep2} necessitates that the represented formula, in this case the dependence atom $d$, has the downward closure property. 
 
%Our proof uses this observation and relates to the standard $\PSPACE$ algorithm for $\ML$ satisfiability \cite{Ladner77}. However, we wish 
To avoid the exponential blow-up involved in both  \eqref{eqdep} and \eqref{eqdep2}, we instead relate to $\RML$ by  utilizing the following equivalence:
 \begin{equation}\label{eqsuc}
 (W,R,\pi),w\models_{\ML} D(f,d) \Leftrightarrow (W,R,\pi, S^{\calM}),w\models_{\RML} S( \tuple \alpha\beta),
 \end{equation}
 where $S^{\calM}:=\{(a_1, \ldots ,a_n,b)\in \{0,1\}^{n+1}\mid f(a_1, \ldots ,a_n)=b\}$.
Before proceeding to the proof, we need the following simple proposition, based on \cite{Virtema14,yang14} where the statement has been proven for empty $\Sigma$. 
\begin{prop}\label{prop:yang}
Let $\Sigma$ be a set of $\ML$ %$\calL$ formulae where $\calL \in \{\PL,\QPL,\ML\}$,
formulae, and let $\phi_0,\phi_1 \in 
\ML(\cvee)$. Then $\Sigma \models \phi_0\cvee\phi_1$ iff $\Sigma \models \phi_0$ or $\Sigma \models \phi_1$.
  \end{prop}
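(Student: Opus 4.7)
The right-to-left direction is immediate from the semantics of $\cvee$: if $\Sigma \models \phi_i$ for some $i \in \{0,1\}$, then whenever $\calM, T \models \Sigma$ we get $\calM, T \models \phi_i$ and hence $\calM, T \models \phi_0 \cvee \phi_1$. So the substance of the proposition lies in the forward direction, which I would prove by contraposition.

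Assume $\Sigma \not\models \phi_0$ and $\Sigma \not\models \phi_1$. Pick witnessing Kripke models and teams $(\calM_0, T_0)$ and $(\calM_1, T_1)$ with $\calM_i, T_i \models \Sigma$ but $\calM_i, T_i \not\models \phi_i$. My plan is to glue these counterexamples into a single one by forming the \emph{disjoint union} $\calM = \calM_0 \sqcup \calM_1$ (renaming worlds if necessary so that the underlying sets $W_0, W_1$ are disjoint) and taking $T := T_0 \cup T_1 \subseteq W_0 \cup W_1$. Note that $\calM_0$ and $\calM_1$ are exactly the submodels of $\calM$ generated by $W_0$ and $W_1$ respectively, so any modal formula evaluated at a world $w \in W_i$ yields the same truth value in $\calM$ as in $\calM_i$.

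The plan is then to verify two things. First, $\calM, T \models \Sigma$: this uses the flatness of $\ML$ (Proposition~\ref{prop:ml_flatness}), which reduces $\calM, T \models \psi$ for $\psi \in \Sigma$ to the conjunction over $w \in T$ of $\calM, w \models_{\ML} \psi$; each of these pointed statements is inherited from $\calM_i, w \models_{\ML} \psi$ by invariance under generated submodels. Second, $\calM, T \not\models \phi_i$ for both $i \in \{0,1\}$: suppose for contradiction that $\calM, T \models \phi_0 \cvee \phi_1$, so by the semantics of $\cvee$ we may assume without loss of generality that $\calM, T \models \phi_0$. Since $T_0 \subseteq T$, the downward closure of $\ML(\cvee)$ yields $\calM, T_0 \models \phi_0$. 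A straightforward induction on $\phi_0$ (the only case slightly beyond the standard modal one is $\cvee$, which is immediate) shows that the truth of an $\ML(\cvee)$ formula on a team contained in $W_0$ is determined by the generated submodel $\calM_0$, hence $\calM_0, T_0 \models \phi_0$, contradicting the choice of $(\calM_0, T_0)$. Combined with the first point this gives $\Sigma \not\models \phi_0 \cvee \phi_1$, completing the contrapositive.

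The only step requiring more than bookkeeping is the invariance lemma that truth of $\ML(\cvee)$ formulae on subteams of $W_i$ is preserved between $\calM$ and $\calM_i$; however, this is essentially the bisimulation-invariance of $\ML(\cvee)$ and follows by routine induction since neither $\cvee$ nor the team-semantic clauses for $\vee$, $\wedge$, $\Box$, $\Diamond$ look outside the connected component of a starting world. Everything else (flatness, downward closure) has already been recorded as Propositions~\ref{prop:ml_flatness} and~\ref{prop:ml_dc}.
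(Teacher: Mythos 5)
Your proof is correct and follows essentially the same route as the paper: both take counterexamples $(\calM_0,T_0)$ and $(\calM_1,T_1)$, pass to the disjoint union with team $T_0\cup T_1$, and conclude via invariance of $\MLDis$ under disjoint unions together with flatness of $\ML$ (Proposition~\ref{prop:ml_flatness}) and downward closure of $\MLDis$ (Proposition~\ref{prop:ml_dc}). The only cosmetic difference is that the paper cites the disjoint-union invariance from the literature rather than sketching the induction.
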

\begin{proof}
  It suffices to show the only-if direction. %Assume first that $\calL= \ML$, and l
  Let $\calM_0,T_0$ and $\calM_1,T_1$ be counterexamples to $\Sigma \models \phi_0$ and $\Sigma \models \phi_1$, respectively. \Wloss we may assume that $\calM_0$ and $\calM_1$ are disjoint. Since the truth value of an  $\MLDis$ formula is preserved under taking disjoint unions of Kripke models (see Theorem 6.1.9 in \cite{yang14}, also Corollary 5.7 in \cite{Virtema14}) we note that $\calM,T_0$ %\models \Sigma\cup\{\negg \phi_0\}$ 
   and  $\calM,T_1$ are also counterexamples to $\Sigma \models \phi_0$ and $\Sigma \models \phi_1$, respectively. Let $T:=T_0\cup T_1$. %\models \Sigma\cup\{\negg \phi_1\}$ where  $\calM=\calM_0\cup\calM_1$. 
   By the flatness property of $\ML$ (Proposition \ref{prop:ml_flatness}) $\calM,T\models \Sigma$, and by the downward closure property of $\MLDis$ (Proposition \ref{prop:ml_dc}) $\calM,T\not\models \phi_i$ for $i=0,1$. Consequently, $\Sigma \not\models \phi_0 \cvee \phi_1$ which concludes the proof.  %, and by the flatness property of $\ML$ (Proposition \ref{prop:ml_flatness}), we then obtain that $\calM,T\models \Sigma\cup\{\negg \phi_0, \negg\phi_1\}$ where 
    \end{proof}
  The proof now proceeds via Lemmata \ref{lem:apu} and \ref{lem:alg} of which the former constitutes the basis for our alternating exponential-time algorithm.  %, where $x$ is a team or a world.
Note that if $\phi$ is an $\EMLDep$ formula with $k$ dependence atom subformulae, listed (possibly with repetitions) in $d_1, \ldots d_k$, then we call $\tuple f=(f_1, \ldots ,f_k)$ a \emph{witness sequence} of $\phi$ if each $f_i$ is a witness  of $d_i$. Furthermore, we denote by $\phi(\tuple f /\tuple d)$ the $\ML$ formula obtained from $\phi$ by replacing each $d_i$ with $D(f_{i},d_i)$.

\begin{lem}\label{lem:apu}
Let $\phi_1,\ldots ,\phi_{n}$ be formulae in $\EMLDep$. Then $\{\phi_1, \ldots ,\phi_{n-1}\}\models \phi_n$ iff for all witness sequences $\tuple f_1, \ldots ,\tuple f_{n-1}$ of $\phi_1, \ldots ,\phi_{n-1}$ there is a witness sequence $\tuple f$ of $\phi_n$ such that
\[
 \{\phi_1(\tuple f_1/\tuple d_1), \ldots , \phi_{n-1}(\tuple f_{n-1}/\tuple d_{n-1})\}\models \phi_n(\tuple f_n/\tuple d_n).
\]
\end{lem}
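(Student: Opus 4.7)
The plan is to leverage the equivalence $d \equiv \bigcvee_f D(f,d)$ from \eqref{eqdep2} together with Proposition~\ref{prop:yang}, which distributes entailment over intuitionistic disjunction. The first preliminary step I would carry out is to lift \eqref{eqdep2} to a ``global'' version: by applying it at each dependence-atom occurrence in a formula and using the standard replacement principle for equivalent subformulae in team semantics, every $\EMLDep$ formula $\phi$ with dependence atom sequence $\tuple d$ satisfies
\[
\phi \equiv \bigcvee_{\tuple f} \phi(\tuple f/\tuple d),
\]
where $\tuple f$ ranges over all witness sequences of $\phi$. In particular each $\phi(\tuple f/\tuple d)$ is a pure $\ML$ formula and, being a disjunct above, entails $\phi$.

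For the only-if direction, assume $\{\phi_1,\ldots,\phi_{n-1}\} \models \phi_n$ and fix arbitrary witness sequences $\tuple f_1,\ldots,\tuple f_{n-1}$. Set $\Sigma := \{\phi_i(\tuple f_i/\tuple d_i) \mid i<n\}$. Each member of $\Sigma$ entails the corresponding $\phi_i$, so $\Sigma \models \phi_n$, and by the global equivalence this amounts to $\Sigma \models \bigcvee_{\tuple f_n} \phi_n(\tuple f_n/\tuple d_n)$. Since $\Sigma$ consists of $\ML$ formulae, an iterated application of Proposition~\ref{prop:yang} extracts a single $\tuple f_n$ with $\Sigma \models \phi_n(\tuple f_n/\tuple d_n)$, as required. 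For the converse, assume the right-hand condition of the lemma and take any $\calM, T$ with $\calM,T\models \phi_i$ for all $i<n$. The global equivalence applied to each $\phi_i$ yields witness sequences $\tuple f_i$ with $\calM,T\models \phi_i(\tuple f_i/\tuple d_i)$. The hypothesis then supplies $\tuple f_n$ such that $\{\phi_i(\tuple f_i/\tuple d_i)\}_{i<n} \models \phi_n(\tuple f_n/\tuple d_n)$, whence $\calM,T\models\phi_n(\tuple f_n/\tuple d_n)$ and finally $\calM,T\models\phi_n$ by the global equivalence once more.

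The main obstacle is the preliminary global equivalence: propagating \eqref{eqdep2} through conjunctions, disjunctions, and modal operators requires that $\cvee$ respect the replacement principle in every $\EMLDep$ context, where the downward closure of $\EMLDep$ (Proposition~\ref{prop:ml_dc}) is the essential property ensuring that witness functions chosen for different dependence-atom occurrences combine coherently into a single sequence $\tuple f$. Once this equivalence is in place the remaining steps are routine: Proposition~\ref{prop:yang} applies precisely because, after witnessing, every $\phi_i(\tuple f_i/\tuple d_i)$ lies in $\ML$, so its generalisation to finite intuitionistic disjunctions follows by an easy induction on the number of disjuncts.
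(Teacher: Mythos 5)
Your proposal is correct and follows essentially the same route as the paper's proof: both rest on the equivalence $\phi\equiv\bigcvee{\tuple f}\phi(\tuple f/\tuple d)$, obtained by iterating the single-atom substitution of \eqref{eqdep2} through the formula (with downward closure, Proposition~\ref{prop:ml_dc}, being what allows the intuitionistic disjunctions to be pulled to the front), followed by Proposition~\ref{prop:yang} to extract a single witness sequence for $\phi_n$. The only difference is presentational: the paper chains biconditionals through the entailment between the big disjunctions, whereas you argue the two implications separately using the fact that each disjunct $\phi_i(\tuple f_i/\tuple d_i)$ entails $\phi_i$.
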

\begin{proof}
Assume first that $\phi$ is an arbitrary formula in $\EMLDep$, and let $d=\dep{\tuple \alpha,\beta}$ be a subformula of $\phi$.  It is straightforward to show that $\phi$ is equivalent to
\[
\bigcvee{f\colon \{\top,\bot\}^{|\tuple \alpha|}\to\{\top,\bot\}} \phi(D(f,d)/d).
\]
This follows from the fact that all $\vee,\wedge,\Diamond,\Box$ distribute over $\cvee$, note especially that $(\phi\cvee \psi) \vee \theta$ is equivalent to $(\phi \vee\theta)\cvee (\psi \vee \theta)$.

Iterating these substitutions we  obtain that $\{\phi_1, \ldots,\phi_{n-1}\}\models \phi_n$ iff
\begin{equation}\label{eqxx}
\{\bigcvee{\tuple f_1} \phi_i(\tuple f_1/\tuple d_1), \ldots ,\bigcvee{\tuple f_{n-1}} \phi_i(\tuple f_{n-1}/\tuple d_{n-1})\}\models \bigcvee{\tuple f_n} \phi_i(\tuple f_n/\tuple d_n),
\end{equation}
where $\tuple f_i$ ranges over the witness sequences of $\phi_i$. Then \eqref{eqxx} holds iff for all $\tuple f_1, \ldots ,\tuple f_{n-1}$,
\begin{equation}\label{eqv}
\{  \phi_1(\tuple f_1/\tuple d_1), \ldots , \phi_{n-1}(\tuple f_{n-1}/\tuple d_{n-1})\} \models \bigcvee{\tuple f_n} \phi(\tuple f_n/\tuple d_n).
\end{equation}
Notice that each formula $\phi_i(\tuple f_i/\tuple d_i)$ belongs to $\ML$. Hence, by Proposition \ref{prop:yang} we conclude that \eqref{eqv} holds iff for all $\tuple f_1, \ldots ,\tuple f_{n-1}$ there is $\tuple f_n$ such that 
\begin{equation}\label{eqww}
 \{\phi_1(\tuple f_1/\tuple d_1), \ldots , \phi_{n-1}(\tuple f_{n-1}/\tuple d_{n-1})\}\models \phi(\tuple f_n/\tuple d_n).
\end{equation}
\end{proof}
The next proof step is to reduce an entailment problem of the form \eqref{eqww} to a validity problem of an $\RML$ formula over relational Kripke models whose interpretations agree with the guessed functions. For the latter problem we then apply Algorithm \ref{alg:mlsat} whose lines 1-14 and 19-26 constitute an algorithm of Ladner that shows the $\PSPACE$ upper bound for modal logic satisfiability \cite{Ladner77}. Lines 15-18 consider  those cases where the subformula is relational. Lemma \ref{lem:alg} now shows that, given an oracle $A$, this extended algorithm yields a $\PSPACE^A$ decision procedure for  satisfiability  of $\RML$ formulae over relational Kripke models whose predicates agree with $A$. For an oracle set $A$ of words from $\{0,1,\#\}^*$ and $k$-ary relation symbol $R_i$, we define $R_i^A:=\{(b_1, \ldots ,b_k)\in\{0,1\}^k\mid \bin{i}^\frown \#b_1\ldots b_k \in A\}$. Note that  $a^{\frown} b$ denotes the concatenation of two strings $a$ and $b$.

  \begin{figure}[h]
 \begin{algorithm}[H]\label{alg:mlsat}
 \caption{$\PSPACE^A$ algorithm for deciding validity in $\RML$. Notice that queries to $S_i^A$ range over $(b_1, \ldots ,b_k)\in \{0,1\}^k$.}
  \SetAlgoLined
  \LinesNumbered
  \SetKwInOut{Input}{Input}\SetKwInOut{Output}{Output}
  \SetKwFunction{KwFn}{Sat}
  \Input{$(\calA,\calB,\calC,\calD)$ where $\calA,\calB,\calC,\calD\sub \RML$}
  \Output{\KwFn{$\calA,\calB,\calC,\calD$}}
  \BlankLine		
  		\uIf{$\calA\cup\calB\not\sub\Props$}{
  			choose $\phi\in ((\calA\cup \calB)\setminus \Props)$\;
  			\uIf{$\phi =\neg \psi$ and $\phi \in \calA$}{
  				\Return \KwFn{$\calA\setminus\{\phi\},\calB\cup\{\psi\},\calC,\calD$}\;
  			}
  			\uElseIf{$\phi =\neg \psi$ and $\phi \in \calB$}{
  				\Return \KwFn{$\calA\cup\{\psi\},\calB\setminus\{\phi\},\calC,\calD$}\;
  			}
  			\uElseIf{$\phi = \psi\wedge \theta$ and $\phi\in \calA$}{
  				\Return \KwFn{$(\calA\cup\{\psi,\theta\})\setminus\{\phi\},\calB,\calC,\calD$}\;
			}
			\uElseIf{$\phi = \psi\wedge \theta$ and $\phi \in \calB$}{
				\Return \KwFn{$\calA,(\calB\cup\{\psi\})\setminus\{\phi\},\calC,\calD$}$\vee$ \KwFn{$\calA,							(\calB\cup\{\theta\})\setminus\{\phi\},\calC,\calD$}\;
			}
			\uElseIf{$\phi = \Box \psi$ and $\phi\in \calA$}{
				\Return \KwFn{$\calA \setminus\{\phi\},\calB,\calC\cup\{\psi\},\calD$}\;
			}
			\uElseIf{$\phi = \Box \psi$ and $\phi\in \calB$}{
				\Return \KwFn{$\calA ,\calB\setminus\{\phi\},\calC,\calD\cup\{\psi\}$}\;
			}
			\uElseIf{$\phi  =S_i(\psi_1, \ldots, \psi_k)$ and $\phi \in\calA$}{
					\Return $\bigvee_{(b_1, \ldots ,b_k)\in S^A_i}$\KwFn{$(\calA\cup\{\psi_{j}:b_j=1\})\setminus\{\phi\},													\calB\cup\{\psi_j:b_j=0\},\calC,\calD$}\;
			}
			\ElseIf{$\phi =S_i(\psi_1, \ldots ,\psi_k)$ and $\phi \in\calB$}{
					\Return $\bigvee_{(b_1, \ldots ,b_k)\not\in S^A_i}$\KwFn{$\calA\cup\{\psi_{j}:b_j=1\},							(\calB\cup\{\psi_j:b_j=0\})\setminus \{\phi\},\calC,\calD$}\;
  			}		
  		}		
  		\ElseIf{$(\calA\cup\calB)\sub\Props$}{
  			\uIf{$\calA\cap\calB\neq\emptyset$}{
  				\Return \textbf{false}\;
			}
			\uElseIf{$\calA\cap\calB= \emptyset$ and $\calC\cap\calD\neq\emptyset$}{
				\Return $\bigwedge_{D\in\calD}$\KwFn{$\calC,\{D\},\emptyset,\emptyset$}\;
			}
			\ElseIf{$\calA\cap\calB= \emptyset$ and $\calC\cap\calD=\emptyset$}{
				\Return \textbf{true}\;
			}
		}

\end{algorithm}
\end{figure}
%Algorithm \ref{alg:mlsat} extends an algorithm in \cite{Ladner77} with new clauses for relational formulae.
\begin{lem}\label{lem:alg}
Given an $\RML$ formula $\phi$ over a vocabulary $\{S_1, \ldots ,S_n\}$ and an oracle set of words $A$ from $\{0,1,\#\}^*$, Algorithm \ref{alg:mlsat} decides in $\PSPACE^A$ whether there is a relational Kripke structure $\calM=(W,R,\pi,S^A_1, \ldots ,S^A_n)$ and a world $w\in W$ such that $\calM,w\models_{\RML} \phi$.
\end{lem}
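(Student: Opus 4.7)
My plan is to establish both correctness and the $\PSPACE^A$ space bound for the recursive procedure by induction. First I fix the semantic invariant that the recursion maintains: a call \texttt{Sat}$(\calA,\calB,\calC,\calD)$ returns \textbf{true} iff there is a pointed relational Kripke model $(\calM,w)$ with $\calM=(W,R,\pi,S^A_1,\ldots,S^A_n)$ such that (i) every formula in $\calA$ is satisfied at $w$, (ii) no formula in $\calB$ is satisfied at $w$, (iii) every formula in $\calC$ is satisfied at every $R$-successor of $w$, and (iv) every formula $D\in\calD$ is refuted at some $R$-successor of $w$. The statement of the lemma then follows from the initial call \texttt{Sat}$(\{\phi\},\emptyset,\emptyset,\emptyset)$.

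Correctness proceeds by induction on the lexicographic pair $(d,s)$, where $d$ is the total modal depth across the quadruple and $s$ is the number of non-propositional symbol occurrences in $\calA\cup\calB$. The Boolean and $\Box$ rewrites on lines 3--14 are Ladner's classical rules, which strictly reduce $s$ or transition to a strictly smaller $d$ via the base case on line 24. The only new cases are lines 15--18, which I would justify using the semantic clause $\calM,w\models_{\RML} S_i(\psi_1,\ldots,\psi_k)$ iff $(w_\calM(\psi_1),\ldots,w_\calM(\psi_k))\in S_i^{\calM}=S_i^A$: the algorithm branches over all truth-value tuples for the arguments, the oracle query $\bin{i}^\frown \#b_1\cdots b_k\in A$ admits exactly the correct ones, and each $\psi_j$ is added to $\calA$ or $\calB$ according to $b_j$; the $\calB$-side is symmetric with $\notin S_i^A$. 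The propositional base cases on lines 21--26 verify local consistency and then, for each refutation requirement in $\calD$, spawn an independent recursive call providing the corresponding successor world; assembling these into a single tree model yields the invariant.

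For the space bound, every formula that appears in any recursive call is a subformula of $\phi$, so each quadruple has total length $O(|\phi|)$ and fits in polynomial space. The recursion depth is bounded by the measure $(d,s)$, hence $O(|\phi|^2)$, and each stack frame needs only polynomial space. The exponentially wide disjunctions on lines 16 and 18 are realised in polynomial space by iterating a $k$-bit counter through the at most $2^{|\phi|}$ candidate tuples and invoking one oracle query per candidate; the conjunction on line 24 and the disjunctions of recursive calls elsewhere reduce to sequential evaluation of independent subroutines, which is the standard $\PSPACE$ trick for both and-branching and or-branching in tableau algorithms.

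The principal obstacle is ensuring that the line 24 transition really preserves the invariant, since it collapses the coupled successor data $(\calC,\calD)$ into a family of independent recursive calls \texttt{Sat}$(\calC,\{D\},\emptyset,\emptyset)$. This step is clean because $\RML$ has no team-level coupling across the modal boundary---each successor constraint is imposed worldwise---so a tree model can be freely glued from the independently obtained successors. Once this is noted, the remainder is bookkeeping: tracking polynomial-size state, a polynomial recursion depth, and the oracle-counter loop within $\PSPACE^A$.
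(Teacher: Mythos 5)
Your proposal is correct and follows essentially the same route as the paper: the identical four-part semantic invariant for \texttt{Sat}$(\calA,\calB,\calC,\calD)$ established by induction, the same treatment of the relational cases via oracle-filtered truth-value tuples, and the same space accounting (polynomial recursion depth, polynomial per-frame storage, and a counter to sequentialize the exponentially wide disjunctions). The paper leaves the correctness induction to the reader and is slightly more careful about the per-frame encoding to get an $\bigO{n^2}$ space bound, but nothing in your argument diverges in substance.
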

\begin{proof}
%We claim that Algorithm \ref{alg:mlsat} provides the wanted $\PSPACE^A$ decisision procedure. Notice that 
We leave it to the reader to show (by a straightforward structural induction) that, given an input $(\calA,\calB,\calC,\calD)$ where $\calA,\calB,\calC,\calD\sub \RML$, Algorithm \ref{alg:mlsat} returns $\texttt{Sat}(\calA,\calB,\calC,\calD)$ true iff there is a relational Kripke model $\calM=(W,R,\pi,S^{A}_1, \ldots ,S^{A}_n)$ and a world $w$ such that  
\begin{itemize}
\item $\calM,w\models_{\RML}\phi$ if  $\phi\in \calA$; 
\item $\calM,w\not\models_{\RML} \phi$ if  $\phi \in \calB$; 
\item $\calM,w\models_{\RML} \Box \phi$ if  $\phi \in \calC$; and 
\item $\calM,w\not\models_{\RML} \Box \phi$ if  $\phi \in \calD$. 
\end{itemize}
%Notice that lines 15-18 stipulate that selecting $\phi\in \calA$, where $\phi  =S_i(\psi_1, \ldots, \psi_k)$,  $\texttt{Sat}(\calA,\calB,\calC,\calD)$ returns 
%\[\bigvee_{(b_1, \ldots ,b_k)\in S^{A}_i}\texttt{Sat}(\calA\cup\{\psi_{j}:b_j=1\})\setminus\{\phi\},					\calB\cup\{\psi_j:b_j=0\},\calC,\calD),\]
%and selecting $\phi \in \calB$, where $\phi  =S_i(\psi_1, \ldots, \psi_k)$,  $\texttt{Sat}(\calA,\calB,\calC,\calD)$ returns
 %\[\bigvee_{(b_1, \ldots ,b_k)\in \{0,1\}^k\setminus S^{A}_i}\texttt{Sat}(\calA\cup\{\psi_{j}:b_j=1\},							(\calB\cup\{\psi_j:b_j=0\})\setminus \{\phi\},\calC,\calD).\]
Hence, $\texttt{Sat}(\{\psi\},\emptyset,\emptyset,\emptyset)$ returns true iff $\psi$ is satisfiable by  $\calM,w$ with relations $S^{\calM}_i$ obtained from the oracle. Note that the selection of subformulae $\phi$ from $\calA\cup \calB$ can be made deterministically by defining an ordering for the subformulae. Furthermore, we note, following \cite{Ladner77}, that this algorithm runs in $\PSPACE^{A}$ as it employs $\bigO{n}$ recursive steps that each take space $ \bigO{n}$. 
%Following \cite{Ladner77} we show that Algorithm \ref{alg:mlsat} requires only $\bigO{n^2}$ space on an input of the form $\texttt{Sat}(\{\phi\},\emptyset,\emptyset,\emptyset):$ it takes $\bigO{n}$ recursive steps, each taking space $ \bigO{n}$. 
\\% \\
\noindent
\textbf{Size of each recursive step.} At each recursive step  $\texttt{Sat}(\calA,\calB,\calC,\calD)$ is stored onto the work tape by listing all subformulae in $\calA\cup\calB\cup\calC\cup\calD$ in such a way that each subformula $\psi$ has its major connective (or relation/proposition symbol for atomic formulae) replaced with a special marker which also points to the position of the subset  where $\psi$ is located.  In addition we  store at each disjunctive/conjunctive recursive step the subformula or binary number that points to the disjunct/conjunct under consideration. Each recursive step takes now space $\bigO{n}$. \\
\noindent
\textbf{Number of recursive steps.}
Given a set of formulae $\calA$, we write $|\calA |$ for $\Sigma_{\phi\in\calA} |\phi |$ where $|\phi |$ is the length of $\phi$. We show by induction on $n =|\calA \cup\calB \cup\calC\cup\calD |$ that $\texttt{Sat}(\calA,\calB,\calC,\calD)$ has $2n+1$ levels of recursion. Assume that the claim holds for all natural numbers less than $n$, and assume that  $\texttt{Sat}(\calA,\calB,\calC,\calD)$ 
calls  $\texttt{Sat}(\calA',\calB',\calC',\calD')$. Then $|\calA' \cup\calB' \cup\calC'\cup\calD' |< n$ except for the case where $\calA\cap\calB$ is empty and $\calC\cap\calD$ is not. In that case it takes at most one extra recursive step to reduce to  a length $<n$. Hence, by the induction assumption the claim  follows. We conclude that the space requirement for Algorithm \ref{alg:mlsat} on $\texttt{Sat}(\{\phi\},\emptyset,\emptyset,\emptyset)$ is $\bigO{n^2}$. 
%A detailed space analysis (following that in \cite{Ladner77}) can be found in Appendix. 
% \begin{figure}[h]
 %\begin{algorithm}[H]   
% \LinesNumbered
 % \SetKwInOut{Input}{Input}\SetKwInOut{Output}{Output}
 % \SetKwFunction{KwFn}{Sat}
 %\Input{$(\calA,\calB,\calC,\calD)$ where $\calA,\calB,\calC,\calD\sub \RML$}
 % \Output{\KwFn{$\calA,\calB,\calC,\calD$}}
  %  \SetAlgoLined
   %\setcounter{AlgoLine}{14} 			
%			\uElseIf{$\phi  =S_i(\psi_1, \ldots, \psi_k)$, $\phi \in\calA$, and $\bin{i}=c_1\ldots c_m$}{
%					\Return $\bigvee_{(b_1, \ldots ,b_k)\in B_i}$\KwFn{$(\calA\cup\{\psi_{j}:b_j=1\})\setminus\{\phi\},									%				\calB\cup\{\psi_j:b_j=0\},\calC,\calD$}\;
%			}
%			\uElseIf{$\phi =S_i(\psi_1, \ldots ,\psi_k)$, $\phi \in\calB$, and $\bin{i}=c_1\ldots c_m$}{
%					\Return $\bigvee_{(b_1, \ldots ,b_k)\not\in B_i}$\KwFn{$\calA\cup\{\psi_{j}:b_j=1\},							(\calB\cup\{\psi_j:b_j=0\})\setminus \{\phi\},\calC,\calD$}\;
 % 			}			
%\end{algorithm}
%\caption{Lines 15-19 of Algorithm 1
%\ref{alg:mlsat}
% on \KwFn{$\calA,\calB,\calC,\calD$}
% where, for $b_1\ldots b_k\in \{0,1\}^*$, \\$(b_1, \ldots ,b_k)\in B_i:\Leftrightarrow\bin{i}^\frown \#b_1\ldots b_k\in A$.
% }
%\end{figure}
%Analogously to the base algorithm in \cite{Ladner77}, Algorithm \ref{alg:mlsat} requires only $\bigO{n^2}$ space (see Appendix for a detailed analysis).
\end{proof}

Using Lemmata \ref{lem:apu} and \ref{lem:alg} we can now show the $\coNEXPTIME^{\NP}$ upper bound. In the proof we utilize the following connection between alternating Turing machines and the exponential time hierarchy at the level $\coNEXPTIME^{\NP}=\PiE{2}$.
\begin{thmC}[\cite{ChandraKS81,Luck16}]\label{thm:alternation}
$\SigmaE{k}$ (or $\PiE{k}$) is the class of problems recognizable in exponential time by an alternating Turing machine which starts in an existential (universal) state and alternates at most $k-1$ many times.
\end{thmC}
%Recall that $\SigmaE{k}$ and $\PiE{k}$ are the $k$th levels of the exponential hierarchy, defined by $\SigmaE{0}:=\PiE{0}:=\EXPTIME$, and for $k\geq 1$ recursively by
%$\SigmaE{k}:=\NEXPTIME^{\SigmaP{k-1}}$ and $\PiE{k}:=\coNEXPTIME^{\SigmaP{k-1}}$. 
%Recall also that $\SigmaP{0}:=\PiP{0}:=\PTIME$, and for $k\geq 1$,
%$\SigmaP{k}:=\NP^{\SigmaP{k-1}}$ and $\PiP{k}:=\coNP^{\SigmaP{k-1}}$. In particular, observe that $\coNEXPTIME^{\NP} = \PiE{2}$.

%For the definition of an alternating Turing machine we refer the reader to \cite{ChandraKS81}.

%\begin{exa}[\cite{HLKV16}]The formula $\forall \tuple x (\U y \exists z) \neg y \leftrightarrow z$ under the constraint $(\{\tuple x\},\{\tuple x\})$ expresses that every $|\tuple x|$-ary Boolean function has a negation.
%\end{exa}

%%UPPER BOUND
\begin{thm}\label{thm:entail_mldepup}
The entailment problem for $\EMLDep$ is in $\coNEXPTIME^{\NP}$.
\end{thm}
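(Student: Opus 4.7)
The plan is to combine Lemma \ref{lem:apu} with Lemma \ref{lem:alg} and package the resulting decision procedure as an alternating exponential-time computation with a single alternation, so that Theorem \ref{thm:alternation} classifies the problem as $\PiE{2}=\coNEXPTIME^{\NP}$. First, I would apply Lemma \ref{lem:apu} to rephrase $\{\phi_1,\ldots,\phi_{n-1}\}\models \phi_n$ as: for every choice of witness sequences $\vec f_1,\ldots,\vec f_{n-1}$ for the antecedent there is a witness sequence $\vec f_n$ for $\phi_n$ satisfying
\[
\{\phi_1(\vec f_1/\vec d_1),\ldots,\phi_{n-1}(\vec f_{n-1}/\vec d_{n-1})\} \models \phi_n(\vec f_n/\vec d_n).
\]
After the substitutions all surviving formulas are in $\ML$ and hence flat by Proposition \ref{prop:ml_flatness}, so this team-semantic entailment coincides with ordinary modal-logic validity of the material implication $\bigwedge_{i<n}\phi_i(\vec f_i/\vec d_i) \to \phi_n(\vec f_n/\vec d_n)$ over pointed Kripke models.

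Next, using \eqref{eqsuc} I would relabel each dependence atom $d_i$ by a fresh relational atom $S_i(\vec\alpha_i,\beta_i)$, obtaining $\RML$ formulas $\phi_i^*$ of size linear in $\size{\phi_i}$. Validity of the implication above is then equivalent to validity of $\phi_1^*\wedge\ldots\wedge\phi_{n-1}^* \to \phi_n^*$ over relational Kripke structures whose interpretations $S_i^\calM$ match the guessed witness functions. Encoding the combined witness data as an oracle set $A\sub\{0,1,\#\}^*$ in the sense of Lemma \ref{lem:alg}, i.e.\ so that $\bin{i}^\frown\# b_1\ldots b_k \in A$ iff the witness $f_i$ maps $(b_1,\ldots,b_{k-1})$ to $b_k$, Lemma \ref{lem:alg} applied to the negated implication $\phi_1^*\wedge\ldots\wedge\phi_{n-1}^*\wedge\negg\phi_n^*$ gives a $\PSPACE^A$ decision procedure for the desired validity (using that $\PSPACE$ is closed under complement).

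The full algorithm then proceeds in three phases: universally guess $\vec f_1,\ldots,\vec f_{n-1}$, existentially guess $\vec f_n$, write these onto the tape so they serve as the oracle $A$, and finally run the $\PSPACE^A$ validity check deterministically. Each witness $f_i$ is a truth table of size $2^{|\vec\alpha_i|}\le 2^{\size{\phi_i}}$, so the guessing phase takes $O(2^n)$ time in total; the deterministic phase uses $\PSPACE\sub\EXPTIME$ with oracle queries simulated by table lookup, which stays within exponential time. This yields an alternating exponential-time procedure that starts in a universal state and alternates exactly once, placing the entailment problem in $\coNEXPTIME^{\NP}$ by Theorem \ref{thm:alternation}. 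The main obstacle will be the size/time accounting itself: confirming that the exponentially large witness sequences can be stored within the time budget of the alternating exponential-time machine, and that the $\PSPACE^A$ computation deterministically simulates into exponential time once $A$ resides on the tape.
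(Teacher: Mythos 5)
Your proposal is correct and follows essentially the same route as the paper: Lemma \ref{lem:apu} to externalize the witness functions as a universal--existential guess, flatness to pass to ordinary Kripke validity, the equivalence \eqref{eqsuc} to obtain a polynomial-size $\RML$ formula, and Lemma \ref{lem:alg} run with the guessed functions as oracle, all packaged via Theorem \ref{thm:alternation} as a $\PiE{2}$ computation. The time accounting you flag as the remaining obstacle is handled exactly as you sketch it (exponential-size guesses, $\PSPACE$-on-polynomial-input simulated in exponential time with table lookups), so no gap remains.
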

\begin{proof}
Assuming an input  $\phi_1, \ldots ,\phi_n$ of $\EMLDep$ formulae, we show how to decide in $\PiE{2}$ whether $\{\phi_1, \ldots ,\phi_{n-1} \}\models \phi_n$. By Theorem \ref{thm:alternation} it suffices to construct an alternating exponential-time algorithm that switches once from an universal to an existential state.
By Lemma \ref{lem:apu}, $\{\phi_1, \ldots ,\phi_{n-1} \}\models \phi_n$ iff for all $\tuple f_1, \ldots ,\tuple f_{n-1}$ there is $\tuple f_n$ such that 
\begin{equation}\label{eqw}
\{  \phi_1(\tuple f_1/\tuple d_1), \ldots , \phi_{n-1}(\tuple f_{n-1}/\tuple d_{n-1})\} \models \phi(\tuple f_n/\tuple d_n).
\end{equation} 
Recall from the proof of Lemma \ref{lem:apu} that all the formulae in \eqref{eqw} belong to $\ML$. Hence by the flatness property (Proposition \ref{prop:ml_flatness})   $\models$ is interchangeable with $\models_{\ML}$ in \eqref{eqw}. It follows that $\eqref{eqw}$ holds iff 
\begin{equation}\label{eqx}
\phi:=\phi_1(\tuple f_1/\tuple d_1)\wedge \ldots \wedge \phi_{n-1}(\tuple f_{n-1}/\tuple d_{n-1})\wedge \neg \phi(\tuple f_n/\tuple d_n)
\end{equation} 
is not satisfiable with respect to the standard Kripke semantics of modal logic. 
%
%Recall that each $\tuple d_i$ is a list %$\bigl(\phi_{i_1}:=\dep{\tuple \alpha_{i_1},\beta_{i_{1}}}, \ldots ,\phi_{i_{m_i}}:=\dep{\tuple \alpha_{i_{m_i}},\beta_{i_{m_i}}}\bigr)$ %
%of all dependence atom subformulae that appear in $\phi_i$, and $\tuple f_i$ is a selected list of corresponding witness functions. % $f_{i_1}, \ldots ,f_{i_{m_i}}$, each $f_{i_j}$ being from $\{\top,\bot\}^{|\tuple \alpha_{i_j}|}$ to $\{0,1\}$. 
By the equivalence in \eqref{eqsuc} we notice that \eqref{eqx} is not satisfiable with respect to $\models_{\ML}$ iff $\phi^*$ is not satisfiable over the selected functions with respect to $\models_{\RML}$, where $\phi^*$ is obtained from $\phi$ by replacing each $D(f,\tuple \alpha,\beta)$  of the form \eqref{eqdep} with the predicate $f(\tuple \alpha)=\beta$. % and  each appearance of $\neg $, $\Diamond$, or $\psi_0 \vee \psi_1$ respectively with $\negg$, $\negg \Box \negg$, or $\negg(\negg \psi_0 \wedge \negg \psi_1)$. 
 The crucial point here is that $\phi^*$ is only of length $\bigO{n\log n}$ in the input. 

The algorithm now proceeds as follows. The first step is to universally guess functions listed in $\tuple f_1 \ldots \tuple f_{n-1}$, followed by an existential guess over functions listed in $\tuple f_n$. 
The next step is to transform the input to the described $\RML$ formula $\phi^*$. %The crucial point here is that $\phi^*$ is only polynomial in the length of the input.
 The last step is to run Algorithm \ref{alg:mlsat} on $\texttt{Sat}(\phi^*,\emptyset,\emptyset,\emptyset)$ replacing queries to the oracle with investigations on the guessed functions, and return true iff the algorithm returns false. By Lemma \ref{lem:alg}, Algorithm \ref{alg:mlsat} returns false iff \eqref{eqw} holds over the selected functions. Hence, by Lemma \ref{lem:apu} we conclude that the overall algorithm returns true iff $\{\phi_1, \ldots ,\phi_{n-1} \}\models \phi_n$.

Note that this procedure involves polynomially many guesses, each of at most exponential length. Also, Algorithm \ref{alg:mlsat} runs in exponential time and thus each of its implementations has at most exponentially many oracle queries. Hence, we conclude that the given procedure decides $\EMLDep$-entailment  in $\coNEXPTIME^{\NP}$. 
\end{proof}
Notice that the decision procedure for $\models \phi$ does not involve any universal guessing.
Therefore, we obtain immediately a $\NEXPTIME$ upper bound for the validity problem of $\EMLDep$.
\begin{cor}\label{cor:val_mldepup}
The validity problem for $\EMLDep$ is in $\NEXPTIME$.
\end{cor}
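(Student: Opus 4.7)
The plan is to specialise the algorithm developed for Theorem~\ref{thm:entail_mldepup} to the case of an empty left-hand side, i.e.\ to apply Lemma~\ref{lem:apu} with $n=1$. This gives that $\models \phi_1$ holds iff there exists a witness sequence $\tuple f_1$ of $\phi_1$ such that $\models \phi_1(\tuple f_1/\tuple d_1)$, where the latter is an ordinary $\ML$ formula whose validity in team semantics, by flatness (Proposition~\ref{prop:ml_flatness}), coincides with validity under the standard Kripke semantics.

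Concretely, I would run the same two-phase procedure as in the proof of Theorem~\ref{thm:entail_mldepup}, but suppressing the universal phase entirely. First, existentially guess the witness sequence $\tuple f_1$; its total size is at most exponential in the length of $\phi_1$. Second, using the equivalence in \eqref{eqsuc}, build the succinct $\RML$ formula $\phi_1^{*}$ of length $\bigO{n\log n}$ obtained by replacing each dependence atom $d = \dep{\tuple \alpha,\beta}$ with a relational atom $S(\tuple \alpha\beta)$ whose interpretation is fixed by the corresponding component of $\tuple f_1$, and by translating $\neg$, $\Diamond$, $\vee$ into $\negg$, $\negg\Box\negg$, $\negg(\negg\cdot\wedge\negg\cdot)$ just as before.

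Third, invoke Algorithm~\ref{alg:mlsat} on $\texttt{Sat}(\{\phi_1^{*\bot}\},\emptyset,\emptyset,\emptyset)$, answering oracle queries $(b_1,\dots,b_k)\in S_i^A$ by direct look-up into the guessed witness tables, and accept iff this call returns \textbf{false}. By Lemma~\ref{lem:alg} the Sat procedure runs in $\PSPACE$ relative to this oracle, and hence in deterministic exponential time once the guessed tables are treated as part of the input. Consequently $\phi_1^{*\bot}$ is unsatisfiable iff $\phi_1^*$ is valid in $\RML$ over the chosen interpretations, which by \eqref{eqsuc} is equivalent to $\models \phi_1(\tuple f_1/\tuple d_1)$.

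Since the only source of nondeterminism is the initial existential guess of $\tuple f_1$, the overall procedure is a nondeterministic exponential-time algorithm deciding validity of $\phi_1$, establishing the claimed $\NEXPTIME$ upper bound. I anticipate no genuine obstacle: all the essential machinery—the witness representation of dependence atoms, the succinct $\RML$ translation, and the $\PSPACE^A$ satisfiability procedure—has already been set up for Theorem~\ref{thm:entail_mldepup}, so the only work is to observe that dropping the universal quantifier phase drops the alternation and hence the oracle, collapsing $\coNEXPTIME^{\NP}$ down to $\NEXPTIME$.
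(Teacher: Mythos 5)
Your proposal is correct and matches the paper's argument: the paper obtains the corollary precisely by observing that for empty $\Sigma$ the procedure of Theorem~\ref{thm:entail_mldepup} needs no universal guessing, so only the existential guess of the witness sequence remains and the alternation (hence the $\NP$ oracle) disappears. You have simply spelled out in detail the one-sentence justification the paper gives.
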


\section{Propositional Dependence Logics}\label{sect:prop}
Before showing that $\coNEXPTIME^{\NP}$ is also the lower bound for the entailment problem of the propositional fragment of $\MLDep$, we need to formally define this fragment. We also need to present other propositional variants that will be examined later in this article. Beside extending our investigations to propositional independence and inclusion logics, we will also study the extensions of these logics with additional universal and existential quantifiers. This section is divided into two subsections. Sect \ref{subsect:propintro} introduces different variants of propositional dependence logic.  Sect \ref{subsect:tomdl} shows that decision problems for quantified propositional dependence logics can be reduced to the same problems over modal dependence logics.
\subsection{Introduction to Propositional Dependence Logics}\label{subsect:propintro}
The syntax of \emph{propositional logic} ($\PL$) is generated by the following grammar:
\begin{equation}\label{def:qprop}
\phi\ddfn p \mid \neg p \mid (\phi \wedge \phi) \mid (\phi \vee \phi)
\end{equation}
The syntaxes of propositional dependence, independence, and inclusion logics ($\PLDep$, $\PLInd$, $\PLInc$, resp.) are obtained by extending the syntax of $\PL$ with dependence atoms of the form \eqref{depatom}, independence atoms of the form \eqref{indatom}, and inclusion atoms of the form \eqref{incatom}, respectively. Furthermore, the syntax of $\PLDis$ extends \eqref{def:qprop} with the grammar rule $\phi\ddfn \phi\cvee \phi$.

The formulae of these logics are evaluated against propositional teams. Let $V$ be a set of variables. We say that a function $s\colon V\to \{0,1\}$ is a \emph{(propositional) assignment} over $V$, and a \emph{(propositional) team} $X$ over $V$ is a set of propositional assignments over $V$.
 A team $X$ over $V$ induces  a Kripke model $\calM_X=(T_X,\emptyset, \pi)$ where $T_X=\{w_s\mid s\in X\}$  and $w_s \in \pi(p)\Leftrightarrow s(p)=1$ for $s\in X$ and $p\in V$.
The team semantics for propositional formulae is now defined as follows:
\[X\models \phi :\Leftrightarrow \calM_X,T_X\models \phi,\]
where $\calM_X,T_X\models \phi$ refers to the team semantics of modal formulae (see Sect. \ref{sect:mdl}). If $\phi^*$ is a formula obtained from $\phi$ by replacing all propositional atoms $p$ (except those inside a dependence atom)  with predicates $A(p)$, then we can alternatively describe that $X\models \phi$ iff $ \calM=(\{0,1\},A:=\{1\})$ and $X$ satisfy $\phi^*$ under the lax team semantics of first-order dependence logics \cite{galliani12}.

%We will also examine validity and entailment in quantified propositional dependence logic which is a team semantics adaptation and generalization of the dependency quantified Boolean formula problem \cite{HLKV16}. This problem, shown to be $\NEXPTIME$-complete in \cite{Peterson2001}, extends the quantified Boolean formula problem, the standard $\PSPACE$-complete problem, with the introduction of additional quantification constraints.  To this end, we start with the introduction of quantified propositional logic. The syntax 
\emph{Quantified propositional logic} ($\QPL$) is obtained by extending that of $\PL$ with universal and existential quantification over propositional variables. Their semantics is given in terms of so-called duplication and supplementation teams. Let $p$ be a propositional variable and $s$ an assignment over $V$. We denote by $s(a/p)$ the assignment over $V\cup\{p\}$ that agrees with $s$ everywhere, except that it maps $p$ to $a$.  Universal quantification of a propositional variable $p$ is defined in terms of \emph{duplication teams} $X[\{0,1\}/p]:=\{s(a/p)\mid s\in X,a\in \{0,1\}\}$ that extend teams $X$ with all possible valuations for $p$. Existential quantification is defined in terms of \emph{supplementation teams} $X[F/p]:=\{s(a/p)\mid s\in X, a\in F(s)\}$ where $F$ is a mapping from $X$ into $\{\{0\},\{1\},\{0,1\}\}$. The supplementation team $X[F/p]$ extends each assignment of $X$ with a non-empty set of values for $p$. The satisfaction relations $X\models \exists p \phi$ and $X\models \forall p \phi$ are now given as follows:
\begin{align*}
  X\models \exists p \phi \quad&:\Leftrightarrow \quad \exists F\in {}^X\{\{0\},\{1\},\{0,1\}\}: X[F/p]\models \phi ,\\
    X\models \forall p \phi \quad&:\Leftrightarrow \quad X[\{0,1\}/p]\models \phi.
\end{align*}

We denote by $\QPLDep$  the extension of $\PLDep$ with quantifiers and define $\QPLInd$, $\QPLInc$, and $\QPLDis$ analogously.  Observe that the flatness and downward closure properties of modal formulae  (Propositions \ref{prop:ml_flatness} and \ref{prop:ml_dc}, resp.) apply now analogously to propositional formulae. We also have that $\QPLInc$ is closed under taking unions of teams. Note that  $\models_{\PL}$ here refers to the standard semantics of propositional logic. Furthermore,  $\Var{\phi}$  refers to the set of variables appearing in $\phi$, and  $\Fr{\phi}$ to the set of free variables appearing in a formula $\phi$, both defined in the standard way. 
 Sometimes we write $\phi(p_1, \ldots ,p_n)$ instead of $\phi$ to emphasize that $\Fr{\phi}=\{p_1, \ldots, p_n\}$.  
\begin{prop}[Flatness \cite{vaananen07}]\label{prop:qpl_flatness}
Let $\phi$ be a formula in $\QPL$, and let $X$ be a team over $V\supseteq \Fr{\phi}$. Then:
\begin{align*}
X\models \phi \quad \Leftrightarrow \quad & \forall s\in X: s\models_{\PL} \phi.
\end{align*}
\end{prop}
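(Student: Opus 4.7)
The plan is to prove the equivalence by structural induction on $\phi$. The base cases of literals $p$ and $\neg p$ follow directly by unfolding the team-semantic clauses, since those clauses already read ``$T \subseteq \pi(p)$'' and ``$T \cap \pi(p) = \emptyset$'', which are plainly equivalent to an assignment-wise condition. The conjunction case is immediate from the induction hypothesis, as $\wedge$ is treated pointwise on both sides.

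The disjunction case is the first case that requires a genuine argument. For the direction from team satisfaction to classical satisfaction, I would take a splitting $X = T_1 \cup T_2$ with $T_i \models \phi_i$ and apply the induction hypothesis to each $T_i$, so that every $s \in T_i$ classically satisfies $\phi_i$ and hence every $s \in X$ classically satisfies $\phi_1 \vee \phi_2$. For the converse, I would define the canonical partition $T_1 \dfn \{s \in X : s \models_{\PL} \phi_1\}$ and $T_2 \dfn X \setminus T_1$, so that every element of $T_2$ classically satisfies $\phi_2$; the induction hypothesis then yields $T_1 \models \phi_1$ and $T_2 \models \phi_2$, whence $X \models \phi_1 \vee \phi_2$.

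For the universal quantifier, the key observation is that by definition $X[\{0,1\}/p]$ consists of exactly the assignments $s(a/p)$ for all $s \in X$ and $a \in \{0,1\}$; applying the induction hypothesis to $\phi$ on the duplication team makes $X \models \forall p \phi$ equivalent to ``for every $s \in X$ and every $a \in \{0,1\}$, $s(a/p) \models_{\PL} \phi$'', which is exactly $\forall s \in X: s \models_{\PL} \forall p \phi$. For the existential quantifier, the forward direction uses a supplementation $F$ with $X[F/p] \models \phi$: by the induction hypothesis every element of $X[F/p]$ classically satisfies $\phi$, so for each $s \in X$ any $a \in F(s)$ witnesses $s \models_{\PL} \exists p \phi$. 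For the converse, I would pick for each $s \in X$ a classical witness $a_s$ and define $F(s) \dfn \{a_s\}$; then every element of the induced supplementation team $X[F/p]$ classically satisfies $\phi$, and the induction hypothesis gives $X[F/p] \models \phi$, hence $X \models \exists p \phi$.

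No step is a serious obstacle; the whole argument is a routine structural induction. The only mildly delicate point is the $\vee$ case, where the asymmetry between team semantics (a single global splitting into two subteams) and classical semantics (each assignment independently picking a disjunct) is reconciled by using the canonical partition indexed by which disjunct each $s$ classically satisfies, exploiting the fact that the satisfaction condition for every connective below quantifiers is preserved under shrinking to subteams that arise from this partition.
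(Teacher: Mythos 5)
Your proof is correct and complete: the structural induction covers every production of the $\QPL$ grammar, and the two genuinely non-trivial points --- the canonical partition $T_1=\{s\in X: s\models_{\PL}\phi_1\}$, $T_2=X\setminus T_1$ for the converse of the $\vee$ case, and the singleton supplementation $F(s)=\{a_s\}$ built from classical witnesses for the converse of the $\exists$ case --- are handled exactly right. The paper itself gives no proof (it only cites V\"a\"an\"anen's monograph), and your argument is the standard one.
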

\begin{prop}[Downward Closure \cite{vaananen07}]\label{prop:qpl_dc}
Let $\phi$ be a formula in $\QPLDep$ or $\QPLDis$, and let $X$ be a team over a set $V\supseteq \Fr{\phi}$ of propositional variables. Then:
\begin{align*}
Y \sub X\textrm{ and }X\models \phi \quad\Rightarrow \quad& Y\models \phi.
\end{align*}
\end{prop}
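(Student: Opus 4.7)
The plan is to prove the downward closure property by a straightforward structural induction on $\phi$. Since $\QPLDep$ and $\QPLDis$ share most of their syntax, I would handle both at once, treating the dependence atom and the intuitionistic disjunction as two additional inductive cases beyond the clauses of $\QPL$. The induction hypothesis is that for every proper subformula $\psi$ of $\phi$, every team $X'$ over some $V' \supseteq \Fr{\psi}$, and every $Y' \subseteq X'$, if $X' \models \psi$ then $Y' \models \psi$.

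For the base cases, the literals $p$ and $\neg p$ are immediate from their definitions (both are ``universal'' conditions on the assignments in the team). For the dependence atom $\dep{\tuple{p},q}$, the satisfaction clause is a universally quantified condition over pairs $s,s'\in X$, which is obviously preserved when passing to a subset $Y \subseteq X$. The key inductive cases are the following. For $\phi_1 \wedge \phi_2$, the argument is immediate from the induction hypothesis on each conjunct. For $\phi_1 \vee \phi_2$, I would take the splitting $X = X_1 \cup X_2$ given by $X \models \phi_1 \vee \phi_2$ and set $Y_i := Y \cap X_i$; then $Y = Y_1 \cup Y_2$ and the induction hypothesis applied to $X_i \models \phi_i$ with $Y_i \subseteq X_i$ yields $Y_i \models \phi_i$. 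For $\phi_1 \cvee \phi_2$, directly invoke the induction hypothesis on whichever disjunct $X$ satisfies.

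The quantifier cases are the ones where a little care is needed. For $\forall p\,\psi$, the observation is that $Y[\{0,1\}/p] \subseteq X[\{0,1\}/p]$, so the induction hypothesis applied to $\psi$ closes the case. For $\exists p\,\psi$, given a witness $F\colon X \to \{\{0\},\{1\},\{0,1\}\}$ with $X[F/p] \models \psi$, the natural move is to pass to the restriction $F' := F\restriction Y$; then $Y[F'/p] = \{ s(a/p) \mid s \in Y,\ a \in F(s) \} \subseteq X[F/p]$, and the induction hypothesis on $\psi$ gives $Y[F'/p] \models \psi$, whence $Y \models \exists p\,\psi$.

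The most delicate step, and essentially the only non-trivial one, is the disjunction clause $\phi_1 \vee \phi_2$: one must verify that the intersection splitting $Y_i := Y \cap X_i$ really does recover a splitting of $Y$ in the required sense, which relies only on the set-theoretic identity $Y \cap (X_1 \cup X_2) = (Y \cap X_1) \cup (Y \cap X_2)$ together with $Y \subseteq X$. Once this is observed, every other case is routine, and the induction goes through uniformly for both $\QPLDep$ and $\QPLDis$.
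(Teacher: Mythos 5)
Your proof is correct, and it is exactly the standard structural-induction argument for downward closure; the paper itself states this proposition without proof, citing V\"a\"an\"anen's monograph, where the argument runs just as you describe. All the cases check out, including the two that deserve the care you give them: the splitting disjunction via $Y_i := Y\cap X_i$, and the existential quantifier via restricting the supplementation function, where the needed containment $Y[F{\restriction}Y/p]\sub X[F/p]$ indeed holds.
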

\begin{prop}[Union Closure \cite{galliani12}]\label{prop:qpl_uc}
Let $\phi$ be a formula in $\QPLInc$, and let $X$ and $Y$ be teams over a set $V\supseteq \Fr{\phi}$ of propositional variables. Then:
\begin{align*}
X\models \phi \textrm{ and }Y\models \phi \quad\Rightarrow \quad& X\cup Y\models \phi.
\end{align*}
\end{prop}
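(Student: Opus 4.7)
The plan is to proceed by structural induction on the formula $\phi \in \QPLInc$. For the base cases of propositional literals, observe that if $X \models p$ and $Y \models p$ then $s(p) = 1$ holds for every $s \in X \cup Y$, so $X \cup Y \models p$, and the case of $\neg p$ is symmetric. For an inclusion atom $\vec p \sub \vec q$, given any $s \in X \cup Y$ we have $s \in X$ or $s \in Y$, and in either case a witness $s'$ with $s(\vec p) = s'(\vec q)$ already lies in the corresponding team, hence in $X \cup Y$.

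For the inductive steps, the conjunction case is immediate from the inductive hypothesis applied to the two conjuncts. For a disjunction $\phi_1 \vee \phi_2$, take splits $X = X_1 \cup X_2$ and $Y = Y_1 \cup Y_2$ witnessing $X \models \phi_1 \vee \phi_2$ and $Y \models \phi_1 \vee \phi_2$; then $X \cup Y = (X_1 \cup Y_1) \cup (X_2 \cup Y_2)$ is a valid split, and the inductive hypothesis applied to $\phi_1$ on $X_1 \cup Y_1$ and to $\phi_2$ on $X_2 \cup Y_2$ delivers $X \cup Y \models \phi_1 \vee \phi_2$. The universal quantifier case reduces to the simple identity $(X \cup Y)[\{0,1\}/p] = X[\{0,1\}/p] \cup Y[\{0,1\}/p]$, after which the inductive hypothesis applies directly.

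The delicate step is the existential quantifier. Suppose $F\colon X \to \{\{0\},\{1\},\{0,1\}\}$ and $G\colon Y \to \{\{0\},\{1\},\{0,1\}\}$ witness $X \models \exists p\, \phi$ and $Y \models \exists p\, \phi$, respectively. I define $H\colon X \cup Y \to \{\{0\},\{1\},\{0,1\}\}$ by
\[
H(s) \dfn \begin{cases} F(s) \cup G(s) & \text{if } s \in X \cap Y,\\ F(s) & \text{if } s \in X \setminus Y,\\ G(s) & \text{if } s \in Y \setminus X.\end{cases}
\]
By construction each $H(s)$ is a non-empty subset of $\{0,1\}$, and an elementary check of the definition of supplementation gives $(X \cup Y)[H/p] = X[F/p] \cup Y[G/p]$. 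Applying the inductive hypothesis to $\phi$ on this union yields $(X \cup Y)[H/p] \models \phi$, and hence $X \cup Y \models \exists p\, \phi$.

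The main obstacle is precisely this existential case: one must patch the two supplementation functions into a single legitimate supplementation function whose induced team coincides with the union of the two supplementation teams. Once the identity above is established, the induction goes through by routine bookkeeping.
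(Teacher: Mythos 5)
Your proof is correct, and it is essentially the standard argument for union closure of inclusion logic under lax team semantics (the paper itself does not reprove this but defers to \cite{galliani12}, where the same structural induction is carried out in the first-order setting). You correctly identify the only delicate case — merging the two supplementation functions at the existential quantifier, which is exactly where lax semantics is needed — and the identity $(X\cup Y)[H/p]=X[F/p]\cup Y[G/p]$ you establish there is the crux of that case.
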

We  denote the \emph{restriction} of an assignment $s$ to variables in $V$ by $s\upharpoonright V$, and define the restriction of a team $X$ to $V$, written $X\upharpoonright V$, as $\{s\upharpoonright V\mid s\in X\}$.  We conclude this section by noting that, similar to the first-order case, quantified propositional dependence logic satisfies the following {locality} property.
\begin{prop}[Locality \cite{vaananen07}]\label{prop:qpl_locality}
Let $\phi$ be a formula in $\calL$ where $\calL\in\{\QPLDep,\QPLInd,\QPLInc,$\\$\QPLDis\}$, let $X$ be a team over $V\supseteq \Fr{\phi}$, and let $V'$ be such that $\Fr{\phi}\sub V'\sub V$. Then:
\begin{align*}
X\models \phi \quad \Leftrightarrow \quad & X\upharpoonright V'\models \phi.
\end{align*}
\end{prop}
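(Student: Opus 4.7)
The plan is to proceed by structural induction on $\phi$, working uniformly for all four logics. I assume the intended hypothesis is $\Fr{\phi} \subseteq V' \subseteq V$ (this is needed, otherwise restriction can destroy the values of a free variable). The key observation driving every inductive step is that duplication, supplementation, and union of teams all commute with restriction in a controlled way.

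For the atomic cases (propositional literals, dependence, independence, or inclusion atoms), every variable occurring in $\phi$ lies in $\Fr{\phi} \subseteq V'$, so for every $s \in X$ the values $s(p)$ for $p \in \Var{\phi}$ coincide with $(s\upharpoonright V')(p)$. Consequently the defining condition for satisfaction (subset membership for literals, or the universally/existentially quantified equality/membership conditions on tuples of truth values for the three dependency atoms) refers to exactly the same data in $X$ and in $X\upharpoonright V'$, and the equivalence is immediate. For conjunction the statement follows termwise by the induction hypothesis.

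For disjunction $\phi_1 \vee \phi_2$, the forward direction is straightforward: if $X = X_1 \cup X_2$ witnesses the split, then $X\upharpoonright V' = (X_1\upharpoonright V') \cup (X_2\upharpoonright V')$ and each piece satisfies the corresponding disjunct by induction. The backward direction requires lifting a split of $X\upharpoonright V' = Y_1 \cup Y_2$ back to $X$; I would define
\[
X_i \;\dfn\; \{\,s \in X \mid s\upharpoonright V' \in Y_i\,\}
\]
so that $X = X_1 \cup X_2$ and $X_i \upharpoonright V' = Y_i$, and then invoke the induction hypothesis on each $X_i$. For intuitionistic disjunction $\cveee$ the argument is simpler, being a direct case split. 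For the universal quantifier $\forall p\,\phi$, the identity
\[
X[\{0,1\}/p] \upharpoonright (V' \cup \{p\}) \;=\; (X\upharpoonright V')[\{0,1\}/p]
\]
combined with the induction hypothesis (applied with $V' \cup \{p\}$) closes the case.

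The main obstacle is the existential quantifier $\exists p\,\phi$, because a choice function $F \colon X \to \{\{0\},\{1\},\{0,1\}\}$ on the large team does not literally descend to a function on the smaller team $X\upharpoonright V'$: distinct $s,s' \in X$ may collapse to the same restriction while carrying different $F$-values. The fix I would use is to take unions along the fibres of the restriction map. Concretely, given $F$ witnessing $X \models \exists p\,\phi$, define $G \colon X\upharpoonright V' \to \{\{0\},\{1\},\{0,1\}\}$ by
\[
G(t) \;\dfn\; \bigcup_{\,s \in X,\; s\upharpoonright V' = t\,} F(s);
\]
a direct verification shows $(X\upharpoonright V')[G/p] = X[F/p] \upharpoonright (V' \cup \{p\})$, and then the induction hypothesis applied with $V'\cup\{p\}$ finishes the forward direction. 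Conversely, given $G$ for the restricted team, define $F(s) \dfn G(s\upharpoonright V')$, obtaining the analogous commutation $X[F/p]\upharpoonright(V'\cup\{p\}) = (X\upharpoonright V')[G/p]$, and again apply the induction hypothesis. Since each inductive step invokes the hypothesis on a free-variable set contained in the enlarged parameter $V'$ (or $V' \cup \{p\}$), strengthening the induction to quantify over all admissible $V'$ keeps the argument clean.
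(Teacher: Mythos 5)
The paper gives no proof of this proposition---it is simply cited to V\"a\"an\"anen's monograph---so there is nothing to compare against line by line; judged on its own, your argument is correct and is essentially the standard locality proof for lax team semantics. You rightly repair the statement by adding the (implicitly intended) hypothesis $\Fr{\phi}\subseteq V'$, and the one genuinely delicate step, the existential quantifier, is handled correctly: taking $G(t)=\bigcup_{s\upharpoonright V'=t}F(s)$ along the fibres of the restriction map stays inside $\{\{0\},\{1\},\{0,1\}\}$ precisely because the semantics is lax, and this is exactly the point where locality would break under strict semantics (a failure the paper itself alludes to in a footnote to Theorem~\ref{thm:reduction}). The remaining cases (atoms depending only on value tuples over $V'$, the fibre-preimage splitting for $\vee$, and the commutation of duplication with restriction) are all sound, and strengthening the induction to range over all admissible $V'$ is the right bookkeeping.
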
 

\subsection{Reductions from Quantified Propositional to Modal Logics}\label{subsect:tomdl}
In this section we show how to generate simple polynomial-time reductions from quantified propositional dependence logics to  modal dependence logics with respect to their entailment and validity problem. %Assume we are given a set of formulae $\Sigma\cup\{\phi\}$ from quantified dependence, independence, or inclusion logic. We now show how to construct in polynomial time a set of formulae $\Sigma'\cup\{\phi'\}$ from corresponding modal dependence logics such that $\Sigma\models \phi$ iff $\Sigma'\models \phi'$. The first step is to transform the formulae in $\Sigma \cup\{\phi\}$ to prenex normal form.
First we present Lemma \ref{normal form} which is a direct consequence of \cite[Lemma 14]{galhankon13} that presents prenex normal form translations in the first-order dependence logic setting  over structures with universe size at least $2$. The result follows by the obvious first-order interpretation of quantified propositional formulae: satisfaction of a quantified  propositional formula $\phi$ by a binary team $X$ can be replaced with satisfaction of $\phi^*$ by $\calM:=(\{0,1\},P^{\calM}:=\{1\})$ and $X$, where $\phi^*$ is a formula obtained from $\phi$ by replacing atomic propositional formulae $p$ and $\neg p$ respectively with $P(p)$ and $\neg P(p)$.

\begin{lemC}[\cite{galhankon13}]\label{normal form}
Any formula $\phi$ in $\LL$, where $\LL\in \{\QPLDep,\QPLInc,\QPLInd\}$, is logically equivalent to a polynomial size formula $Q_1 p_1 \ldots Q_n p_n\psi$ in $\LL$ where $\psi$ is quantifier-free and $Q_i\in\{\exists, \forall \}$ for $i=1, \ldots ,n$.
%Normal form for all QPL(D) logics: First quantifier prefix, each variable quantified at most once, and then qunatifier free part.
\end{lemC}
 %Assume now a formula of the form $Q_1 p_1 \ldots Q_k p_k\psi$. 
 Next we show how to describe in modal terms a quantifier block $Q_1 p_1 \ldots Q_n p_n$.  Using the standard method in modal logic we construct a formula $\tree{V,p_1, \ldots ,p_n}$ that enforces the complete binary assignment tree over $p_1, \ldots ,p_n$ for a team over  $V$ where $\{p_1,\ldots ,p_n\}$ and $V$  are disjoint \cite{Ladner77}. The formulation of $\tree{V,p_1, \ldots ,p_n}$ follows the presentation in \cite{HLKV16}. %Branching in the tree will correspond to quantification of proposition variables.
%
%The binary tree is forced in the standard way by modal formulae:
%
%
 We define $\Store{p}{n}\dfn (p\land\Box^n p)\lor(\lnot p\land\Box^n\lnot p)$, where $\Box^n$ is a shorthand for $\overbrace{\Box\cdots\Box}^{n\text{ many}}$, to impose the existing values  for $p$ to successors in the tree.
We also define $\Branch{p}{n}\dfn \Diamond p\land\Diamond\lnot p \land \Box \hspace{.3mm}\Store{p}{n}$ to  indicate that there are $\ge 2$ successor states which disagree on the variable $p$ and that all successor states preserve their values up to branches of length $n$.
%h
Then we let
\begin{align*}
\tree{V,p_1, \ldots ,p_n}\dfn  \bigwedge_{q\in V}\bigwedge_{i=1}^{n} \Store{q}{i} \wedge \bigwedge_{i=0}^{n-1}\Box^i \Branch{p_{i+1}}{n-(i+1)}.
\end{align*}
% is the $i$-times concatenation of $\Box$.
 %The formula  $\tree{p,n}$ forces a complete binary assignment tree of depth $n$ for proposition symbols $p_1,\dots,p_n$.
 Note that $\tree{V,p_1, \ldots ,p_n}$ is an $\ML$ formula and hence has the flatness property by Proposition \ref{prop:ml_flatness}.

 %When $\varphi$ is a $\QPLInc$-formula, we denote by $\varphi'$ the $\MInc$ formula that is obtained from  $\varphi$ by substituting each existential quantifier $\exists p$ by $\Diamond$ and each universal quantifier $\forall p$ by $\Box$.
\begin{restatable}{thm}{aseit}
\label{thm:reduction}
The satisfiability, validity, and entailment problems  for $\QPLDep$, $\QPLInd$, or $\QPLInc$ are polynomial-time reducible to the satisfiability, validity, and entailment problems for $\MLDep$, $\MLInd$, or $\MLInc$, respectively.
\end{restatable}

\begin{proof}
Consider first the entailment problem, and assume that $\Sigma\cup\{\phi\}$ is a finite set of formulae in either $\QPLDep$, $\QPLInd$, or $\QPLInc$.
By Lemma \ref{normal form} each formula in $\theta\in \Sigma \cup\{\phi\}$ can be transformed in polynomial time to the form $\theta_0=Q_1 p_1 \ldots Q_n p_n\psi$ where $\psi$ is quantifier-free. %Moreover, by the locality principle (Proposition \ref{prop:qpl_locality}) 
 We may assume that the variable sequences  $p_1, \ldots, p_n$ corresponding to these quantifier blocks are  initial segments of a shared infinite list $p_1, p_2,p_3, \ldots $ of variables.
Assume $m$ is the maximal length of the quantifier blocks that appear in any of the translations, and let $V$ be the set of variables that appear free in some of them. \Wloss we may assume that $\{p_1, \ldots ,p_m\}$ and $V$ are disjoint. We  let $\theta_1$ be obtained from $\theta_0$ by replacing quantifiers $\exists$ and $\forall$ respectively with $\Diamond$ and $ \Box$.  It follows that $\Sigma \models \phi$ iff $\{\theta_1\mid \theta\in \Sigma\}\cup\{\tree{V,p_1, \ldots ,p_n}\}\models \phi_1$.\footnote{Notice that the direction from left to right does not hold  under the so-called strict team semantics where $\exists$ and $\Diamond$ range over individuals.  These two logics are not downwards closed and  the modal translation does not prevent the complete binary tree of having two distinct roots that agree on the variables in $V$. The same proviso applies to the translations of validity and satisfiability.}
For the validity problem, we observe that $\models \phi$ iff $\models \tree{V,p_1, \ldots ,p_n}^\bot \vee (\tree{V,p_1, \ldots ,p_n}\wedge \phi_1)$. Intuitively the formula on the right states that excluding those worlds which do not encode a complete binary tree that preserves $V$, the resulting worlds form a team that satisfies the encoding  $\phi_1$  of $\phi$. Furthermore, for the satisfiability problem we have that $\phi$ is satisfiable iff $\tree{V,p_1, \ldots ,p_n}\wedge \phi_1$ is. Since the reductions are clearly polynomial, this concludes the proof.
\end{proof}

\section{Lower Bound for $\PLDep$ Entailment}\label{sect:lower}
In this section we prove that the entailment problem  for $\PLDep$ is $\coNEXPTIME^{\NP}$-hard. This result is obtained by reducing from a variant of the  dependency quantified Boolean formula problem, which is, a complete problem for $\NEXPTIME$ extending the standard quantified Boolean formula problem by constraints.  The variant we present was introduced in \cite{HLKV16}, and it gives complete problems for different levels of the exponential  hierarchy. The following presentation follows mainly \cite{HLKV16}, deviating only in notation.

\begin{defi}\label{apudef}%[Dependency Quantified Boolean Formula]
%A  tuple $\tuple C=(
%\tuple c_1, \ldots ,\tuple c_m)$ where $\tuple c_i$ are tuples of propositional variables from $\{p_1,\ldots ,p_n\}$ is called a \emph{constraint} over $\{p_1,\ldots ,p_n\}$. 
 A \emph{dependency quantified Boolean formula} (DQBF) is a pair $(\phi,\tuple C)$ where $\tuple C=(
\tuple c_1, \ldots ,\tuple c_m)$ is a list of sequences of propositional variables from  $\{p_1, \ldots ,p_n\}$, and $\phi$ is a  formula of the form
\[ \exists f_1 \ldots \exists f_m \forall p_1 \ldots \forall p_n \theta\]
where $p_1, \ldots ,p_n$ are pairwise disjoint, $f_1, \ldots ,f_m$ are pairwise disjoint functional variables, and $\theta$ is a quantifier-free propositional formula in which only the quantified variables $p_i$ and functional variables $f_i$ with arguments $\tuple c_i$ may appear.  The DQBF instance $(\phi,\tuple C)$ is true if there are functions $f_i\colon \{0,1\}^{\lvert \tuple c_i\rvert}\to \{0,1\}$ such that $\theta$ is true for each assignment $s
\colon \{p_1, \ldots ,p_n\} \to \{0,1\}$. 
\end{defi}
%A DQBF formula $ \exists f_1 \ldots \exists f_m \forall p_1 \ldots \forall p_n\theta$ is 
The true  quantified Boolean formula problem, $\TRUE(\DQBF)$, is now the problem of determining the truth value of a given DQBF instance.

\begin{thmC}[\cite{Peterson2001}]
$\TRUE(\DQBF)$ is \NEXPTIME-complete.
\end{thmC}
We now generalize $\TRUE(\DQBF)$  by introducing alternation to the quantifier block for functions.\footnote{Apart from notational differences, the following definition is from \cite{HLKV16}. Here we use explicitly functional variables instead of proposition variables for quantification of functions. Furthermore, to improve the correspondence between syntax and semantics, the quantifier block $\forall p_1 \ldots \forall p_n$ for propositional variables is here written after the quantifier block for functional variables.}

%These variants are discussed in Sect. \ref{subsect:qbf} and in Sect. ?????? we prove the hardness result.
%\subsection{Generalized Quantified Boolean Formulae}\label{subsect:qbf}
 %The quantified Boolean formula problem is the standard $\PSPACE$-complete problem. %These generalizations will be  later used to show several lower bound results for the entailment problems of various propositional and modal dependence logics. %The basis for these problems is the dependency quantified Boolean formula problem, obtained by introducing additional constraints for quantification in Boolean formulae \cite{Peterson2001}.\
%We introduce a recent generalization the quantified Boolean formula problem in terms of alternating blocks of second-order quantifiers.

\begin{defiC}[\cite{HLKV16}]
A \emph{$\Sigma_k$-alternating dependency quantified Boolean formula} ($\SigmaQBF{k}$) is a pair $(\phi,\calC)$ where $\phi$ is an expression of the form %a closed alternating qBf of the type
\begin{align*}
&\phi \dfn  (\exists f^1_1\ldots  \exists f^1_{j_1}) \, (\forall f^2_1 \ldots \forall f^2_{j_2}) \,  (\exists f^3_1 \ldots \exists f^3_{j_3})\,\quad \ldots\quad (\, Q f^k_1 \ldots Q f^k_{j_k})\,\forall p_1 \ldots  \forall p_{n}  \,   \theta, 
\end{align*}
where $Q\in\{\exists,\forall\}$, $\calC=(\tuple c^1_1,\ldots,\tuple c^k_{j_k})$  is a list of sequences of  propositional variables from  $\{p_1,\dots,p_n\}$, and $\theta$ is a quantifier-free propositional formula in which only the quantified variables $p_i$ and functional variables $f^i_j$ with arguments $\tuple c^i_j$ may appear.
Analogously, a \emph{$\Pi_k$-alternating dependency quantified Boolean formula} ($\PiQBF{k}$) is a pair $(\phi,\calC)$ where $\phi$ is an expression of the form %closed alternating qBf of the type
\begin{align*}
&\phi \dfn  (\forall f^1_1\ldots  \forall f^1_{j_1}) \, (\exists f^2_1 \ldots \exists f^2_{j_2}) \,  (\forall f^3_1 \ldots \forall f^3_{j_3})\,\quad \ldots\quad (\, Q f^k_1 \ldots Q f^k_{j_k})\,\forall p_1 \ldots  \forall p_{n}  \,   \theta.
\end{align*}
%A \emph{simple alternating qBf} is a simple $\Sigma_i$-alternating or  $\Pi_i$-alternating qBf for some $i$.
The sequence $\calC$ is called the \emph{constraint} of $\phi$.
\end{defiC}

%Notice that $\TRUE(\Sigma_1\text{-}\ADQBF)$ corresponds to the dependency Boolean formula problem. 

%%\end{defi}
%$\SigmaQBF{k}$ and $\PiQBF{k}$  now generalize $\DQBF$ as follows. 
The truth condition of a $\SigmaQBF{k}$ or a $\PiQBF{k}$ instance is defined by a generalization of  that in Definition \ref{apudef} such that each $Q f^i_j$ where $Q\in \{\exists, \forall\}$ is interpreted as existential/universal quantification over functions $f^i_j\colon \{0,1\}^{\lvert \tuple c^i_j\rvert}\to \{0,1\}$.  Let us now denote the associated decision problems by $\TRUE(\Sigma_k\text{-}\ADQBF)$ and $\TRUE(\Pi_k\text{-}\ADQBF)$. These problems  characterize levels of the exponential hierarchy in the following way. 
%Using these and alternating Turing machines (see \cite{ChandraKS81})
%we conclude this section with two theorems describing the complexity of the exponential hierarchy. 
%Notice that $\SigmaE{k}$ and $\PiE{k}$ denote the $k$th levels of the exponential hierarchy, defined by $\SigmaE{0}:=\PiE{0}:=\EXPTIME$, and for $k\geq 1$ recursively by
%$\SigmaE{k}:=\NEXPTIME^{\SigmaP{k-1}}$ and $\PiE{k}:=\coNEXPTIME^{\SigmaP{k-1}}$. %Here $\SigmaP{k}$ and $\PiP{k}$ are the $k$th levels of the polynomial hierarchy, i.e., 
%Recall that $\SigmaP{0}:=\PiP{0}:=\PTIME$, and for $k\geq 1$,
%$\SigmaP{k}:=\NP^{\SigmaP{k-1}}$ and $\PiP{k}:=\coNP^{\SigmaP{k-1}}$.
%For the definition of an alternating Turing machine we refer the reader to \cite{ChandraKS81}.
%\begin{exa}[\cite{HLKV16}]The formula $\forall \tuple x (\U y \exists z) \neg y \leftrightarrow z$ under the constraint $(\{\tuple x\},\{\tuple x\})$ expresses that every $|\tuple x|$-ary Boolean function has a negation.
%\end{exa}
\begin{thmC}[\cite{HLKV16}]\label{thm:odd-k-dqbf-hardness}
Let $k \geq 1$. For odd $k$ the problem $\TRUE(\Sigma_k\text{-}\ADQBF)$ is $\SigmaE{k}$-complete. For even $k$ the problem $\TRUE(\Pi_k\text{-}\ADQBF)$ is $\PiE{k}$-complete.
    	%The problem $\TRUE(\ADQBF)$ is $\AEXPPOLY$-complete.
\end{thmC}
Since  $\TRUE(\Pi_2\text{-}\ADQBF)$ is   $\coNEXPTIME^{\NP}$-complete, we can show the lower bound via an reduction from it.
Notice that regarding the validity problem of $\PLDep$, we already have the following lower bound.
\begin{thmC}[\cite{Virtema14}]\label{thm:jonni}
The validity problem for $\PLDep$ is $\NEXPTIME$-complete, and for $\MLDep$ and $\EMLDep$ it is $\NEXPTIME$-hard.
\end{thmC}
This result was shown by a reduction from the dependency quantified Boolean formula problem (\ie $\TRUE(\Sigma_1\text{-}\ADQBF)$) to the validity problem of $\PLDep$. We use essentially the same technique to reduce from $\TRUE(\Pi_2\text{-}\ADQBF)$ to the entailment problem of $\PLDep$. 
\begin{thm}
\label{thm:entailpldep}
The entailment problem for $\PLDep$  is  $\coNEXPTIME^{\NP}$-hard.
\end{thm}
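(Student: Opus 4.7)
The plan is to reduce, in polynomial time, from $\TRUE(\PiQBF{2})$, which by Theorem \ref{thm:odd-k-dqbf-hardness} is $\coNEXPTIME^{\NP}$-complete. So fix an instance
\[
\phi \;=\; \forall f^1_1 \ldots \forall f^1_{j_1}\,\exists f^2_1 \ldots \exists f^2_{j_2}\,\forall p_1 \ldots \forall p_n\,\theta
\]
with constraint $\calC = (\tuple c^1_1, \ldots, \tuple c^2_{j_2})$. I will build a finite set $\Sigma$ of $\PLDep$ formulae and a $\PLDep$ formula $\psi$ such that $\Sigma \models \psi$ iff $\phi$ is true. The construction extends Virtema's $\TRUE(\SigmaQBF{1})$-to-$\VAL(\PLDep)$ reduction used in Theorem \ref{thm:jonni}: the novelty is that the outer universal block $\forall f^1$ is captured by moving the corresponding dependence atoms into the antecedent of the entailment, exploiting the fact that the universal team quantifier implicit in $\models$ lets us range over all encodings of the $f^1_i$'s.

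Concretely, I introduce fresh propositional variables $\tuple y^1 = (y^1_1, \ldots, y^1_{j_1})$ standing for the values $f^1_i(\tuple c^1_i)$ and, analogously, $\tuple y^2 = (y^2_1, \ldots, y^2_{j_2})$ for the values $f^2_j(\tuple c^2_j)$. Let $\theta^*$ be the purely propositional formula obtained from $\theta$ by replacing each $f^i_j(\tuple c^i_j)$ with the variable $y^i_j$. The antecedent $\Sigma$ contains the atoms $\dep{\tuple c^1_i, y^1_i}$ for $i=1,\dots,j_1$, which force every team satisfying $\Sigma$ to record a single choice of $f^1$; to these I add any auxiliary formulae from Virtema's construction (on either side of the entailment) used to guarantee that the admissible teams contain enough $p$-assignments to capture the $\forall p_1\ldots\forall p_n$ quantifier. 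The consequent $\psi$ is then essentially Virtema's $\PLDep$-encoding of the inner $\SigmaQBF{1}$ instance $\exists f^2\,\forall p\,\theta$: a combination of the dependence atoms $\dep{\tuple c^2_j, y^2_j}$ with the quantifier-free matrix $\theta^*$, arranged via the team-splitting disjunction so that $T \models \psi$ exactly when there is a Skolem assignment of the $y^2_j$'s, functional in $\tuple c^2_j$ over $T$, making every assignment in $T$ propositionally satisfy $\theta^*$.

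Correctness is then verified in two directions. If $\phi$ holds, let $T$ satisfy $\Sigma$. The dep atoms in $\Sigma$ provide a genuine $f^1$ realised on $T$; truth of $\phi$ supplies a matching $f^2$ such that $\theta$ is true for every $p$-assignment, and by flatness (Proposition \ref{prop:qpl_flatness}) this lifts to $T$ viewed as a team once $y^2_j$ is read off via the $f^2$-witness, giving $T\models\psi$. Conversely, from a failing $f^1$ for $\phi$ I build a canonical team $T^*$ over the $p$- and $y$-variables that respects all atoms of $\Sigma$ but on which no $f^2$-witness for $\psi$ exists; downward closure (Proposition \ref{prop:qpl_dc}) is what allows me to reduce failure of the $\exists f^2$-part of $\psi$ on $T^*$ to the genuine non-existence of a good $f^2$ for this $f^1$, so that $T^*\models\Sigma$ and $T^*\not\models\psi$.

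The main obstacle is making the quantifier alternation come out correctly using only the limited apparatus of $\PLDep$. In particular, the existentially chosen $f^2_j$ must be allowed to depend on the universally fixed $f^1$, as the $\Pi_2$ prefix demands, even though $\PLDep$ has no implication, no classical negation, and no propositional quantifiers of its own. The delicate step is to verify that the combination of team-splitting $\vee$, the dep atoms on the two sides of the entailment, and the richness-enforcing auxiliary formulae truly synchronises so that (i) no ``degenerate'' team satisfying $\Sigma$ (e.g.\ one missing some $p$-assignment under a given $y^1$-branch) can trivially discharge $\psi$ and falsely witness $\Sigma\models\psi$, and (ii) no $T$ can smuggle in a ``dishonest'' $f^2$ that was not permitted by the semantics of $\PiQBF{2}$. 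Once this is in place, the reduction is clearly polynomial in $|\phi|$, and the theorem follows.
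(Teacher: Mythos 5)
Your construction is essentially the paper's: the universally quantified Skolem functions become dependence atoms $\dep{\tuple c_i, q_i}$ placed in the antecedent $\Sigma$, the existentially quantified ones become dependence atoms disjoined with the matrix in the consequent $\psi := \theta\vee\bigvee_i\dep{\tuple c_i,q_i}$, and the universal quantification over teams implicit in $\models$ supplies the outer $\forall f^1$ block. So the architecture is the right one. There are, however, two points where the write-up goes astray, and the second is exactly the step you leave unverified.

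First, the ``auxiliary formulae \ldots used to guarantee that the admissible teams contain enough $p$-assignments'' neither exist nor can exist: $\PLDep$ is downward closed (Proposition \ref{prop:qpl_dc}), so no antecedent formula can force a team to realize all $p$-assignments (every subteam, including every singleton, of a team satisfying $\Sigma$ also satisfies $\Sigma$). Richness is obtained not syntactically but by \emph{choosing} the team in one direction of the correctness argument. Second, your worry (i) --- that a degenerate team might ``falsely witness $\Sigma\models\psi$'' --- misreads what has to be shown: since $\Sigma\models\psi$ is a universal statement over teams, degenerate teams that happen to satisfy both sides are harmless. The contrapositive direction only requires that for each falsifying $f^1$ there be \emph{one} team refuting the entailment, namely the full team over $p_1,\dots,p_n,q_{m+1},\dots,q_{m+m'}$ with $q_1,\dots,q_m$ set according to $f^1$; on that team the ``dishonest $f^2$'' issue is handled by taking the dependence-atom subteams $Y_i$ maximal and the $\theta$-subteam $Z$ disjoint from them (using downward closure), which forces $Z$ to be precisely the graph of a genuine $f^2$ over all $p$-assignments. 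In the converse direction one takes an \emph{arbitrary} team satisfying $\Sigma$, reads off any $f^1$ consistent with it, and splits it according to the $f^2$ supplied by the truth of $\phi$, using flatness (Proposition \ref{prop:qpl_flatness}); no richness is needed there. With these two observations your ``delicate step'' dissolves and no auxiliary formulae are required, so the reduction as the paper states it goes through.
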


\begin{proof}
By Theorem \ref{thm:odd-k-dqbf-hardness} it suffices to show a reduction from $\TRUE(\Pi_2\text{-}\ADQBF)$.
Let $(\phi,\calC)$ be an instance of $\Pi_2\text{-}\ADQBF$ in which case $\phi$ is of the form 
\[\forall f_1 \ldots \forall f_m \exists f_{m+1} \ldots \exists f_{m+m'}\forall p_1 \ldots \forall p_n  \theta\]
and $\calC$ lists tuples $\tuple c_i\sub \{p_1, \ldots ,p_n\}$, for $i=1, \ldots ,m+m'$. Let $q_i$ be a fresh propositional variable for each functional variable $f_i$. We define
 $\Sigma:=\{\dep{\tuple c_i,q_i}\mid i=1, \ldots ,m\}$ and  
\[\psi:=\theta' \vee \bigvee^{m+m'}_{i=m+1}\dep{\tuple c_i,q_i},\]
where $\theta'$ is obtained from $\theta$ by replacing  occurrences of $f_i(\tuple c_i)$ with $q_i$.
Clearly, $\Sigma$ and $\psi$ can be constructed from $(\phi,\calC)$ in polynomial time. It remains to show that 
 $ \Sigma \models \psi$ iff  $\phi$ is true. 
 
 Assume first that $\Sigma \models \psi$ and let $f_i\colon \{0,1\}^{|\tuple c_i|}\to\{0,1\}$ be arbitrary for $i=1, \ldots ,m$. Construct a team $X$ that consists of all assignments $s$ that map $p_1, \ldots ,p_n,q_{m+1}, \ldots ,q_{m+m'}$ into $\{0,1\}$ and $q_1, \ldots ,q_m$  respectively to $f_1(s(\tuple c_1)), \ldots ,f_m(s(\tuple c_m))$. Since $X\models \Sigma$ we find $Z,Y_1, \ldots ,Y_{m'}\sub X$ such that $Z\cup Y_1\cup \ldots \cup Y_{m'} =X$, $Z\models \theta' $, and $Y_i\models \dep{\tuple c_{m+i},q_{m+i}}$ for $i=1, \ldots ,m'$.  We may assume that each $Y_i$ is a maximal subset satisfying $\dep{\tuple c_{m+i},q_{m+i}}$, i.e., for all $s\in X\setminus Y_{i}$, $Y_i\cup\{s\}\not\models \dep{\tuple c_{m+i},q_{m+i}}$. Since $X$ takes only Boolean values, the complement $X\setminus Y_i$ is a maximal subset satisfying $\dep{\tuple c_{m+i},q_{m+i}}$,  too. By downward closure (Proposition \ref{prop:qpl_dc}) we may assume that $Z$ does not intersect any of the subsets $Y_1, \ldots ,Y_{m'}$. Consequently, $Z$ is a maximal subset that satisfies all $\dep{\tuple c_{m+i},q_{m+i}}$ for $i=1, \ldots ,m'$. It follows that there are functions $f_i\colon \{0,1\}^{|\tuple c_i|}\to \{0,1\}$, for $i=m+1, \ldots ,m+m'$, such that 
\begin{align*}Z= \{&s\big(f_{m+1}(s(\tuple c_{m+1}))/q_{m+1}, \ldots , 
  f_{m+m'}(s(\tuple c_{m+m'}))/q_{m+m'} \big)\mid s\in X\}.
\end{align*}
Notice that $Z$ is maximal with respect to $p_1, \ldots ,p_n$, i.e., $Z\upharpoonright \{p_1, \ldots ,p_n\}={}^{\{p_1, \ldots ,p_n\}}\{0,1\}$. Hence, by the flatness property (Proposition \ref{prop:qpl_flatness}), and since  $Z\models \theta'$, it follows that $\theta'$ holds for all values of $p_1, \ldots ,p_n$ and for the values of $q_1, \ldots ,q_{m+m'}$ chosen respectively according to $f_1, \ldots ,f_{m+m'}$. Therefore, $\phi$ is true which shows the direction from left to right.
 
 Assume then that $\phi$ is true, and let $X$ be a team satisfying $\Sigma$.  Then there are functions $f_i\colon \{0,1\}^{|\tuple c_i|}\to \{0,1\}$ such that $f(s(\tuple c_i))=s(q_i)$ for $s \in X$ and $i=1, \ldots ,m$. Since $\phi$ is true we find functions $f_i\colon \{0,1\}^{|\tuple c_i|}\to \{0,1\}$, for $i=m+1, \ldots ,m+m'$, such that for all $s\in X$:
\begin{align}\label{eq4}
s\big(f_{m+1}(s(\tuple c_{m+1}))/q_{m+1}, \ldots ,
f_{m+m'}(s(\tuple c_{m+m'}))/q_{m+m'}\big)\models \theta'.
\end{align}
Again, since $X$ is Boolean, it follows that $Y_i:=\{s\in X\mid s(q_i)\neq f(s(\tuple c_i))\}$ satisfies $\dep{\tuple c_i,q_i}$ for $i=m+1, \ldots ,m+m'$. Then it follows by \eqref{eq4} and flatness (Proposition \ref{prop:qpl_flatness}) that $X\setminus (Y_{m+1}\cup \ldots \cup Y_{m+m'}) $ satisfies $\theta'$. Therefore, $\Sigma \models \psi$  which concludes the direction from right to left.
\end{proof}

%%%%%%MODAL DEPENDENCE LOGIC

\section{Validity and Entailment in Modal and Propositional Dependence Logics}\label{sect:mldep}

We may now draw together the main results of Sections \ref{sect:upper} and \ref{sect:lower}. There it was shown that in terms of the entailment problem $\coNEXPTIME^{\NP}$ is both an upper bound for $\EMLDep$ and an lower bound for $\PLDep$. Therefore, we obtain in Theorem \ref{thm:entail} that for all the logics inbetween it is also the exact complexity bound. Furthermore, Theorem \ref{thm:reduction} implies that we can count $\QPLDep$ in this set of logics. 
%The proof of this uses standard reduction methods and is located in Appendix.
%By standard methods one can show that quantified propositional dependence logics reduce to modal dependence logics in terms of their entailment problems. For the proof, see Appendix.

%\begin{restatable}{thm}{aseit}
%\label{thm:reduction}
%The satisfiability, validity, and entailment problems  for  $\QPLDep$ are polynomial-time reducible to the satisfiability, validity, and entailment problems for $\MLDep$, respectively.
%\end{restatable}

\begin{thm}\label{thm:entail}
The entailment problem for $\EMLDep$, $\MLDep$, $\QPLDep$, and $\PLDep$ is \\$\coNEXPTIME^{\NP}$-complete.
\end{thm}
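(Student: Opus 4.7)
The plan is to assemble the theorem by sandwiching all four logics between the upper bound established for $\EMLDep$ in Theorem \ref{thm:entail_mldepup} and the lower bound established for $\PLDep$ in Theorem \ref{thm:entailpldep}. The key observation is that these four logics form a chain with respect to expressive power relevant to entailment: every $\PLDep$ formula is syntactically a $\MLDep$ formula (over Kripke models where the teams are images of propositional teams under the construction $X \mapsto (\calM_X,T_X)$ from Section \ref{sect:prop}), every $\MLDep$ formula is trivially an $\EMLDep$ formula, and by Theorem \ref{thm:reduction} $\QPLDep$ entailment is polynomial-time reducible to $\MLDep$ entailment.

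For the upper bound, I would argue as follows. Theorem \ref{thm:entail_mldepup} already gives $\EMLDep$-entailment in $\coNEXPTIME^{\NP}$, which immediately transfers to $\MLDep$ (same problem on a syntactic fragment) and to $\PLDep$ (the standard translation of a propositional team $X$ into the Kripke model $\calM_X$ with team $T_X$ is polynomial, so any $\PLDep$ entailment instance is re-expressible as an $\MLDep$ entailment instance). For $\QPLDep$, Theorem \ref{thm:reduction} provides the needed polynomial-time reduction to $\MLDep$-entailment, so membership in $\coNEXPTIME^{\NP}$ carries over.

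For the lower bound, Theorem \ref{thm:entailpldep} shows that $\PLDep$-entailment is already $\coNEXPTIME^{\NP}$-hard. Since $\PLDep$ is syntactically contained in $\MLDep$, $\EMLDep$, and $\QPLDep$, and the notion of entailment for a $\PLDep$ instance $\Sigma \models \phi$ coincides with the corresponding entailment under the richer semantics (using the correspondence $X\models\phi \Leftrightarrow \calM_X,T_X\models \phi$ for the modal case, and the trivial identification of propositional teams with free-variable assignment teams for the quantified case), the identity map is a valid polynomial-time reduction from $\PLDep$-entailment to each of the other three problems. Hence all four are $\coNEXPTIME^{\NP}$-hard.

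The only step that requires genuine care is the transfer of the lower bound from $\PLDep$ to $\MLDep$: one must check that an entailment instance $\Sigma \models \phi$ in $\PLDep$ holds in the propositional sense precisely when it holds in the modal sense, i.e., that we cannot obtain a counterexample by moving to a Kripke model whose team is not of the form $T_X$ for any propositional team $X$. This follows because a $\PLDep$ formula mentions no modal operators, so its truth in $\calM,T$ depends only on the set $\{w_{\calM}\mid w\in T\}$ of induced propositional valuations; equivalently one may invoke the locality of the propositional fragment. Once that is verified, the theorem follows by chaining the upper and lower bounds along the inclusions $\PLDep \subseteq \MLDep \subseteq \EMLDep$ and the reduction of Theorem \ref{thm:reduction} for $\QPLDep$.
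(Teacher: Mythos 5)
Your proposal is correct and follows essentially the same route as the paper: the upper bound comes from Theorem \ref{thm:entail_mldepup} together with the reduction of Theorem \ref{thm:reduction} (the paper routes $\PLDep$ through that reduction too, whereas you use the direct semantic inclusion into $\MLDep$, a cosmetic difference), and the lower bound for all four logics is inherited from Theorem \ref{thm:entailpldep}. Your extra care about why propositional and modal entailment coincide for modality-free formulae is a sound filling-in of a step the paper leaves implicit.
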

\begin{proof}
The upper bound   for $\EMLDep$ and $\MLDep$ was shown in Theorem \ref{thm:entail_mldepup}, and  by Theorem \ref{thm:reduction} the same upper bound applies to $\QPLDep$ and $\PLDep$. The lower bound for all of the logics comes from Theorem \ref{thm:entailpldep}.
\end{proof}
We also obtain that all the logics inbetween $\PLDep$ and $\EMLDep$ are $\NEXPTIME$-complete in terms of their validity problem. The proof arises analogously from Corollary \ref{cor:val_mldepup} and Theorem \ref{thm:jonni}.
\begin{thm}\label{cor:val_mldep}
The validity problem for $\EMLDep$, $\MLDep$, $\QPLDep$, and $\PLDep$ is   $\NEXPTIME$-complete.
\end{thm}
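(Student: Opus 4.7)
The plan is to derive this theorem by combining existing results rather than developing new machinery: Corollary \ref{cor:val_mldepup} supplies the $\NEXPTIME$ upper bound for $\EMLDep$ validity; Theorem \ref{thm:jonni} supplies the matching $\NEXPTIME$ lower bound for $\PLDep$, $\MLDep$, and $\EMLDep$; and Theorem \ref{thm:reduction} furnishes a polynomial-time reduction that pulls $\QPLDep$ into this bracket.

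For the upper bound, I would first observe that $\PLDep$ and $\MLDep$ are syntactic fragments of $\EMLDep$, so Corollary \ref{cor:val_mldepup} immediately bounds their validity problems by $\NEXPTIME$. The only point requiring a quick sanity check is that the $\PLDep$-validity question (quantification over propositional teams) agrees with the $\EMLDep$-validity question (quantification over all Kripke model/team pairs) on $\PLDep$-formulae; this follows directly from the definition $X \models \phi :\Leftrightarrow \calM_X, T_X \models \phi$ introduced in Section \ref{sect:prop}, together with the routine observation that every Kripke model/team pair $(\calM,T)$ induces a propositional team with the same satisfaction behaviour on modality-free formulae. For $\QPLDep$, I would then invoke Theorem \ref{thm:reduction}: composing its polynomial-time reduction of $\QPLDep$-validity to $\MLDep$-validity with the $\EMLDep$ upper bound just established yields $\QPLDep$-validity in $\NEXPTIME$ as well.

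For the lower bound, Theorem \ref{thm:jonni} already delivers $\NEXPTIME$-hardness for $\PLDep$, $\MLDep$, and $\EMLDep$ directly, so nothing new is required there. To cover $\QPLDep$, I would simply note that $\PLDep$ is a syntactic fragment of $\QPLDep$ and that the two logics share semantics on quantifier-free formulae; hence every $\PLDep$-validity instance is trivially a $\QPLDep$-validity instance of the same size, transferring hardness.

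I expect no real obstacle here, as the theorem is essentially a corollary of Corollary \ref{cor:val_mldepup}, Theorem \ref{thm:jonni}, and Theorem \ref{thm:reduction}. The only matter requiring care is the fragment embedding of $\PLDep$ into $\EMLDep$, where one must verify that \emph{validity}---not merely satisfiability---is preserved in both directions; once that semantic correspondence is in hand, the four complexity statements follow immediately.
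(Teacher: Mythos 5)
Your proposal matches the paper's argument exactly: the paper derives this theorem by combining the $\NEXPTIME$ upper bound of Corollary \ref{cor:val_mldepup}, the hardness of Theorem \ref{thm:jonni}, and the reduction of Theorem \ref{thm:reduction} for $\QPLDep$, just as you do. The extra care you take over the correspondence between propositional-team validity and Kripke-team validity for $\PLDep$ is a routine point the paper leaves implicit, and your treatment of it is correct.
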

%\begin{proof}
%Since the algorithm in the proof of Theorem \ref{thm:entail_mldep} involves no universal branching, given an empty assumption set, we obtain a $\NEXPTIME$ upper bound for the validity problem of $\EMLDep$ and  $\MLDep$. Also, by Proposition \ref{prop:qpl_locality} and \ref{prop:qpl_dc} a formula $\phi\in\QPLDep$ is valid iff $X_{\rm comp}\models \phi$, where $X_{\rm comp}$ is the complete binary team over $\Fr{\phi}$, i.e., $X_{\rm comp}:={}^{\Fr{\phi}}\{0,1\}$. Since this team is  exponential in $\phi$, and   $\QPLDep$  model checking can be done in non-deterministic polynomial time, we obtain a $\NEXPTIME$ upper bound for $\QPLDep$ and $\PLDep$.
%The lower bound for all of the logics comes from Theorem \ref{thm:jonni}.
%\end{proof}
Recall that this close correspondence between propositional and modal dependence logics only holds with respect to their entailment and validity problems. Satisfiability of propositional dependence logic is only $\NP$-complete whereas it is $\NEXPTIME$-complete for its modal variant. It is also worth noting that the proof of Theorem \ref{thm:entail_mldepup} gives rise to an alternative proof for the $\NEXPTIME$ upper bound for $\MLDep$ (and $\EMLDep$) satisfiability, originally proved in \cite{sevenster09b}. Moreover, the technique can be succesfully applied to $\MLDis$. %Hella et al. showed in \cite{HellaLSV14}  that $\EMLDep$ is exponentially more succint than $\MLDis$. Hence, in light of Theorems \ref{thm:entail_mldep} and \ref{cor:val_mldep} 
The following theorem entails that $\MLDis$ is no more complex than the ordinary modal logic.

\begin{thm}\label{thm:mldis}
The satisfiability, validity, and entailment problems for $\MLDis$ are $\PSPACE$-complete.
\end{thm}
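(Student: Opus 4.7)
The $\PSPACE$ lower bounds for all three problems follow immediately from Ladner's theorem: $\ML \subseteq \MLDis$ and standard modal satisfiability and validity are $\PSPACE$-complete, with entailment inheriting hardness via the observation that $\Sigma \models \phi$ iff $\bigwedge \Sigma \to \phi$ is valid in $\ML$. For the upper bounds, the decisive step is to establish that $\cvee$ commutes with every other $\MLDis$-connective: directly from the team semantics one verifies $(\phi_0 \cvee \phi_1) \wedge \psi \equiv (\phi_0 \wedge \psi) \cvee (\phi_1 \wedge \psi)$, $(\phi_0 \cvee \phi_1) \vee \psi \equiv (\phi_0 \vee \psi) \cvee (\phi_1 \vee \psi)$, $\Box(\phi_0 \cvee \phi_1) \equiv \Box \phi_0 \cvee \Box \phi_1$, and $\Diamond(\phi_0 \cvee \phi_1) \equiv \Diamond \phi_0 \cvee \Diamond \phi_1$. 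Iterating these distributivity laws, every $\MLDis$-formula $\phi$ is equivalent to $\bigcvee{I \in \mathcal{I}} \phi_I$ where each $\phi_I \in \ML$ has $|\phi_I| \leq |\phi|$ and each index $I$, selecting one disjunct per $\cvee$-occurrence, admits a polynomial-length bit encoding (although $|\mathcal{I}|$ is exponential in general). The algorithm will never materialize the whole disjunction.

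For satisfiability, $\phi$ is satisfiable iff some $\phi_I$ is $\ML$-satisfiable: guess $I$ nondeterministically and invoke Ladner's $\PSPACE$ algorithm, landing in $\PSPACE$. For validity, I would show that $\bigcvee{I} \phi_I$ is valid iff some $\phi_I$ is $\ML$-valid. The if-direction is immediate from flatness (Proposition~\ref{prop:ml_flatness}). For the only-if direction, assume that no $\phi_I$ is $\ML$-valid, pick for each $I$ a refuting pointed model $(\calM_I, w_I)$, let $\calM$ be their disjoint union and $T = \{w_I : I \in \mathcal{I}\}$; locality of $\ML$ gives $\calM, w_I \not\models_{\ML} \phi_I$, which flatness promotes to $\calM, T \not\models \phi_I$ for every $I$, so $\calM, T \not\models \bigcvee{I} \phi_I$. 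Again one guesses $I$ and calls Ladner's validity procedure, remaining in $\PSPACE$.

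For entailment $\Sigma \models \phi$, write each $\sigma \in \Sigma$ as $\bigcvee{I_\sigma} \sigma_{I_\sigma}$ and $\phi$ as $\bigcvee{J} \phi_J$. A team $T$ satisfies $\Sigma$ iff there is a simultaneous choice function $c$ with $T \models \sigma_{c(\sigma)}$ for every $\sigma$, whence $\Sigma \models \phi$ iff $\{\sigma_{c(\sigma)} : \sigma \in \Sigma\} \models \bigcvee{J} \phi_J$ for every $c$. Since the premises are now pure $\ML$-formulae, Proposition~\ref{prop:yang} rewrites this as: for every $c$ there exists $J$ with $\{\sigma_{c(\sigma)} : \sigma \in \Sigma\} \models \phi_J$, a classical $\ML$-entailment solvable in $\PSPACE$. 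The resulting $\forall_{\mathrm{poly}} \exists_{\mathrm{poly}} \PSPACE$ procedure collapses to $\PSPACE$. The main technical point is the commutativity of $\cvee$ combined with keeping every guessed index polynomially bounded; once these are in hand the rest reduces to standard modal reasoning.
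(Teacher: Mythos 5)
Your proof is correct and takes essentially the same approach as the paper: both rewrite each $\MLDis$-formula as an intuitionistic disjunction $\bigcvee{}\theta^f$ over polynomially encodable choices of one disjunct per $\cvee$-occurrence (you justify this via explicit distributivity laws, the paper leaves it as ``straightforward''), then apply Proposition~\ref{prop:yang} to turn entailment into a $\forall$-choice/$\exists$-choice wrapper around a plain $\ML$ problem decidable in $\PSPACE$. The validity and satisfiability cases are handled the same way in both arguments, so no further comment is needed.
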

\begin{proof}
The lower bound follows from the Flatness property of $\ML$ (Proposition \ref{prop:ml_flatness}) and the $\PSPACE$-hardness of satisfiability and validity problems for $\ML$ \cite{Ladner77}.
%Push negation to the atomic level.
  For the upper bound, it suffices to consider  the entailment problem. The other cases are analogous. Ananlogously to the proof of Lemma \ref{lem:apu} (see also Theorem 5.2 in \cite{Virtema14}) we reduce $\MLDis$ formulae to large disjunctions with the help of appropriate witness sequences. Let $\theta$ be an $\MLDis$ formula that has $m$ $\cvee$-disjunctions. Given a sequence  $\tuple s=(s_1, \ldots ,s_m)\in \{0,1\}^{m}$ we determine top-down recursively an $\ML$ formula $\eta^{\tuple s}$ for each subformula $\eta$ of $\theta$ in the following way. We let $(\eta_0\cvee \eta_1)^{\tuple s}:= \eta^{\tuple s}_{s_i}$ where $(\eta_0\cvee \eta_1)$ is the $i$th $\cvee$-disjunction of $\theta$ in some underlying ordering. Otherwise, we let $(\Box\eta)^{\tuple s}:=\Box \eta^{\tuple s}$, $(\Diamond \eta)^{\tuple s} := \Diamond \eta^{\tuple s}$, $(\eta_0\wedge \eta_1)^{\tuple s}:= \eta_0^{\tuple s} \wedge \eta_1^{\tuple s}$,  $(\eta_0\vee \eta_1)^{\tuple s} := \eta_0^{\tuple s} \vee \eta_1^{\tuple s}$, $p^{\tuple s}:=p$, and $(\neg p)^{\tuple s}:=\neg p$. It is straightforward to show that $\theta$ is  equivalent to $\bigcvee{\tuple s} \theta^{\tuple s}$.

 % For an $\MLDis$-formula   $\theta$, denote by  $F_{\theta}$  the set of all  functions  that map subformulae  $\alpha \cvee \beta$  of $\theta$ to either $\alpha$ or $\beta$. For each $f\in F_{\theta}$,  we then denote by $\theta^f$  the formula obtained from $\theta$ by replacing each subformula of the form $\alpha \cvee \beta$ with $f(\alpha \cvee \beta)$. It is straightforward to show that $\theta$ is equivalent to $\bigcvee{f\in F_{\theta}} \theta^f.$
Let now $\phi_1, \ldots ,\phi_n$ be a sequence of $\MLDis$ formulae. Analogously to the proof of Lemma \ref{lem:apu} we can show using Proposition \ref{prop:yang} that $\{\phi_1, \ldots ,\phi_{n-1}\}\models \phi_n$ iff 
for all appropriate sequences $ \tuple s_1, \ldots , \tuple s_{n-1}$ there is $ \tuple s_n$ such that $\{ \phi_1^{\tuple s_1}, \ldots , \phi_{n-1}^{\tuple s_{n-1}}\}\models \phi_n^{\tuple s_n}$. Notice that the number of $\cvee$-disjunctions appearing in $\phi_1, \ldots ,\phi_n$ is polynomial, and hence the sequence $\tuple s_1\ldots \tuple s_n$ is polynomial. It follows that the decision procedure presented in the proof of Theorem \ref{thm:entail_mldepup} can be now implemented in polynomial space. %Hence, the entailment problem for $\MLDis$ is in $\PSPACE$. 
We immediately obtain the $\PSPACE$ upper bound for validity. For satisfiability of an $\MLDis$ formula $\phi$, notice that the equivalent disjunction $\bigcvee{\tuple s} \phi^{\tuple s}$ is satisfiable iff $\phi^{\tuple s}$ is satisfiable for some sequence $\tuple s$. Since satisfiability of an $\ML$ formula $\phi^{\tuple s}$ can be determined in $\PSPACE$, it follows that satisfiability of $\phi$ can be determined in $\PSPACE$, too. This concludes the proof.
\end{proof}
Combining the proofs of Theorem \ref{thm:entail_mldepup} and Theorem \ref{thm:mldis} we also notice that satifiability, validity, and entailment can be decided in $\PSPACE$ for $\EMLDep$ formulae whose dependence atoms are of \emph{logarithmic length}.

%
%\begin{table*}[t]
%\begin{center}
%\scalebox{.8}{
%\begin{tabular}{llll}
%\toprule
%	 &  satisfiability& validity & entailment  \\\midrule
%	 $\PL$ & $\NP$ \cite{Cook71,Levin73} & $\coNP$ \cite{Cook71,Levin73} & $\coNP$ \cite{Cook71,Levin73} \\
%	  	$\ML$ &$\PSPACE$ \cite{Ladner77} & $\PSPACE$ \cite{Ladner77} & $\PSPACE$ \cite{Ladner77} \\
%	  $\MLDis$ & $\PSPACE $ \cite{sevenster09b} & $\PSPACE$ [Thm. \ref{thm:mldis}]  & $\PSPACE$ [Thm. \ref{thm:mldis}] \\
%	$\PLDep$  & $\NP$ \cite{LohmannV13} &$\NEXPTIME$ \cite{Virtema14} & $\coNEXPTIME^{\NP}$ [Thm. \ref{thm:entail}] \\
 %	$\QPLDep,\MLDep,\EMLDep$ & $\NEXPTIME$ \cite{Peterson2001,sevenster09b}, [Thm. \ref{thm:reduction}] & $\NEXPTIME$ [Thm. \ref{cor:val_mldep}]&  $\coNEXPTIME^{\NP}$ [Thm. \ref{thm:entail}]  \\
% \bottomrule
%\end{tabular}}
%\vspace{0ex}
%\caption{Summary of results. The stated complexity classes refer to completeness results.}\label{newresults}
%\end{center}
%\vspace*{-1ex}%
%\end{table*}

%%%%MODAL INDEPENDENCE LOGIC
\section{Validity and Entailment in Modal and Quantified Propositional Independence Logics}\label{sect:mlind}
Next we turn to quantified propositional logic extended with either independence or inclusion atoms. We start in this section by considering the logic $\QPLInd$ and show that the complexity of its validity and entailment are both $\coNEXPTIME^{\NP}$-complete. The upper bound is a simple adaptation of the standard model checking algorithm for team-based logics. The lower bound is shown by reducing from $\TRUE(\Pi_2\text{-}\ADQBF)$ to the validity problem of quantified propositional logic extended with both dependence and inclusion atoms. Then using Galliani's translation \cite{galliani12} from dependence and inclusion logic to independence logic the result follows. Let us start with the upper bound result.

\begin{lem}\label{lem:qplind_upper}
The entailment problem for $\QPLInd$ is in $\coNEXPTIME^{\NP}$.
\end{lem}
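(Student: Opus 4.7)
The plan is to place the negation $\Sigma \not\models \phi$ into the alternating class $\SigmaE{2}$ via Theorem~\ref{thm:alternation}; closure of alternating classes under complementation then yields membership of the entailment problem in $\PiE{2} = \coNEXPTIME^{\NP}$.

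First I would invoke Lemma~\ref{normal form} to rewrite each formula in $\Sigma \cup \{\phi\}$ into prenex form $Q_1 p_1 \ldots Q_k p_k \theta$ with $\theta$ quantifier-free, and use Proposition~\ref{prop:qpl_locality} to align them over a common free-variable list $V$ of size polynomial in the input. Any candidate team $X$ over $V$ then has at most $2^{|V|}$ assignments, which is exponential in the input size.

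The core of the argument adapts the standard model-checking characterisation of team semantics: $X$ satisfies a prenex $\QPLInd$ formula iff there exist Skolem functions witnessing the existential propositional quantifiers, together with splitting functions that, at every disjunction $\alpha \vee \beta$ occurring in $\theta$ and every assignment of the relevant sub-team, designate the disjunct responsible for it, in such a way that every leaf literal and every independence atom $\indep{\tuple p}{\tuple q}{\tuple r}$ holds on its sub-team. Each such witness is of at most exponential size, and each leaf condition (a literal on an exponential team, or an independence constraint verified by a combinatorial comparison of the assignments in the team) is checkable in time polynomial in the unfolded team, and thus in exponential time overall.

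Consequently $\Sigma \not\models \phi$ is equivalent to the existence of a team $X$ together with a family $\vec{f}_{\psi}$ of witnesses for every $\psi \in \Sigma$ such that, for \emph{every} candidate family $\vec{g}$ of witnesses for $\phi$, at least one leaf check for $\phi$ fails. The matching alternating algorithm existentially guesses $X$ and the family $(\vec{f}_{\psi})_{\psi \in \Sigma}$, universally guesses $\vec{g}$, and then deterministically verifies in exponential time that all leaf checks for $\Sigma$ succeed while some leaf check for $\phi$ fails. This is one alternation starting from an existential state, so by Theorem~\ref{thm:alternation} the problem lies in $\SigmaE{2}$, whence entailment lies in $\PiE{2} = \coNEXPTIME^{\NP}$. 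The main obstacle will be making the Skolemisation for the universal side genuinely uniform: one must index the splitting and Skolem witnesses by the polynomially many syntactic positions of $\phi$, and present each as a function on the appropriate sub-team domain, so that no hidden alternation slips in through nested splitting disjunctions beneath the prenex prefix.
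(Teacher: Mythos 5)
Your proposal is correct and follows essentially the same route as the paper: both arguments guess a team over the free variables (exponential size, justified by locality), bundle all existential choices of the team semantics into explicit witness functions so that a single alternation suffices, and verify the leaf conditions deterministically in exponential time before invoking Theorem~\ref{thm:alternation}. The only differences are presentational — you work with the complement in $\SigmaE{2}$ and pass through prenex normal form, whereas the paper directly gives a $\PiE{2}$ algorithm by running the generic nondeterministic model-checking procedure for the conclusion and its fully universal dual for the premises — and neither difference changes the substance.
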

  
  \begin{proof}
By Theorem \ref{thm:alternation} it suffices to describe an exponential-time alternating algorithm that  recognizes the entailment problem for $\QPLInd$ and switches once from an universal to an existential state. This algorithm utilizes the standard model checking algorithm for team-based logics (see, e.g., \cite{ebbing12}), adapted for $\QPLInd$ in Algorithm \ref{alg:qplind_entail}. The input for Algorithm \ref{alg:qplind_entail} is a propositional team $X$ and a $\QPLInd$ formula $\phi$, and the algorithm returns $1$ for $\texttt{MC}(X,\phi)$ iff $X$ satisfies $\phi$. Note that the running time for  $\texttt{MC}(X,\phi)$ is not bounded by a polynomial due to possible quantification in $\phi$. Instead, the procedure takes time $f(|X|)g(|\phi |)$ for some polynomial function $f$ and an exponential function $g$, and hence $\texttt{MC}(X,\phi)$ can be determined in $\NEXPTIME$.

Let us denote by $\texttt{co-MC}(X,\phi)$ the outcome of the non-deterministic algorithm obtained from Algorithm \ref{alg:qplind_entail} by replacing existential selections with universal ones. This algorithm returns  $0$ for $\texttt{co-MC}(X,\phi)$ iff $X$ does not satisfy $\phi$, and hence decides the complement of model checking in $\coNEXPTIME$. Given a sequence $\phi_1, \ldots ,\phi_n$ of $\QPLInd$ formulae, we can now determine whether $\{\phi_1, \ldots ,\phi_{n-1}\}$ entails $ \phi_n$ in the following way. % decides non-deterministically whether $X$ satisfies $\phi$.
%The basis of this procedure is the standard model checking algorithm for team-based logics arising directly from the definition of team semantics. For $\QPLInd$ the input is a propositional team $X$ and a $\QPLInd$-formula $\phi$, and the algorithm decides non-deterministically whether $X$ satisfies $\phi$. Notice that non-determinism is only needed when considering disjunctive or existentially quantified formulae. For the details of the algorithm, we refer the reader to \cite{ebbing12}. Let us denote this model checking procedure for $X$ and $\phi$ by $\texttt{MC}(X,\phi)$.  Notice that the running time of $\texttt{MC}(X,\phi)$ is not bounded by a polynomial due to possible quantification in $\phi$. Instead, the procedure takes time $f(|X|)g(|\phi |)$ for some polynomial function $f$ and an exponential function $g$.
%Let us denote by $\texttt{MC$^*$}(X,\phi)$ the non-deterministic algorithm obtained from $\texttt{MC}(X,\phi)$ by replacing existential guesses with universal ones.
%Based on $\texttt{MC}(X,\phi)$ and $\texttt{MC$^*$}(X,\phi)$ we now present the decision procedure for $\QPLInd$-entailment. Assume we are given a sequence of $\QPLInd$-formulae $\phi_1, \ldots ,\phi_n$, and the question is to determine whether $\{\phi_1, \ldots ,\phi_{n-1}\}\models \phi_n$. The procedure runs as follows. 
 First universally guess a team $X$ over variables that occur free in some $\phi_1, \ldots ,\phi_n$. Then for $i=1, \ldots ,n-1$, %run $\texttt{MC$^*$}(X,\phi_i)$. If 
 if $\texttt{co-MC}(X,\phi_i)$ is $0$, then return true. If $\texttt{co-MC}(X,\phi_i)$ is $1$ and $i< n-1$, then move to $i+1$. Otherwise, if $\texttt{co-MC}(X,\phi_i)$ is $1$ and $i= n-1$, then switch to existential state and return true iff  $\texttt{MC}(X,\phi_n)$ is $1$.  It is straightforward to check that the described algorithm returns true iff $\{\phi_1, \ldots ,\phi_{n-1}\}\models \phi_n$. %Also, notice that the universally guessed team $X$ has possibly exponential size, and the 
Since the algorithm alternates once from  universal  to  existential state and runs in %Since $\texttt{MC}(X,\phi)$ and $\texttt{co-MC}(X,\phi)$ both run 
in exponential time, % $f(|X|)g(|\phi |)$, for some polynomial function $f$ and an exponential function $g$,  
 it follows that the procedure is in $\PiE{2}$.
\end{proof}

 \begin{algorithm}\label{alg:qplind_entail}
 \caption{A non-deterministic model checking algorithm for $\QPLInd$}
  \SetAlgoLined
  \LinesNumbered
  \SetKwInOut{Input}{Input}\SetKwInOut{Output}{Output}
  \SetKwFunction{KwFn}{MC}
  \Input{$(X,\phi)$ where $X$ is a propositional team over $\Fr{\phi}$ and $\phi\in \QPLInd$}
  \Output{\KwFn{$X,\phi$}}
  \BlankLine		
  		\uIf{$\phi = \exists p \psi$}{
			\textbf{existentially choose} $F:X\to \{\{0,1\},\{0\},\{1\}\}$ and
  			\Return \KwFn{$X[F/p],\psi$}\;
		}
		\uElseIf{$\phi = \forall p \psi$}{
				\Return \KwFn{$X[\{0,1\}/p],\psi$}\;		
		}
		\uElseIf{$\phi = \psi \vee \theta$}{
			\textbf{existentially choose} $Y,Z\sub X$ such that $Y\cup Z =X$ and
			\Return  \KwFn{$Y,\psi$} $\wedge$ \KwFn{$Z,\theta$}\;
		}
		\uElseIf{$\phi = \psi \wedge \theta$}{
			\Return  \KwFn{$X,\psi$} $\wedge$ \KwFn{$X,\theta$}\;
		}
		\uElseIf{$\phi$ is an atom}{
			\Return  $1$ iff $X\models \phi$\;
			}
	%	\uElseIf{$\phi = \neg p$}{
%			\Return  $1$ iff $s(p)=0$ for all $s\in X$\;
%			}
%		\ElseIf{$\indep\tuple p \sub \tuple q$}{
%			$Y\gets X$\;
%			\While{$Y\neq \{s\in Y \mid s(\tuple p)\sub Y(\tuple q)\}$}{
%				$Y\gets \{s\in Y \mid s(\tuple p)\sub Y(\tuple q)\}$\;
%			}
%			\Return $Y$
%		}	
\end{algorithm}

For the lower bound, we apply the fact that dependence atoms as well as inclusion atoms can be defined in independence logic. A translation for inclusion atoms can be given as follows.
\begin{thmC}[\cite{galliani12}]
The inclusion atom $\tuple p\sub \tuple q$ is equivalent to
\begin{align*}
\phi:=&\forall v_1\forall v_2 \forall \tuple r\big ( (\tuple r \neq \tuple p\wedge \tuple r \neq \tuple q)\vee ( v_1\neq v_2\wedge \tuple r\neq \tuple q) \vee \\
&((v_1=v_2\vee\tuple r=\tuple q)\wedge \indep{\emptyset}{\tuple r}{v_1v_2})\big ).
\end{align*}
\end{thmC}
The above theorem was shown in the first-order inclusion and independence logic setting but can be applied to the quantified propositional setting too since $\phi$ and $\tuple p \sub \tuple q$ are satisfied by a binary team $X$ in the quantified propositional setting iff they are satisfied by $X$ and the structure $\{0,1\}$ in the first-order setting. The lower bound can be now shown by a reduction from  $\TRUE(\Pi_2\text{-}\ADQBF)$ to validity of quantified propositional logic extended with dependence and inclusion atoms. %The proof is similar to that of Theorem \ref{thm:entailpldep}.

\begin{restatable}{lem}{akasi}
\label{lem:qplind_lower}
The validity problem for $\QPLInd$ is $\coNEXPTIME^{\NP}$-hard.
\end{restatable}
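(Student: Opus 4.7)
The plan is to prove the lower bound by a polynomial-time reduction from $\TRUE(\PiQBF{2})$, which is $\coNEXPTIME^{\NP}$-complete by Theorem~\ref{thm:odd-k-dqbf-hardness}. The reduction has two stages: first from $\TRUE(\PiQBF{2})$ to validity of a formula in quantified propositional logic extended with \emph{both} dependence and inclusion atoms, and then to $\QPLInd$-validity by replacing the non-independence atoms with equivalent independence-atom formulas via Galliani's translation stated just above the lemma (together with the observation that $\dep{\tuple c,q}$ is equivalent to $\indep{\tuple c}{q}{q}$).

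For the first stage, given an instance
\[
\phi = \forall f_1 \ldots \forall f_m \, \exists f_{m+1} \ldots \exists f_{m+m'} \, \forall p_1 \ldots \forall p_n \, \theta
\]
with constraints $\tuple c_1, \ldots, \tuple c_{m+m'}$, I will construct in polynomial time a formula $\Xi$ that is valid iff $\phi$ is true. I adapt the construction underlying Theorem~\ref{thm:entailpldep}, but recast it as a single validity statement. Fresh propositional variables $q_1, \ldots, q_{m+m'}$ represent the Skolem-function values, and dependence atoms $\dep{\tuple c_i, q_i}$ enforce functionality. The $\forall f_i$ block ($i\le m$) is encoded by a universal propositional quantifier prefix $\forall q_1 \ldots \forall q_m$ together with a disjunction that separates the universally generated team into a ``functional'' subteam witnessing some total function and a ``slack'' part; inclusion atoms enforce that the functional subteam actually projects onto every valuation of $\tuple c_i$, so that no choice of function is missed. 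The $\exists f_i$ block ($i>m$) is encoded by existential quantifiers $\exists q_{m+1}\ldots \exists q_{m+m'}$ together with the dependence atoms $\dep{\tuple c_i,q_i}$, exactly as in the entailment reduction. The quantifier-free body $\theta$ is then evaluated on the functional subteam using flatness (Proposition~\ref{prop:qpl_flatness}).

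The main obstacle is the construction in this first stage. In Theorem~\ref{thm:entailpldep}, universality over the functions $f_1, \ldots, f_m$ arises for free from the universal quantification over teams implicit in $\models$. For validity, however, the outermost universality is already ``spent'' on the team variable, so universality over functions must be simulated \emph{inside} the formula. Inclusion atoms are essential here because they express non-downward-closed constraints capable of forcing a selected subteam to represent a \emph{total} function, something unavailable to $\PLDep$ alone on account of the downward closure of dependence atoms. Once $\Xi$ is written down, correctness follows from splitting-and-flattening arguments parallel to those in the proof of Theorem~\ref{thm:entailpldep}, and the second stage then yields the required $\QPLInd$-validity instance of size polynomial in $|\phi|+|\calC|$.
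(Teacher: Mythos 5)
Your high-level skeleton coincides with the paper's: reduce from $\TRUE(\PiQBF{2})$ to validity of a quantified propositional formula using both dependence and inclusion atoms, then eliminate those atoms via Galliani's translation and $\dep{\tuple p,q}\equiv\indep{\tuple p}{q}{q}$. But the part you yourself identify as ``the main obstacle'' is exactly the part that is missing, and the mechanism you sketch for it would not work. You propose to encode the universal function block by a prefix $\forall q_1\ldots\forall q_m$ followed by a disjunctive split that carves out a ``functional'' subteam. A split in team semantics is an \emph{existential} choice of subteam, so this selects \emph{some} total functions $f_1,\ldots,f_m$ rather than ranging over all of them; the resulting formula would express a $\Sigma_1$-type statement. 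If instead you drop the selection and evaluate the body directly on the duplicated team, $\forall q_i$ gives universal quantification over propositional values, which is strictly stronger than universal quantification over functions (a function must return the same value on equal arguments $\tuple c_i$ across the team, a free propositional value need not). Either way the $\forall f_1\ldots\forall f_m$ block is not captured.

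Your diagnosis that ``the outermost universality is already spent on the team variable'' is also the opposite of what the paper exploits. In the actual proof $q_1,\ldots,q_m$ are left \emph{free}, and the universality over $f_1,\ldots,f_m$ still comes from the universal quantification over teams inherent in validity: for each choice of functions one instantiates the team encoding them. What must be simulated inside the formula is not universality but two other things: (i) an arbitrary team need not encode functions at all, so the formula must first restrict to a subteam satisfying $\dep{\tuple c_i,q_i}$ for $i\le m$; and (ii) that restriction must be forced to be \emph{maximal}, so that on a team which genuinely encodes $f_1,\ldots,f_m$ the formula cannot cheat by discarding assignments and effectively answering for different functions. This is achieved by the gadget
$\exists r_i\bigl(\dep{\tuple c_i q_i,r_i}\wedge\dep{\tuple c_i r_i,q_i}\wedge\forall r'_i(\neg r'_i\vee(r'_i\wedge \tuple c_i r'_i\sub \tuple c_i r_i))\wedge(\neg r_i\vee(r_i\wedge\psi_{i+1}))\bigr)$,
iterated for $i=1,\ldots,m$, with the inclusion atom supplying the maximality; the existential function block is then handled by $\theta\vee\bigvee_{i>m}\dep{\tuple c_i,q_i}$ as in the entailment reduction. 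Without this gadget, and with the quantifier direction inverted as in your sketch, the reduction does not establish the claimed hardness.
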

\begin{proof}
We reduce from $\TRUE(\Pi_2\text{-}\ADQBF)$ which is $\coNEXPTIME^{\NP}$-hard by Theorem \ref{thm:odd-k-dqbf-hardness}. Let $(\phi,\calC)$ be an instance of $\Pi_2\text{-}\ADQBF$. Then $\phi$ is of the form 

\[ \forall f_1 \ldots \forall f_m \exists f_{m+1} \ldots \exists f_{m+m'} \forall p_1 \ldots \forall p_n \theta\]
and $\calC$ lists tuples $\tuple c_i$ of elements from $ \{p_1, \ldots ,p_n\}$, for $i=1, \ldots ,m+m'$.
We show how to construct in polynomial time from $\phi$ a  $\QPLInd$ formula $\psi$  such that $ \psi$ is valid iff  $\phi$ is true. The free variables of $\psi$ consist of $p_i$, and fresh variables  $q_i$  that replace $f_i(\tuple c_i)$.
%Notice that all the variables $ p_1 ,\ldots , p_n, q_1 , \ldots , q_{m+m'} $ will appear free in $\psi$. 
We construct $\psi$ such that $X\models \psi$ iff one can select from $X$ a maximal subteam $Y\sub X$ that satisfies $\dep{\tuple c_i,q_i}$, for $i=1, \ldots ,m$, and furthermore from $Y$ a subteam $Z\sub Y$ that satisfies $\theta$ and $\dep{\tuple c_i,q_i}$, for $i=m+1, \ldots ,m+m'$.

First, let us construct top-down recursively formulae $\psi_i$, for $i=1, \ldots ,m$, as follows:
\begin{align}
\psi_i:=~ &\exists r_i\big(\dep{\tuple c_i q_i,r_i}\wedge \dep{\tuple c_i r_i,q_i} \wedge \label{eq3}\\
&\forall r'_i(\neg r'_i \vee (r'_i \wedge \tuple c_i r'_i\sub \tuple c_i r_i)) \wedge (\neg r_i \vee (r_i\wedge \psi_{i+1}))\big).\nonumber
\end{align}
We use $\dep{\tuple p,q}$ as a shorthand for $\indep{\tuple p}{q}{q}$. For a team $X$ \eqref{eq3} expresses %a minimal purification of a team $X$ from violations of $\dep{\tuple c_i,q_i}$, or in other words, %, for $i=1, \ldots ,m$. 
 %That is, '
% The formula in \eqref{eq3} expresses 
 that some  maximal subteam $Y\sub X$ satisfying $\dep{\tuple c_i,q_i}$ also satisfies $\psi_{i+1}$. 
The first three conjuncts of \eqref{eq3} entail that the subteam where $r_i$ takes value $1$ is a maximal subteam satisfying $\dep{\tuple c_i,q_i}$. The last conjunct indicates that this subteam  satisfies $\psi_{i+1}$. Consequently, if $X$ already satisfies $\dep{\tuple c_i,q_i}$, then  \eqref{eq3} entails that $X$ satisfies $\psi_{i+1}$. Define then
\[\psi_{m+1}:=\theta' \vee \bigvee_{i=m+1}^{m+m'} \dep{\tuple c_i,q_i},\]
where $\theta'$ is obtained from $\theta$ by replacing  occurrences of $f_i(\tuple c_i)$ with $q_i$.
This disjunction amounts to the existential selection of the functions $\tuple c_i \mapsto q_i$ for $i=m+1, \ldots ,m+m'$. 

We now claim that $\psi:= \psi_1$ is valid iff $\phi$ is true. Assume first that $\psi$ is valid, and let $f_i$ be any function from $\{0,1\}^{|\tuple c_i|}\to \{0,1\}$ for $i=1, \ldots ,m$. Let $X$ be the team that consists of all assignments $s$ that map $p_1, \ldots ,p_n, q_{m+1}, \ldots ,q_{m+m'}$ into $\{0,1\}$ and $q_1, \ldots ,q_m$ to $f(s(\tuple c_1)), \ldots, f(s(\tuple c_m))$. By the assumption $X\models \psi$. Hence, we find $F_1\colon X\to \calP(X)\setminus \{\emptyset\}$ such that
\begin{align}
X[F_1/r_1] \models~& \dep{\tuple c_1 q_1,r_1}\wedge \dep{\tuple c_1 r_1,q_1} \wedge\label{eq2} \\
&\forall r'_1(\neg r'_1 \vee (r'_1 \wedge \tuple c_1 r'_1\sub \tuple c_1 r_1)) \wedge (\neg r_1 \vee (r_1\wedge \psi_{2})).\nonumber
\end{align}
Let $X':=X[F_1/r_1][1/r'_1]$. Then $X'\models \dep{\tuple c_1, q_1}$ by the construction and $X'\models \dep{\tuple c_1 q_1, r_1}$ by \eqref{eq2}, and hence $X'\models \dep{\tuple c_1, r_1}$.  Also by the third conjunct of \eqref{eq2} $X'\models  \tuple  c_1 r'_1\sub \tuple c_1 r_1$. Therefore, it cannot be the case that that $s(r_1)=0$ for some $s\in X'$, and hence by the last conjunct of \eqref{eq2} and Proposition \eqref{prop:qpl_locality}, $X[1/r_1]\models \psi_2$. After $n$ iterations we obtain that $X[1/r_1]\ldots [1/r_n]\models \psi_{m+1}$ which implies by Proposition \eqref{prop:qpl_locality} that $X\models \psi_{m+1}$. Hence, there are $Z,Y_1, \ldots ,Y_{m'}\sub X$ such that $Z\cup Y_1\cup \ldots \cup Y_{m'} =X$, $Z\models \theta' $, and $Y_i\models \dep{\tuple c_{m+i},q_{m+i}}$ for $i=1, \ldots ,m'$. Notice that we are now at the same position as in the proof of Theorem \ref{thm:entailpldep}. Hence, we obtain that $\phi$ is true and that the direction from left to right holds.

  %We may assume that each $Y_i$ is a maximal subset satisfying $\dep{\tuple c_{m+i},q_{m+i}}$, i.e., for all $s\in X\setminus Y_{i}$, $Y_i\cup\{s\}\not\models \dep{\tuple c_{m+i},q_{m+i}}$. By Proposition \ref{prop:downward} we may assume that $Z$ does not intersect any of the subsets $Y_1, \ldots ,Y_{m'}$. It follows that there are functions $f_i:\{0,1\}^{|\tuple c_i|}\to \{0,1\}$, for $i=m+1, \ldots ,m+m'$, such that 
%\[Z= \{s[f_{m+1}(s(\tuple c_{m+1}))/q_{m+1}, \ldots ,   f_{m+m'}(s(\tuple c_{m+m'}))/q_{m+m'} ]: s\in X\}.\]
% By Proposition \ref{prop:flatness} and since  $Z\models \theta$ it follows that $\theta$ holds for all values of $p_1, \ldots ,p_n$ and values of $q_1, \ldots ,q_{m+m'}$ where the latter are chosen respectively according to $f_1, \ldots ,f_{m+m'}$. Hence, $\phi$ is true which shows the direction from left to right.

Assume then that $\phi$ is true, and let $X$ be any team whose domain contains  variables 
 $p_1, \ldots ,p_n,q_1, \ldots ,q_{m+m'}$. Considering $\psi_i$, we may choose a mapping $F_i$ so that for each value $\tuple b\in X(\tuple c_i)$, we select $l \in\{0,1\}$ such that  $\tuple bl\in X(\tuple c_iq_i)$, and set $F_i(s)=1$ iff $s(\tuple c_i q_i)= \tuple b l$. This amounts to a selection of a maximal subteam satisfying $\dep{\tuple c_i, q_i}$. 
 %either all  $s\in X$ such that $s(\tuple c_i q_i)= \tuple b 0$ or all $s'\in X$  such that $s'(\tuple c_i q_i)=\tuple b 1$ are mapped to $1$. The remaining assignments are mapped to $ 0$. 
 It follows that $X[F_i/r_i]$ satisfies the first three conjuncts of \eqref{eq3} and that  $\{s\in X[F_i/r_i]\mid F_i(s) =0\}$ satisfies $\neg r_i$. It remains to show that $\{s\in X[F_i/r_i]\mid F_i(s) =1\}$ satisfies  $\psi_{i+1}$.  Iterating this procedure we need only to show that $X':=\{s\in X[F_1/q_1]\ldots [F_n/q_n]:F_1(s)=\ldots =F_n(s_n)=1\}$ satisfies $\psi_{m+1}$. By the selection of functions $F_1, \ldots ,F_n$ we notice that $X'$ satisfies $\dep{\tuple c_i,q_i}$ for $i=1, \ldots ,n$. Again, following the proof of Theorem \ref{thm:entailpldep} we obtain that $X'$ satisfies $\theta' \vee \bigvee_{i=m+1}^{m+m'} \dep{\tuple c_i,q_i}$. This shows the direction from right to left.
Since the reduction from $\phi$ to $\psi$ can be done in polynomial time, this concludes the proof.
%Hence, there are functions $f_i:\{0,1\}^{|\tuple c_i|}\to \{0,1\}$ such that $f(s(\tuple c_i))=s(q_i)$ for $s \in X$ and $i=1, \ldots ,m$. Since $\phi$ is true we find functions $f_i:\{0,1\}^{|\tuple c_i|}\to \{0,1\}$, for $i=m+1, \ldots ,m+m'$, such that 
%\begin{equation}\label{eq4}
%\forall s\in X: s[f_{m+1}(s(\tuple c_{m+1}))/q_{m+1}, \ldots ,f_{m+m'}(s(\tuple c_{m+m'}))/q_{m+m'}]\models \theta.
%\end{equation}
%Clearly, $Y_i:=\{s\in X: s(q_i)\neq f(s(\tuple c_i))\}$ satisfies $\dep{\tuple c_i,q_i}$ for $i=m+1, \ldots ,m+m'$. Then it follows by \eqref{eq4} and by Proposition \ref{prop:flatness} that $X\setminus (Y_{m+1}\cup \ldots \cup Y_{m+m'}) $ satisfies $\theta$. Therefore, $\psi$ is valid which concludes the direction from right to left. Since the reduction from $\phi$ to $\psi$ can be done in polynomial time, this concludes the proof.
\end{proof}

The exact $\coNEXPTIME^{\NP}$ bound for $\QPLInd$ entaiment and validity follows now by Lemmata \ref{lem:qplind_upper} and \ref{lem:qplind_lower}. Theorems \ref{sdfsdf} and \ref{thm:reduction} then imply the same lower bound for $\MLInd$. This means that validity in modal independence logic is at least as hard as entailment in modal dependence logic. We leave determining the exact complexity of $\MLInd$ entailment and validity as an open question. %Based on Theorem \ref{sdfsdf} it seems reasonable to conjecture these two problems enjoy the same exact bounds.

\begin{thm}
\label{sdfsdf}
The entailment and the validity problems for $\QPLInd$ are   $\coNEXPTIME^{\NP}$-complete.
\end{thm}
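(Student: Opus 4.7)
The theorem follows by packaging together the two lemmas just established in this section. The plan is to observe that validity is the special case of entailment with empty (or trivially satisfiable) premise set, so the two bounds already proved transfer across the two problems.

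First I would dispatch the upper bound. Lemma \ref{lem:qplind_upper} already shows that the entailment problem for $\QPLInd$ lies in $\coNEXPTIME^{\NP}$. For validity, observe that $\models \phi$ holds exactly when $\emptyset \models \phi$, so validity is a particular instance of the entailment problem (alternatively, one can take $\{\top\} \models \phi$ where $\top$ is any tautology of $\PL$, reducible in constant time). Consequently the same $\PiE{2}$ alternating algorithm described in the proof of Lemma \ref{lem:qplind_upper} decides validity within the required resource bounds.

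For the matching lower bound, Lemma \ref{lem:qplind_lower} already shows that the validity problem for $\QPLInd$ is $\coNEXPTIME^{\NP}$-hard via a polynomial-time reduction from $\TRUE(\Pi_2\text{-}\ADQBF)$. Since validity reduces in polynomial time to entailment (the identity reduction $\phi \mapsto (\emptyset, \phi)$ suffices), the same hardness is inherited by the entailment problem.

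Putting these two directions together yields $\coNEXPTIME^{\NP}$-completeness of both problems. I do not anticipate any technical obstacle here, since the content of the theorem is exhausted by the two lemmas; the only thing to be careful about is confirming that the trivial reductions between validity and entailment are indeed polynomial and preserve the relevant semantics in the team-based setting, which is immediate from the definitions of $\models$ given in Section~\ref{sect:mdl}.
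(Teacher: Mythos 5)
Your proof is correct and takes essentially the same route as the paper, which derives the theorem directly by combining Lemma \ref{lem:qplind_upper} (the $\coNEXPTIME^{\NP}$ upper bound for entailment, covering validity as the $\Sigma=\emptyset$ instance) with Lemma \ref{lem:qplind_lower} (hardness of validity, which transfers to entailment via the trivial reduction). Nothing further is needed.
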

\begin{cor}
\label{sdkf}
The entailment and the validity problems for $\MLInd$ are $\coNEXPTIME^{\NP}$-hard.
\end{cor}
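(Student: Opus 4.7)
The plan is to obtain the corollary by combining the $\coNEXPTIME^{\NP}$-hardness bound for $\QPLInd$ validity and entailment (Theorem~\ref{sdfsdf}) with a polynomial-time reduction from $\QPLInd$ to $\MLInd$ that mirrors Theorem~\ref{thm:reduction}. Although Theorem~\ref{thm:reduction} is only stated for $\QPLDep$, the reduction technique used in its proof—prenex normal form, the tree-enforcing modal formula $\tree{V,\tuple p, n}$, and the translation of $\exists$/$\forall$ into $\Diamond$/$\Box$—goes through uniformly for the quantified propositional dependence, independence and inclusion logics. In fact the proof of Theorem~\ref{thm:reduction} explicitly opens with the hypothesis that $\Sigma \cup \{\phi\}$ is a set of $\QPLDep$-, $\QPLInd$-, or $\QPLInc$-formulae. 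Thus the first step is to formally state and invoke this extended form for the independence case.

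Given a $\QPLInd$-instance $\Sigma \models \phi$, the first step is to apply Lemma~\ref{normal form} to push every formula in $\Sigma \cup \{\phi\}$ into prenex normal form $Q_1 p_1 \ldots Q_n p_n \psi$ with $\psi$ quantifier-free, sharing, by locality (Proposition~\ref{prop:qpl_locality}), a common list of quantified variables. The second step is to replace each prenex block by its modal translation $\theta_1$ obtained from swapping $\exists/\forall$ with $\Diamond/\Box$. The third step is to add the modal logic formula $\tree{V, \tuple p, m}$ as a hypothesis on the left-hand side of entailment (or as a disjunct guarding the translation, for the validity reduction, exactly as in the $\QPLDep$ case). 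The resulting instance is an $\MLInd$-entailment (respectively, validity) instance of polynomial size. Correctness of this reduction is a straightforward repetition of the argument for Theorem~\ref{thm:reduction}: the tree formula forces the worlds in any satisfying team to realise every assignment over $p_1, \ldots, p_m$ exactly as a propositional duplication team would, while the modal quantifiers $\Diamond$/$\Box$ then pick supplementation teams/duplication teams along successor edges. Independence atoms are evaluated over worlds of the Kripke model in exactly the same way as over assignments of the propositional team, so no semantic adjustment is required.

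With the reduction in hand, the corollary follows at once: since $\QPLInd$-entailment and $\QPLInd$-validity are $\coNEXPTIME^{\NP}$-hard by Theorem~\ref{sdfsdf} and both are reducible in polynomial time to the corresponding problems for $\MLInd$, the latter are $\coNEXPTIME^{\NP}$-hard as well. The main (and essentially the only) obstacle is verifying that the $\QPLDep$-reduction of Theorem~\ref{thm:reduction} really is insensitive to the choice of atoms: this amounts to checking that independence atoms do not interact pathologically with the tree formula and the quantifier-to-modality translation, but since independence atoms are local to individual teams and the tree formula is a flat $\ML$-formula (Proposition~\ref{prop:ml_flatness}), their semantics transfers transparently. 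No new idea beyond the one already used for $\QPLDep$ is needed.
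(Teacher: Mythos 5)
Your proposal matches the paper's own argument: the paper derives this corollary exactly by combining Theorem~\ref{sdfsdf} with the reduction of Theorem~\ref{thm:reduction}, whose proof (as you note) is already phrased for $\QPLDep$, $\QPLInd$, and $\QPLInc$ alike even though the statement names only $\QPLDep$. Your additional remarks on why the tree formula and the quantifier-to-modality translation are insensitive to the choice of atoms under the lax semantics are consistent with the paper's footnote on this point, so the proof is correct and essentially identical in approach.
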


%%%MODAL INCLUSION LOGIC
\section{Validity and Entailment in Modal and Quantified Propositional Inclusion Logics}\label{sect:mlinc}
Next we consider quantified propositional inclusion logic and show  that its  validity and entailment problems are complete for $\EXPTIME$ and $\coNEXPTIME$, respectively.  The result regarding validity is a simple observation.

%We also consider validity of $\QPLInc$ formulae and prove that it is $\EXPTIME$-complete.
%Let us also consider validity of $\QPLInc$ formulae. First we state the following theorem which follows by Theorem \ref{thm:reduction} because both $\PLInc$  and $\MLInc$ are $\EXPTIME$-complete in terms of their satisfiability problem. \cite{HellaKMV15}.
%\begin{thm}\label{thm:satqplinc}
%The satisfiability problem for  $\QPLInc$ is   $\EXPTIME$-complete.
%\end{thm}
%It is then a straightforward exercise to show that also the validity of $\QPLInc$ is $\EXPTIME$-complete.
\begin{thm}\label{thm:valqplinc}
The validity problem for  $\QPLInc$ is $\EXPTIME$-complete.
\end{thm}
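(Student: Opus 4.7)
The plan is to establish Theorem~\ref{thm:valqplinc} by proving matching $\EXPTIME$ upper and lower bounds.

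For the $\EXPTIME$ upper bound, I would first apply a prenex normal form translation to $\phi \in \QPLInc$ via Lemma~\ref{normal form}, bringing it to the shape $Q_1 p_1 \ldots Q_k p_k \psi$ with $\psi$ quantifier-free. The heart of the upper bound argument is to exploit the union closure property of $\QPLInc$ (Proposition~\ref{prop:qpl_uc}): for every subformula $\theta$ of $\phi$ and any underlying universe of assignments, the collection of teams satisfying $\theta$ is closed under unions, so it has a canonical greatest element. This enables a bottom-up greatest-fixed-point computation over the set of subteams of the exponential-size universe ${}^{\Fr{\phi}}\{0,1\}$. Concretely, I would build a two-player game whose positions are pairs $(X,\theta)$ consisting of a subteam $X$ and a subformula $\theta$, with the verifier choosing splits for $\vee$ and supplementation functions for $\exists$, and the falsifier dualising for $\wedge$ and $\forall$; to decide validity, add an initial universal move in which the falsifier picks the team $X$. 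The game graph has size $2^{O(|\phi| + |\Fr{\phi}| + k)}$ and by union closure the verifier's winning region can be computed via a polynomial-time (in the game size) fixed-point iteration, yielding an overall deterministic exponential-time algorithm.

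For the $\EXPTIME$-hardness, I would reduce from a standard $\EXPTIME$-complete problem. A natural candidate is two-player reachability on a succinctly-represented (e.g., alternating-machine or circuit-encoded) exponential-size arena; equivalently, the acceptance problem for alternating polynomial-space Turing machines via $\textbf{APSPACE} = \EXPTIME$. Configurations are encoded by valuations of a block of propositional variables, alternation of existential and universal quantifiers captures the two players' moves, and the central use of inclusion atoms $\tuple p \sub \tuple q$ is to enforce that the state resulting from a transition matches some reachable configuration from the preceding step — precisely the kind of "lookup" that inclusion atoms make polynomially expressible, whereas pure $\QPL$ would require exponential blowup.

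The main obstacle is the upper bound. Because $\QPLInc$ lacks downward closure, one cannot reduce validity to a single model-check of the full universe team as one does in $\QPLDep$; a team may fail $\phi$ even though every larger team satisfies it. The key point to make rigorous is therefore that union closure is enough to turn the family of satisfying subteams, taken across all subformulas and all possible quantifier-induced team extensions, into a system of greatest fixed points computable in time polynomial in the exponential game graph. Once this is in place, the $\EXPTIME$ upper bound is immediate, and the lower bound is a comparatively routine encoding of an alternating polynomial-space computation using quantifier blocks and inclusion atoms.
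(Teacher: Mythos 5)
Your overall strategy (union closure for the upper bound, reduction from an $\EXPTIME$-complete problem for the lower bound) points in the right direction, but the upper bound as written has a genuine gap in the complexity accounting. In your game the positions are pairs $(X,\theta)$ where $X$ ranges over subteams of the universe ${}^{\Fr{\phi}}\{0,1\}$; that universe has $2^{|\Fr{\phi}|}$ assignments and hence $2^{2^{|\Fr{\phi}|}}$ subteams, so the game graph is \emph{doubly} exponential, not of size $2^{O(|\phi|+|\Fr{\phi}|+k)}$ as you claim. The same problem affects the initial universal move in which the falsifier picks an arbitrary team, and the verifier's moves for $\vee$ and $\exists$, each of which branches over roughly $3^{|X|}$ splits or supplementation functions. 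Union closure does not by itself collapse this: it lets you replace nondeterministic subteam choices by a deterministic computation of greatest satisfying subteams (as in Algorithm~\ref{alg:qplinc_entail}), but you are still quantifying over all initial teams. The missing idea is to use union closure at the \emph{outermost} level: since every non-empty team is a union of singletons and $\QPLInc$ is union closed, $\phi$ is valid iff it holds on every singleton team, of which there are only exponentially many. The paper then packages all of these checks into a single satisfiability instance: writing $\tuple p$ for the free variables of $\phi$, replace each inclusion atom $\tuple q\sub\tuple r$ by $\tuple p\,\tuple q\sub\tuple p\,\tuple r$ to obtain $\phi^*$, and observe that $\phi$ is valid iff $\forall\tuple p\,\phi^*$ is satisfiable; since satisfiability for $\QPLInc$ (via $\MLInc$) is in $\EXPTIME$, the upper bound follows with no game machinery at all.

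For the lower bound you propose to encode an alternating polynomial-space computation from scratch. This is not wrong in spirit --- it is essentially how $\EXPTIME$-hardness of inclusion-logic satisfiability is proved in the literature --- but the details (how inclusion atoms enforce the configuration lookup, how the quantifier prefix simulates alternation) are exactly the hard part, and they are absent from your sketch. The paper's route is a one-liner: satisfiability for $\PLInc$ is already known to be $\EXPTIME$-complete, and $\phi(\tuple p)$ is satisfiable iff $\exists\tuple p\,\phi(\tuple p)$ is valid, so validity inherits the hardness directly. You should use that reduction rather than a fresh machine encoding.
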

\begin{proof}
For the lower bound, note that the satisfiability problem for $\PLInc$ has been shown to be $\EXPTIME$-complete in \cite{HellaKMV15}. Also, note that $\phi(\tuple p)$ is satisfiable iff $\exists \tuple p \phi(\tuple p)$ is valid. Consequently, $\QPLInc$ validity is $\EXPTIME$-hard.

For the upper bound, we notice by union closure (Proposition \ref{prop:qpl_uc}) that a formula $\phi(\tuple p)\in \QPLInc$ is valid iff it is valid over all singleton teams. Let us denote by $\phi^*$ the formula obtained from $\phi$ by replacing all inclusion atoms $\tuple q \sub\tuple r$ with $\tuple p\tuple q \sub \tuple p \tuple r$. We observe that $\phi(\tuple p)$ is valid over all singletons iff $\{\emptyset\} \models \forall \tuple p\phi^*(\tuple p)$. The direction from left to right follows by union closure and since $\tuple q \sub\tuple r$ entails $\tuple p\tuple q \sub \tuple p \tuple r$. For the direction from right to left it can be shown by induction on $\psi$ that $X\models \psi^*$ entails $\{s\in X \mid s(\tuple p) =\tuple a\}\models \psi$ for all teams $X$ and sequences of Boolean values $\tuple a$. By locality (Proposition \ref{prop:qpl_locality}) $\{\emptyset\} \models \forall \tuple p\phi^*(\tuple p)$ iff $\forall \tuple p\phi^*(\tuple p)$ is satisfiable. Since $\MLInc$ satisfiability is in $\EXPTIME$ \cite{HKMV16}, we conclude by Theorem \ref{thm:reduction} that the $\EXPTIME$ upper bound holds.
\end{proof}

Let us now prove the exact $\coNEXPTIME$ complexity bound  for $\QPLInc$ entailment. For the proof of the lower bound, we apply $\TRUE(\Sigma_1\text{-}\ADQBF)$.

\begin{lem}\label{lem:lower_qplinc}
The entailment problem for $\QPLInc$ is   $\coNEXPTIME$-hard.
\end{lem}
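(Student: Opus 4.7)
I plan to reduce from the complement of $\TRUE(\SigmaQBF{1})$, which is $\coNEXPTIME$-complete by Theorem \ref{thm:odd-k-dqbf-hardness}. Given an instance $(\phi,\calC)$ with $\phi=\exists f_1\cdots\exists f_m\,\forall p_1\cdots\forall p_n\,\theta$ and $\calC=(\tuple c_1,\ldots,\tuple c_m)$, the task is to construct in polynomial time a finite set $\Sigma$ and a formula $\psi$ in $\QPLInc$ such that $\Sigma\models\psi$ iff $\phi$ is false.

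The semantic reformulation driving the reduction is that $\phi$ is false iff every team $X$ over $\vec p=p_1,\ldots,p_n$ and $\vec q=q_1,\ldots,q_m$ whose projection to $\vec p$ is all of $\{0,1\}^n$ either violates $\dep{\tuple c_i,q_i}$ for some $i$ or contains a row falsifying $\theta$. Correspondingly, $\Sigma$ will enforce surjectivity and fix an auxiliary variable $t$ to $1$, while $\psi$ will express the disjunction ``some dependence is violated, or some row falsifies $\theta$''. Surjectivity is captured in $\QPLInc$ by $\forall\vec p'\,(\vec p'\sub\vec p)$ with $\vec p'$ a fresh $n$-tuple: after the universal duplication, every bit-string appears as a $\vec p'$-value, and the inclusion atom then forces it also to appear as a $\vec p$-value in $X$.

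To express ``some row of the team satisfies a formula $\alpha$'' I use the marker gadget $\gamma(\alpha):=\exists r\,((t\sub r)\wedge(\alpha\vee\neg r))$, where $r$ is fresh. The inclusion atom $t\sub r$ forces at least one row to receive $r=1$, and the splitting disjunction places every $r=1$ row into the $\alpha$-part, so that $X\models\gamma(\alpha)$ iff some non-empty subteam of $X$ satisfies $\alpha$. Taking $\alpha=\theta^\bot$ captures the existence of a $\theta$-falsifying row. To capture the failure of $\dep{\tuple c_i,q_i}$ I let
\[\alpha_i:=\exists b\,(b\wedge\tuple c_ib\sub\tuple c_iq_i)\wedge\exists b'\,(\neg b'\wedge\tuple c_ib'\sub\tuple c_iq_i),\]
whose two conjuncts jointly say that every $\tuple c_i$-value present in the current team supports both $q_i=0$ and $q_i=1$; combined with the non-emptiness enforced by $\gamma$, this materialises as two rows of $X$ sharing $\tuple c_i$-values and differing on $q_i$, exactly the failure of $\dep{\tuple c_i,q_i}$. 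Finally, $\psi$ is the team-semantic disjunction $\bigvee_{i=1}^m\gamma(\alpha_i)\vee\gamma(\theta^\bot)$. Since each $\gamma$-disjunct is satisfied by the empty team, and a successful $\gamma$-witness inside any subteam of $X$ lifts to a $\gamma$-witness in $X$ itself, this disjunction coincides on every $X$ with the Boolean ``some disjunct holds on $X$'', and the equivalence $\Sigma\models\psi$ iff $\phi$ is false follows from the reformulation.

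The principal obstacle is that inclusion atoms cannot express the negation of $\dep{\tuple c_i,q_i}$ directly, since dependence is downward-closed but not union-closed whereas inclusion logic is union-closed. The marker gadget $\gamma(\alpha_i)$ circumvents this by isolating a non-empty witness subteam and placing inside it two inclusion atoms that together certify the presence of both values of $q_i$ for some $\tuple c_i$-value. All constructed subformulas have size polynomial in $|(\phi,\calC)|$, so the reduction runs in polynomial time, establishing $\coNEXPTIME$-hardness.
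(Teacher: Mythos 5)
Your reduction is correct and matches the paper's proof in all essentials: both reduce from the complement of $\TRUE(\Sigma_1\text{-}\ADQBF)$, use $\Sigma$ to force the team's projection onto $p_1,\ldots,p_n$ to be all of $\{0,1\}^n$ while pinning down a constant $t$, and let $\psi$ assert (as a split disjunction that collapses to a Boolean one) that some constraint $\dep{\tuple c_i,q_i}$ fails or some row falsifies $\theta$. The only differences lie in the concrete inclusion-logic gadgets --- the paper enforces surjectivity with the chain $p_1\ldots p_{i-1}t\sub p_1\ldots p_{i-1}p_i$, guesses a falsifying row as an existentially quantified tuple with $\tuple p'\tuple q'\sub\tuple p\tuple q$, and certifies a dependence violation via $\exists\tuple v_i(\tuple v_i t\sub\tuple c_i q_i\wedge\tuple v_i f\sub\tuple c_i q_i)$ --- and your marker-subteam variants of these gadgets work equally well.
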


\begin{proof}
Theorem \ref{thm:odd-k-dqbf-hardness} states that $\TRUE(\Sigma_1\text{-}\ADQBF)$  is $\NEXPTIME$-hard. We reduce from its complement problem. Let $(\phi,\calC)$ be an instance of $\Sigma_1\text{-}\ADQBF$. Then $\phi$ is of the form $ \exists g_1 \ldots \exists g_m \forall p_1 \ldots \forall p_n \theta $ and $\calC$ lists tuples $\tuple c_i$ of elements from $\{p_1, \ldots ,p_n\}$, for $i=1, \ldots ,n$.
We show how to construct in polynomial time from $(\phi,\calC)$ two $\QPLInc$ formulae  $\psi$ and $\psi'$ such that  $(\phi,\calC)$ is false iff $\psi \models \psi'$. These formulae use fresh variables $q_i$ for encoding $g_i(\tuple c_i)$ and variables $t$ and $f$ for encoding true and false.
Denote by $\tuple p$ and $\tuple q$  sequences $p_1\ldots p_n$ and $q_1\ldots q_m$, and by $\tuple p'$ and $\tuple q'$ their distinct copies, respectively. Let  $\tuple d_i$ list the variables of $\{p_1, \ldots ,p_n\}$ that do not occur in $\tuple c_i$. %and  denote by $\tuple r_i$ the sequence $r_1\ldots r_{|\tuple d_i|}$.  
 %and let $\tuple p'$ and $\tuple q'$ denote respectively two lists of distinct copies $p'_1\ldots p'_n$ and $q'_1\ldots q'_m$. 
 Moreover, let $\tuple v_i$ be lists of fresh variables of length $|\tuple c_i|$. %, for $i=1, \ldots ,m$.  
 The idea is to describe that whenever a team $X$ is complete with respect to the values of $\tuple p$, then either one of the constraints in $\calC$ is falsified or one of the assignments of $X$ falsifies $\theta$.  We let
\begin{equation}\label{psi}
 \psi:=t\wedge \neg f \wedge \bigwedge_{i=1}^{n} (p_1\ldots p_{i-1}t\sub p_1\ldots p_{i-1}p_i \wedge p_1\ldots p_{i-1}f\sub p_1\ldots p_{i-1}p_i),
\end{equation}
%\item  $\phi_3:=\bigwedge_{i=1}^{n} (\tuple p \tuple q r_1\ldots r_{i-1}t\sub \tuple p \tuple q r_1\ldots r_{i-1}r_i \wedge \tuple p \tuple q r_1\ldots r_{i-1}f\sub \tuple p \tuple q r_1\ldots r_{i-1}r_i)$,
%\item $\phi_4:= \bigwedge_{i=1}^m \tuple c_i q_i\tuple r_i \sub \tuple c_i q_i\tuple d_i $,
 and define 
 \begin{equation}\label{psi'}
 \psi':=  \exists \tuple p' \tuple q'( \theta^{\bot}_0(\tuple p' \tuple q' /\tuple p\tuple q)\wedge \tuple p'\tuple q'\sub \tuple p\tuple q) \vee 
 \bigvee_{i=1}^m \exists \tuple v_i (\tuple v_i t\sub \tuple c_i q_i \wedge \tuple v_i f\sub \tuple c_i q_i),\end{equation}
 where $\theta_0$ is obtained from $\theta$ by replacing occurrences  of $g_i(\tuple c_i)$ with $q_i$. Above, \eqref{psi} exrpresses that a team $X$ is complete with respect to $\tuple p$. In \eqref{psi'} the first disjunct entails that $\theta_0$ is false for some assignment in $X$, and the second disjunct indicates that some constraint in $\calC$ does not hold in $X$. Furthermore, all disjunctions in \eqref{psi'} are essentially Boolean;  if a disjunct is satisfied by a non-empty subteam of $X$, it is likewise satisfied by the full team $X$. 
 We leave it to the reader to verify that $(\phi,\calC)$ is false iff $\psi \models \psi'$.   
\end{proof}

For the upper bound we refer to Algorithm \ref{alg:qplinc_entail} that was first presented in \cite{HKMV16} in the modal logic context. Given a team $X$ and a formula $\phi\in \QPLInd$, this algorithm computes deterministically the maximal subset of $X$ that satisfies $\phi$. Note that the existence of such a team is guaranteed by the union closure property of $\QPLInc$ (Proposition \ref{prop:qpl_uc}). Given an instance $\Sigma \cup \phi$ of the entailment problem, the proof idea is now to first universally guess a team $X$ (possibly of exponential size), and then check using Algorithm \ref{alg:qplinc_entail} whether $X$ is a witness of $\Sigma \not\models \phi$. Since the last part can be executed deterministically in exponential time, the $\coNEXPTIME$ upper bound follows.
%Based on it we show the $\coNEXPTIME$ upper bound for the entailment problem of quantified propositional inclusion logic. %A formula is said to be \emph{closed under unions} if its truth is preserved under taking unions of teams.

%  \begin{figure}[h]
 \begin{algorithm}\label{alg:qplinc_entail}
 \caption{A deterministic model checking algorithm for $\QPLInc$}
  \SetAlgoLined
  \LinesNumbered
  \SetKwInOut{Input}{Input}\SetKwInOut{Output}{Output}
  \SetKwFunction{KwFn}{MaxSub}
  \Input{$(X,\phi)$ where $X$ is a propositional team over $\Fr{\phi}$ and $\phi\in \QPLInc$}
  \Output{\KwFn{$X,\phi$}}
  \BlankLine		
  		\uIf{$\phi = \exists p \psi$}{
  			\Return $\{s\in X\mid s(0/p)\in $ \KwFn{$X[\{0,1\}/p],\psi$}$\textrm{ or }s(1/p)\in $ \KwFn{$X[\{0,1\}/p],\psi$}$\}$\;
		}
		\uElseIf{$\phi = \forall p \psi$}{
			$Y\gets X[\{0,1\}/p]$\;
			\While{$Y\neq \{s\in Y \mid \{s(0/p),s(1/p)\}\sub$  \KwFn{$Y,\psi$}$\}$}{
				$Y\gets \{s\in Y \mid \{s(0/p),s(1/p)\}\sub$  \KwFn{$Y,\psi$}$\}$\;
			}
			\Return $\{s\in X\mid \{s(0/p),s(1/p)\}\sub Y\}$
		}
		\uElseIf{$\phi = \psi \vee \theta$}{
			\Return  \KwFn{$X,\psi$} $\cup$ \KwFn{$X,\theta$}\;
		}
		\uElseIf{$\phi = \psi \wedge \theta$}{
			$Y\gets X$\;
			\While{$Y \neq \KwFn{$\KwFn{$Y,\psi$},\theta$}$}{
				$Y \gets \KwFn{$\KwFn{$Y,\psi$},\theta$}$\;
			}
			\Return $Y$\;
		}
		\uElseIf{$\phi = p$}{
			\Return  $\{s\in X\mid s(p)=1\}$\;
			}
		\uElseIf{$\phi = \neg p$}{
			\Return  $\{s\in X\mid s(p)=0\}$\;
		}
		\ElseIf{$\tuple p \sub \tuple q$}{
			$Y\gets X$\;
			\While{$Y\neq \{s\in Y \mid \exists s'\in Y: s(\tuple p)=s'(\tuple q)\}$}{
				$Y\gets \{s\in Y \mid \exists s'\in Y: s(\tuple p)=s'(\tuple q)\}$\;
			}
			\Return $Y$
		}
  			
\end{algorithm}

\begin{lem}\label{lem:upper_qplinc}
The entailment problem for $\QPLInc$ is in $\coNEXPTIME$.
\end{lem}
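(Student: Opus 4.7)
The plan is to establish the upper bound by showing that the complement problem lies in $\NEXPTIME$: a non-entailment can always be witnessed by a team of at most exponential size whose correctness can be checked in deterministic exponential time. Given an instance $\Sigma \cup \{\phi\}$, let $V$ be the (polynomially bounded) set of variables appearing free in some formula of $\Sigma \cup \{\phi\}$. By the locality property (Proposition \ref{prop:qpl_locality}), $\Sigma \not\models \phi$ iff there exists a team $X$ over $V$ satisfying every $\sigma \in \Sigma$ and refuting $\phi$. Since $|X| \le 2^{|V|}$, such a team fits on an exponential-size work tape and can be nondeterministically guessed at the start of the computation.

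The verification phase exploits union closure (Proposition \ref{prop:qpl_uc}): for each formula $\theta \in \Sigma \cup \{\phi\}$, the family of subteams of $X$ satisfying $\theta$ is closed under arbitrary unions and therefore has a unique maximum, which I shall denote $\mathrm{MaxSub}(X, \theta)$. Algorithm \ref{alg:qplinc_entail} is designed to compute exactly this maximum deterministically. Consequently $X \models \theta$ iff $\mathrm{MaxSub}(X, \theta) = X$, and the overall procedure accepts iff $\mathrm{MaxSub}(X, \sigma) = X$ for every $\sigma \in \Sigma$ while $\mathrm{MaxSub}(X, \phi) \ne X$.

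The remaining step is a running-time analysis of Algorithm \ref{alg:qplinc_entail} on inputs $(X, \theta)$ with $|X| \le 2^{n}$ and $|\theta| \le n$. Any team arising during the recursion is a team over $\Var{\theta}$ and hence of size at most $2^{n}$. The \textbf{while}-loops used to treat the conjunction, universal quantifier, and inclusion atom cases are all monotone: they strictly shrink the current candidate team in every non-terminal iteration and so each loop terminates after at most $2^{n}$ rounds. Each round performs one or two recursive calls on strictly smaller subformulas, together with polynomial bookkeeping on teams of size $\le 2^{n}$. A straightforward induction on formula structure then yields a total running time of $2^{O(n)}$, so the entire non-entailment procedure runs in $\NEXPTIME$.

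The principal obstacle in turning this outline into a full proof is the analysis of Algorithm \ref{alg:qplinc_entail}: one must simultaneously establish that each of the nested fixed-point iterations converges to the correct maximal satisfying subteam (a consequence of union closure together with monotonicity of each loop step) and that, when combined with the outer recursion, these inner loops keep the overall running time within a single exponential. Once both points are verified, the $\coNEXPTIME$ membership of $\QPLInc$-entailment follows at once.
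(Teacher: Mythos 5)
Your proposal is correct and follows essentially the same route as the paper: guess a (possibly exponential-size) team over the free variables (justified by locality), then use Algorithm \ref{alg:qplinc_entail} to deterministically compute the unique maximal satisfying subteam (whose existence rests on union closure) and accept iff it equals $X$ for every premise but not for the conclusion, with a single-exponential running-time bound obtained by induction on the formula together with the monotone termination of the fixed-point loops. The only cosmetic difference is that you phrase membership via the complement being in $\NEXPTIME$ rather than as a universal exponential-time algorithm, which is the same thing.
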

\begin{proof}
Consider the computation of $\texttt{MaxSub}(X,\phi)$ in Algorithm \ref{alg:qplinc_entail}. We leave it to the reader to show, by straightforward induction on the complexity of $\phi$, that for all $X,Y$ over a shared domain $V\supseteq \Fr{\phi}$ the following two claims holds: 
\begin{enumerate}
\item $\texttt{MaxSub}(X,\phi)\models \phi$, and 
\item $Y\sub X $ and $ Y\models \phi \Rightarrow Y\sub \texttt{MaxSub}(X,\phi)$. 
\end{enumerate}
Note that $\texttt{MaxSub}(X,\phi)$ is the unique maximal subteam of $X$ satisfying $\phi$.  The idea is that each subset of $X$ satisfying $\phi$ survives each iteration step. Since we have $\texttt{MaxSub}(X,\phi)\sub X$, it now follows directly from (1) and (2) that $\texttt{MaxSub}(X,\phi)= X$ iff $X\models \phi$. 

Let us now present the universal exponential-time algorithm for deciding entailment for $\QPL$. Assuming an input sequence $\phi_1, \ldots ,\phi_n$ from $\QPLInc$, the question is to decide whether $\{\phi_1, \ldots ,\phi_{n-1}\}\models \phi_n$. The algorithm first universally guesses  a team $X$ over $\bigcup_{i=1}^n \Fr{\phi_i}$ and then  using  Algorithm \ref{alg:qplinc_entail} deterministically  tests whether $X$ is a counterexample for $\{\phi_1, \ldots ,\phi_{n-1}\}\models \phi_n$, returning true iff this is not the case. Note that $X$ is a counterexample for $\{\phi_1, \ldots ,\phi_{n-1}\}\models \phi_n$ iff $\texttt{MaxSub}(X,\phi_n)\neq X$ and $\texttt{MaxSub}(X,\phi_i)= X$ for $i=1, \ldots ,n-1$. By the locality principle of $\QPL$ (Proposition \ref{prop:qpl_locality}) this suffices, \ie, each universal branch returns true iff $\{\phi_1, \ldots ,\phi_{n-1}\}\models \phi_n$. 

It remains to show that the procedure runs in exponential time. Consider first the running time of Algorithm \ref{alg:qplinc_entail} over an input $(X,\phi)$. First note that  one can find an exponential function $g$ such that at each recursive step $\texttt{MaxSub}(Y,\psi)$ the size of the team $Y$ is bounded by $|X|g(|\phi |)$. The possible exponential blow-up comes from nested quantification in $\phi$. Furthermore, each base step $\texttt{MaxSub}(Y,\psi)$ can be computed in polynomial time in the size of $Y$. Also, each recursive step $\texttt{MaxSub}(Y,\psi)$ involves at most $|Y|$ iterations consisting of either computations of $\texttt{MaxSub}(Z,\theta)$ for $Z\sub Y$ and a subformula $\theta$ of $\psi$, or removals of assignments from $Y$. It follows by induction that there exists a polynomial $f$ and an exponential $h$ such that the running time of $\texttt{MaxSub}(Y,\psi)$ is bounded by $f(|X|)h(|\phi|)$.  The overall algorithm now guesses first a team $X$ whose size is possibly exponential in the input. By the previous reasoning, the running time of $\texttt{MaxSub}(X,\phi_i)$ remains exponential for each $\phi_i$. This shows the claim.
 % and (iii) $Y\sub X \Rightarrow  \texttt{MaxSub}(Y,\phi) \sub \texttt{MaxSub}(X,\phi)$. The cases for $\phi = p$ and $\phi = \neg p$ are clear. The remaining are considered below:
%\begin{itemize}
%\item $\phi=\tuple p \sub\tuple q$: It is obvious that the fixed point $\texttt{MaxSub}(X,\phi)$ satisfies $\phi$. Also, for all $Y\sub X$ we have that $\{s\in Y \mid s(\tuple p)\sub Y(\tuple q)\}\sub \{s\in X \mid s(\tuple p)\sub X(\tuple q)\}$; if in addition $Y\models \phi$, then we have that  $Y\sub \{s\in X \mid s(\tuple p)\sub X(\tuple q)\}$. Hence (ii,iii) follows by simple induction.
%\item $\phi =\psi \wedge\theta$: By the induction assumption we obtain that the fixed point $\texttt{MaxSub}(X,\phi)$ satisfies $\phi$; also (ii,iii) are straightforward by the induction assumption.
%\item $\phi =\psi \vee \theta$: By the induction assumption we immediately obtain that $\texttt{MaxSub}(X,\phi)$ satisfies $\phi$; again, (ii,iii) are straightforward by the induction assumption.
%\item $\phi =\exists p \psi$: By the induction assumption we immediately obtain that $\texttt{MaxSub}(X,\phi)$ satisfies $\phi$; again, (ii,iii) is straightforward by the induction assumption.
%\item $\phi =\forall p \psi$: By the induction assumption we immediately obtain that $\texttt{MaxSub}(X,\phi)$ satisfies $\phi$; again, (ii,iii) is straightforward by the induction assumption.
%\end{itemize}
\end{proof}
Lemmata \ref{lem:upper_qplinc} and \ref{lem:lower_qplinc} now show the $\coNEXPTIME$-completeness of the $\QPLInc$ entailment problem. The same lower bound have been shown in \cite{HKMV16} to apply already to $\QPLInc$ validity. The exact complexity of $\MLInc$ validity and entailment however remains  an open problem.

%Theorem \ref{sdwerwerfsdf}
%has been proven in \cite{HKMV16}, and for entailment the result also follows by Theorems \ref{ssdfddf} and \ref{thm:reduction}. In contrast to the lower bound for validity, notice that satisfiability for $\MLInc$ is  $\EXPTIME$-complete \cite{HellaKMV15}.

\begin{restatable}{thm}{akuus}
\label{ssdfddf}
The entailment problem for $\QPLInc$ is   $\coNEXPTIME$-complete. 
\end{restatable}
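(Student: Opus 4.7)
The plan is straightforward: this theorem is a packaging statement that simply combines the two preceding lemmas. The upper bound has already been established in Lemma \ref{lem:upper_qplinc}, where a universal exponential-time algorithm is given that first guesses a candidate counterexample team $X$ (of at most exponential size over the free variables) and then uses the deterministic model-checking procedure $\texttt{MaxSub}$ of Algorithm \ref{alg:qplinc_entail} to decide whether $X\models \phi_i$ for each premise $\phi_i$ and whether $X\not\models \phi_n$. The soundness of this approach relies on the locality principle (Proposition \ref{prop:qpl_locality}) and on the fact that $\texttt{MaxSub}(X,\phi) = X$ iff $X \models \phi$, which follows from union closure (Proposition \ref{prop:qpl_uc}).

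The matching lower bound is supplied by Lemma \ref{lem:lower_qplinc}, which reduces the complement of $\TRUE(\Sigma_1\text{-}\ADQBF)$ (a $\coNEXPTIME$-hard problem by Theorem \ref{thm:odd-k-dqbf-hardness}) in polynomial time to $\QPLInc$ entailment. The reduction uses two auxiliary inclusion-atom formulae $\phi_1,\phi_2$ to force a team $X$ to be complete with respect to the universally quantified variables $\tuple p$, and then packs the negation of $\theta$ and the possible failures of the dependency constraints into a single disjunctive formula $\psi$ on the right-hand side of the entailment.

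Consequently the proof itself requires no new ingredients: I would simply invoke Lemma \ref{lem:upper_qplinc} to obtain $\coNEXPTIME$-membership and Lemma \ref{lem:lower_qplinc} to obtain $\coNEXPTIME$-hardness, and conclude the completeness claim. There is no real obstacle here since both halves have been proven; the only thing to verify is that both bounds refer to the same problem formulation (finite sets of $\QPLInc$-formulae, polynomial-time reductions), which is evident from the statements of the two lemmas.
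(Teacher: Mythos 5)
Your proposal is correct and matches the paper exactly: the theorem is obtained by combining the upper bound from Lemma \ref{lem:upper_qplinc} with the lower bound from Lemma \ref{lem:lower_qplinc}, which is precisely how the paper concludes it. No further argument is needed.
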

\begin{thmC}[\cite{HKMV16}]
\label{sdwerwerfsdf}
The entailment and the validity problems for $\MLInc$ are $\coNEXPTIME$-hard.
\end{thmC}

\section{Conclusion}\label{sect:conclusion}

%
%\begin{table*}[t]
%\begin{center}
%\scalebox{.8}{
%\begin{tabular}{llll}
%\toprule
%	 &  satisfiability& validity & entailment  \\\midrule
%	 $\PL$ & $\NP$ \cite{Cook71,Levin73} & $\coNP$ \cite{Cook71,Levin73} & $\coNP$ \cite{Cook71,Levin73} \\
%	  	$\ML$ &$\PSPACE$ \cite{Ladner77} & $\PSPACE$ \cite{Ladner77} & $\PSPACE$ \cite{Ladner77} \\
%	  $\MLDis$ & $\PSPACE $ \cite{sevenster09b} & $\PSPACE$ [Thm. \ref{thm:mldis}]  & $\PSPACE$ [Thm. \ref{thm:mldis}] \\
%	$\PLDep$  & $\NP$ \cite{LohmannV13} &$\NEXPTIME$ \cite{Virtema14} & $\coNEXPTIME^{\NP}$ [Thm. \ref{thm:entail}] \\
 %	$\QPLDep,\MLDep,\EMLDep$ & $\NEXPTIME$ \cite{Peterson2001,sevenster09b}, [Thm. \ref{thm:reduction}] & $\NEXPTIME$ [Thm. \ref{cor:val_mldep}]&  $\coNEXPTIME^{\NP}$ [Thm. \ref{thm:entail}]  \\
% \bottomrule
%\end{tabular}}
%\vspace{0ex}
%\caption{Summary of results. The stated complexity classes refer to completeness results.}\label{newresults}
%\end{center}
%\vspace*{-1ex}%
%\end{table*}

\begin{table*}[t]
\begin{center}
\scalebox{.7}{
\begin{tabular}{@{}llll@{}}

\toprule
	 &  satisfiability& validity & entailment  \\\midrule
	 $\PL$ & $\NP$ \cite{Cook71,Levin73} & $\coNP$ \cite{Cook71,Levin73} & $\coNP$ \cite{Cook71,Levin73} \\
	  	$\ML$ &$\PSPACE$ \cite{Ladner77} & $\PSPACE$ \cite{Ladner77} & $\PSPACE$ \cite{Ladner77} \\
	  $\MLDis$ & $\PSPACE $ \cite{sevenster09b} & $\PSPACE$ [Thm. \ref{thm:mldis}]  & $\PSPACE$ [Thm. \ref{thm:mldis}] \\
	$\PLDep$  & $\NP$ \cite{LohmannV13} &$\NEXPTIME$ \cite{Virtema14} & $\coNEXPTIME^{\NP}$ [Thm. \ref{thm:entail}] \\
 	$\QPLDep,\MLDep,\EMLDep$ & $\NEXPTIME$ \cite{Peterson2001,sevenster09b}, [Thm. \ref{thm:reduction}] & $\NEXPTIME$ [Thm. \ref{cor:val_mldep}]&  $\coNEXPTIME^{\NP}$ [Thm. \ref{thm:entail}]  \\
 	$\QPLInd$ &$\NEXPTIME$ \cite{Peterson2001,KontinenMSV14}, [Thm. \ref{thm:reduction}] &  $\coNEXPTIME^{\NP}$ [Thm. \ref{sdfsdf}]&  $\coNEXPTIME^{\NP}$ [Thm. \ref{sdfsdf}]\\
 	$\MLInd$ & $\NEXPTIME$ 
 	\cite{KontinenMSV14} &$\geq \coNEXPTIME^{\NP}$ [Cor. \ref{sdkf}]& $\geq \coNEXPTIME^{\NP}$ [Cor. \ref{sdkf}]\\
 	$\QPLInc$ &$\EXPTIME$ \cite{HellaKMV15}, [Thm. \ref{thm:reduction}]& $\EXPTIME$ [Thm. \ref{thm:valqplinc}] & $\coNEXPTIME$ [Thm. \ref{ssdfddf}]  \\
 	$\MLInc$ & $\EXPTIME$ \cite{HKMV16}&$\geq \coNEXPTIME$ \cite{HKMV16}&$\geq \coNEXPTIME$ \cite{HKMV16} \\\bottomrule

\end{tabular}
}
\vspace{2ex}
\caption{Summary of results. The stated complexity classes refer to completeness results, except that the prefix \enquote{$\geq$} refers to hardness results.}\label{newresults}
\end{center}
%\vspace*{-5ex}%
\end{table*}
We have examined the validity and entailment problem for various modal and propositional dependence logics (see Table \ref{newresults}). We showed that the entailment problem for (extended) modal and (quantified) propositional dependence logic is $\coNEXPTIME^{\NP}$-complete, and that the corresponding validity problems are  $\NEXPTIME$-complete. We also showed that modal logic extended with intuitionistic disjunction is $\PSPACE$-complete with respect to its satisfiability, validity, and entailment problems, therefore being not more complex than the standard modal logic. Furthermore, we examined extensions of propositional and modal logics with independence and inclusion atoms. 
Quantified propositional independence logic was proven to be $\coNEXPTIME^{\NP}$-complete both in terms of its validity and entailment problem. For quantified propositional inclusion logic the validity and entailment problems were shown to be $\EXPTIME$-complete and $\coNEXPTIME$-complete, respectively. Using standard reduction methods we established the same lower bounds for modal independence and inclusion logic, although for validity of modal inclusion logic a higher lower bound of $\coNEXPTIME$ is known to apply.
%The same lower bounds apply to the corresponding modal logics.  %We also gave sound and complete axiomatizations for the entailment and validity problems of extended modal and (quantified) propositional dependence logic. These axiomatizations are optimal in the sense that they reflect the  $\coNEXPTIME^{\NP}$ entailment and $\NEXPTIME$ validity algorithms. 
However, we leave determining the exact complexities of validity/entailment of $\MLInd$ and $\MLInc$ as an open problem.  It is plausible that solving these questions will  open up possibilities for novel axiomatic characterizations.

%%
%% Bibliography
%%

%% Either use bibtex (recommended), 

\bibliographystyle{plainurl}% the recommended bibstyle
\bibliography{biblio}

\begin{thebibliography}{10}

\bibitem{ChandraKS81}
Ashok~K. Chandra, Dexter Kozen, and Larry~J. Stockmeyer.
\newblock Alternation.
\newblock {\em J. {ACM}}, 28(1):114--133, 1981.

\bibitem{Cook71}
Stephen~A. Cook.
\newblock The complexity of theorem-proving procedures.
\newblock In {\em Proceedings of the Third Annual ACM Symposium on Theory of
  Computing}, STOC '71, pages 151--158, New York, NY, USA, 1971. ACM.

\bibitem{ebbing13}
Johannes Ebbing, Lauri Hella, Arne Meier, Julian{-}Steffen M{\"{u}}ller, Jonni
  Virtema, and Heribert Vollmer.
\newblock Extended modal dependence logic.
\newblock In {\em Logic, Language, Information, and Computation - 20th
  International Workshop, WoLLIC 2013, Darmstadt, Germany, August 20-23, 2013.
  Proceedings}, pages 126--137, 2013.
\newblock \href {http://dx.doi.org/10.1007/978-3-642-39992-3_13}
  {\path{doi:10.1007/978-3-642-39992-3_13}}.

\bibitem{ebbing12}
Johannes Ebbing and Peter Lohmann.
\newblock Complexity of model checking for modal dependence logic.
\newblock In M{\'a}ria Bielikov{\'a}, Gerhard Friedrich, Georg Gottlob, Stefan
  Katzenbeisser, and Gy{\"o}rgy Tur{\'a}n, editors, {\em SOFSEM 2012: Theory
  and Practice of Computer Science}, volume 7147 of {\em Lecture Notes in
  Computer Science}, pages 226--237. 2012.

\bibitem{galliani12}
Pietro Galliani.
\newblock Inclusion and exclusion dependencies in team semantics: On some
  logics of imperfect information.
\newblock {\em Annals of Pure and Applied Logic}, 163(1):68 -- 84, 2012.

\bibitem{galhankon13}
Pietro Galliani, Miika Hannula, and Juha Kontinen.
\newblock {Hierarchies in independence logic}.
\newblock In {\em Computer Science Logic 2013 (CSL 2013)}, volume~23, pages
  263--280, 2013.

\bibitem{gradel10}
Erich Gr\"adel and Jouko V\"a\"an\"anen.
\newblock Dependence and independence.
\newblock {\em Studia Logica}, 101(2):399--410, 2013.

\bibitem{HLKV16}
Miika Hannula, Juha Kontinen, Martin L{\"{u}}ck, and Jonni Virtema.
\newblock On quantified propositional logics and the exponential time
  hierarchy.
\newblock In {\em Proceedings of the Seventh International Symposium on Games,
  Automata, Logics and Formal Verification, GandALF 2016, Catania, Italy, 14-16
  September 2016.}, pages 198--212, 2016.

\bibitem{HKVV15}
Miika Hannula, Juha Kontinen, Jonni Virtema, and Heribert Vollmer.
\newblock Complexity of propositional independence and inclusion logic.
\newblock In {\em Mathematical Foundations of Computer Science 2015 - 40th
  International Symposium, {MFCS} 2015, Milan, Italy, August 24-28, 2015,
  Proceedings, Part {I}}, pages 269--280, 2015.

\bibitem{HannulaKVV18}
Miika Hannula, Juha Kontinen, Jonni Virtema, and Heribert Vollmer.
\newblock Complexity of propositional logics in team semantic.
\newblock {\em {ACM} Trans. Comput. Log.}, 19(1):2:1--2:14, 2018.
\newblock URL: \url{http://doi.acm.org/10.1145/3157054}, \href
  {http://dx.doi.org/10.1145/3157054} {\path{doi:10.1145/3157054}}.

\bibitem{HKMV16}
Lauri Hella, Antti Kuusisto, Arne Meier, and Jonni Virtema.
\newblock Model checking and validity in propositional and modal inclusion
  logics.
\newblock In {\em 42nd International Symposium on Mathematical Foundations of
  Computer Science, {MFCS} 2017, August 21-25, 2017 - Aalborg, Denmark}, pages
  32:1--32:14, 2017.
\newblock URL: \url{https://doi.org/10.4230/LIPIcs.MFCS.2017.32}, \href
  {http://dx.doi.org/10.4230/LIPIcs.MFCS.2017.32}
  {\path{doi:10.4230/LIPIcs.MFCS.2017.32}}.

\bibitem{HellaKMV15}
Lauri Hella, Antti Kuusisto, Arne Meier, and Heribert Vollmer.
\newblock Modal inclusion logic: Being lax is simpler than being strict.
\newblock In {\em Mathematical Foundations of Computer Science 2015 - 40th
  International Symposium, {MFCS} 2015, Milan, Italy, August 24-28, 2015,
  Proceedings, Part {I}}, pages 281--292, 2015.

\bibitem{HellaLSV14}
Lauri Hella, Kerkko Luosto, Katsuhiko Sano, and Jonni Virtema.
\newblock The expressive power of modal dependence logic.
\newblock In {\em Advances in Modal Logic 10, invited and contributed papers
  from the tenth conference on "Advances in Modal Logic," held in Groningen,
  The Netherlands, August 5-8, 2014}, pages 294--312, 2014.

\bibitem{HellaS15}
Lauri Hella and Johanna Stumpf.
\newblock The expressive power of modal logic with inclusion atoms.
\newblock In {\em Proceedings Sixth International Symposium on Games, Automata,
  Logics and Formal Verification, GandALF 2015, Genoa, Italy, 21-22nd September
  2015.}, pages 129--143, 2015.

\bibitem{henkin61}
Leon Henkin.
\newblock {S}ome {R}emarks on {I}nfinitely {L}ong {F}ormulas.
\newblock In {\em Infinitistic Methods. Proc. Symposium on Foundations of
  Mathematics}, pages 167--183. Pergamon Press, 1961.

\bibitem{hintikkasandu89}
Jaakko Hintikka and Gabriel Sandu.
\newblock {I}nformational independence as a semantic phenomenon.
\newblock In J.E Fenstad, I.T Frolov, and R.~Hilpinen, editors, {\em Logic,
  methodology and philosophy of science}, pages 571--589. Elsevier, 1989.

\bibitem{hodges97}
Wilfrid Hodges.
\newblock {C}ompositional {S}emantics for a {L}anguage of {I}mperfect
  {I}nformation.
\newblock {\em Journal of the Interest Group in Pure and Applied Logics}, 5
  (4):539--563, 1997.

\bibitem{KontinenMSV14}
Juha Kontinen, Julian{-}Steffen M{\"{u}}ller, Henning Schnoor, and Heribert
  Vollmer.
\newblock Modal independence logic.
\newblock In {\em Advances in Modal Logic 10, invited and contributed papers
  from the tenth conference on "Advances in Modal Logic," held in Groningen,
  The Netherlands, August 5-8, 2014}, pages 353--372, 2014.

\bibitem{KrebsMV15}
Andreas Krebs, Arne Meier, and Jonni Virtema.
\newblock A team based variant of {CTL}.
\newblock In {\em 22nd International Symposium on Temporal Representation and
  Reasoning, {TIME} 2015, Kassel, Germany, September 23-25, 2015}, pages
  140--149, 2015.

\bibitem{Ladner77}
Richard~E. Ladner.
\newblock The computational complexity of provability in systems of modal
  propositional logic.
\newblock {\em {SIAM} J. Comput.}, 6(3):467--480, 1977.

\bibitem{Levin73}
Leonid~A Levin.
\newblock Universal sequential search problems.
\newblock {\em Problemy Peredachi Informatsii}, 9(3):115--116, 1973.

\bibitem{LohmannV13}
Peter Lohmann and Heribert Vollmer.
\newblock Complexity results for modal dependence logic.
\newblock {\em Studia Logica}, 101(2):343--366, 2013.

\bibitem{Luck16}
Martin L{\"{u}}ck.
\newblock Complete problems of propositional logic for the exponential
  hierarchy.
\newblock {\em CoRR}, abs/1602.03050, 2016.
\newblock URL: \url{http://arxiv.org/abs/1602.03050}.

\bibitem{muller13}
Julian{-}Steffen M{\"{u}}ller and Heribert Vollmer.
\newblock Model checking for modal dependence logic: An approach through post's
  lattice.
\newblock In {\em Logic, Language, Information, and Computation - 20th
  International Workshop, WoLLIC 2013, Darmstadt, Germany, August 20-23, 2013.
  Proceedings}, pages 238--250, 2013.

\bibitem{Peterson2001}
Gary~L. Peterson, John~H. Reif, and Salman Azhar.
\newblock Lower bounds for multiplayer noncooperative games of incomplete
  information.
\newblock {\em Computers \& Mathematics with Applications}, 41(7-8):957--992,
  April 2001.

\bibitem{sano14}
Katsuhiko Sano and Jonni Virtema.
\newblock Axiomatizing propositional dependence logics.
\newblock In {\em 24th {EACSL} Annual Conference on Computer Science Logic,
  {CSL} 2015, September 7-10, 2015, Berlin, Germany}, pages 292--307, 2015.

\bibitem{sevenster09b}
Merlijn Sevenster.
\newblock Model-theoretic and computational properties of modal dependence
  logic.
\newblock {\em Journal of Logic and Computation}, 19(6):1157--1173, 2009.

\bibitem{vaananen07}
Jouko V\"a\"an\"anen.
\newblock {\em Dependence Logic}.
\newblock Cambridge University Press, 2007.

\bibitem{vaananen08b}
Jouko V\"a\"an\"anen.
\newblock {M}odal {D}ependence {L}ogic.
\newblock In Krzysztof~R. Apt and Robert van Rooij, editors, {\em New
  Perspectives on Games and Interaction}. Amsterdam University Press,
  Amsterdam, 2008.

\bibitem{Virtema14}
Jonni Virtema.
\newblock Complexity of validity for propositional dependence logics.
\newblock {\em Inf. Comput.}, 253:224--236, 2017.
\newblock URL: \url{https://doi.org/10.1016/j.ic.2016.07.008}, \href
  {http://dx.doi.org/10.1016/j.ic.2016.07.008}
  {\path{doi:10.1016/j.ic.2016.07.008}}.

\bibitem{yang14}
Fan Yang.
\newblock On extensions and variants of dependence logic.
\newblock PhD thesis, University of Helsinki, 2014.

\bibitem{yangarxiv16}
Fan Yang.
\newblock Modal dependence logics: axiomatizations and model-theoretic
  properties.
\newblock {\em Logic Journal of the IGPL}, 25(5):773--805, 2017.
\newblock URL: \url{+ http://dx.doi.org/10.1093/jigpal/jzx023}, \href
  {http://arxiv.org/abs//oup/backfile/content_public/journal/jigpal/25/5/10.1093_jigpal_jzx023/1/jzx023.pdf}
  {\path{arXiv:/oup/backfile/content_public/journal/jigpal/25/5/10.1093_jigpal_jzx023/1/jzx023.pdf}},
  \href {http://dx.doi.org/10.1093/jigpal/jzx023}
  {\path{doi:10.1093/jigpal/jzx023}}.

\bibitem{Yang2016557}
Fan Yang and Jouko V{\"a}{\"a}n{\"a}nen.
\newblock Propositional logics of dependence.
\newblock {\em Annals of Pure and Applied Logic}, 167(7):557 -- 589, 2016.

\end{thebibliography}

%\input{content/appendix}

%% .. or use the thebibliography environment explicitely

\end{document}